\definecolor{OliveGreen}{HTML}{3C8031}
\newenvironment{fminipage}%
  {\end{minipage}\end{Sbox}\fbox{\TheSbox}}
\newtheorem{theorem}{Theorem}[section]
\newtheorem{proposition}[theorem]{Proposition}
\newtheorem*{proposition*}{Proposition}
\newtheorem{lemma}[theorem]{Lemma}
\newtheorem{corollary}[theorem]{Corollary}
\newtheorem{claim}{Claim}[theorem]
\theoremstyle{definition}
\newtheorem{definition}{Definition}[section]
\newenvironment{claimproof}{\noindent$\rhd$\hspace{1em}}{\hfill$\lhd$}
\newenvironment{breakablealgorithm}
  {
   \begin{center}
     \refstepcounter{algorithm}
     \hrule height.8pt depth0pt \kern2pt
     \renewcommand{\caption}[2][\relax]{
       {\raggedright\textbf{\ALG@name~\thealgorithm} ##2\par}%
       \ifx\relax##1\relax 
         \addcontentsline{loa}{algorithm}{\protect\numberline{\thealgorithm}##2}%
       \else 
         \addcontentsline{loa}{algorithm}{\protect\numberline{\thealgorithm}##1}%
       \fi
       \kern2pt\hrule\kern2pt
     }
  }{
     \kern2pt\hrule\relax
   \end{center}
  }
\newlength{\widebarargwidth}
\newlength{\widebarargheight}
\newlength{\widebarargdepth}
\long\def\@makecaption#1#2{
       \vskip 0.8ex
       \setbox\@tempboxa\hbox{\small {\bf #1:} #2}
       \parindent 1.5em 
       \dimen0=\hsize
       \advance\dimen0 by -3em
       \ifdim \wd\@tempboxa >\dimen0
               \hbox to \hsize{
                       \parindent 0em
                       \hfil 
                       \parbox{\dimen0}{\def\baselinestretch{0.96}\small
                               {\bf #1.} #2
                               } 
                       \hfil}
       \else \hbox to \hsize{\hfil \box\@tempboxa \hfil}
       \fi
       }
\long\def\comment#1{}
\newcommand{\bbone}[1]{\mathbbm{1}\{#1\}}
\newcommand{\set}[1]{\{#1\}}
\newcommand{\defeq}{\coloneqq}
\newcommand{\N}{\mathbb{N}}
\newcommand{\E}{\mathbb{E}}
\newcommand{\aug}{\mathsf{Aug}}
\newcommand{\flg}{\mathsf{flg}}
\newcommand{\poly}{\mathsf{poly}}
\newcommand{\vend}{\mathsf{vEnd}}
\newcommand{\vstart}{\mathsf{vStart}}
\newcommand{\End}{\mathsf{End}}
\newcommand{\Pivot}{\mathsf{Pivot}}
\newcommand{\length}{\mathsf{length}}
\renewcommand{\epsilon}{\varepsilon}
\newcommand{\eps}{\epsilon}
\newcommand{\dom}{\mathsf{dom}}
\newcommand{\0}{\varnothing}
\newcommand{\bemph}[1]{{\normalfont#1}} 
\newcommand{\ep}[1]{\bemph{(}#1\bemph{)}} 
\newcommand{\emphdef}[1]{\textbf{\textit{{#1}}}}
\newcommand{\emphd}[1]{\emphdef{#1}}
\newcommand{\blank}{\mathsf{blank}}
\title{Fast and Simple $(1+\epsilon)\Delta$-Edge-Coloring of Dense Graphs}
\author{Abhishek Dhawan\thanks{Email: adhawan2@illinois.edu. Partially supported by NSF RTG grant DMS-1937241.}}
\affil{Department of Mathematics, University of Illinois Urbana-Champaign}
\date{}
\begin{document}

\maketitle

\begin{abstract}
    Let $\epsilon \in (0, 1)$ and $n, \Delta \in \mathbb N$ be such that $\Delta = \Omega\left(\max\left\{\frac{\log n}{\epsilon},\, \left(\frac{1}{\epsilon}\log \frac{1}{\epsilon}\right)^2\right\}\right)$. Given an $n$-vertex $m$-edge simple graph $G$ of maximum degree $\Delta$, we present a randomized $O\left(m\,\log^3 \Delta\,/\,\epsilon^2\right)$-time algorithm that computes a proper $(1+\epsilon)\Delta$-edge-coloring of $G$ with high probability. This improves upon the best known results for a wide range of the parameters $\epsilon$, $n$, and $\Delta$. Our approach combines a \textit{flagging} strategy from earlier work of the author with a \textit{shifting} procedure employed by Duan, He, and Zhang for dynamic edge-coloring. The resulting algorithm is simple to implement and may be of practical interest.
\end{abstract}

\sloppy
\vspace{10pt}
\tableofcontents
\newpage

\section{Introduction}\label{section:intro}

\subsection{Background and Results}\label{subsec:background}

All graphs considered in this paper are finite, undirected, and simple.
Let $G = (V, E)$ be a graph satisfying $|V| = n$, $|E| = m$, and $\Delta(G) = \Delta$, where $\Delta(G)$ is the maximum degree of a vertex in $G$.
For $q \in \N$, we let $[q] = \set{1, \ldots, q}$.
A proper $q$-edge-coloring of $G$ is a function $\phi\,:\, E \to [q]$ such that $\phi(e) \neq \phi(f)$ whenever $e$ and $f$ share an endpoint.
The chromatic index of $G$, denoted $\chi'(G)$, is the minimum value $q$ for which $G$ admits a proper $q$-edge-coloring.

It is easy to see that $\chi'(G) \geq \Delta$ as a vertex of maximum degree must have different colors assigned to all edges incident to it.
The following celebrated result of Vizing provides a nearly matching upper bound (see \cite[\S{}A.1]{EdgeColoringMonograph} for an English translation of the paper and \cites[\S17.2]{BondyMurty}[\S5.3]{Diestel} for modern presentations):

\begin{theorem}[Vizing's Theorem \cite{vizing1965chromatic}]\label{theo: vizing}
    If $G$ is a simple graph of maximum degree $\Delta$, then $\chi'(G) \leq \Delta + 1$.
\end{theorem}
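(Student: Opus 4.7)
The plan is to argue by induction on $m = |E(G)|$, with the base case $m = 0$ being trivial. For the inductive step, fix an arbitrary edge $e = xy \in E(G)$ and apply the inductive hypothesis to $G - e$ to obtain a proper edge coloring $\phi\colon E(G)\setminus\set{e}\to [\Delta+1]$. The task is to modify $\phi$ on some edges so that it extends to a proper coloring of all of $G$ using only $\Delta+1$ colors. For each vertex $w$, let $M(w)\subseteq[\Delta+1]$ denote the set of colors \emph{missing} at $w$ under $\phi$; since every vertex has degree at most $\Delta$, the set $M(w)$ is nonempty. If $M(x)\cap M(y)\neq\0$, any color in the intersection may be assigned to $e$ and we are done, so assume $M(x)\cap M(y)=\0$.

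The main construction is a \emph{Vizing fan} at $x$: a maximal sequence $y=y_0,y_1,\ldots,y_k$ of pairwise distinct neighbors of $x$ satisfying $\phi(xy_i)\in M(y_{i-1})$ for every $i\in[k]$. Since $G$ is simple the $y_i$ are genuinely distinct, and since there are only $\Delta+1$ colors in play the fan terminates after at most $\Delta$ steps. At termination we distinguish two cases. In the favorable case there exist an index $j\leq k$ and a color $\gamma\in M(x)\cap M(y_j)$; we then \emph{rotate} the fan by reassigning $\phi(xy_i)\mapsto\phi(xy_{i+1})$ for $i=0,1,\ldots,j-1$ and coloring the now-free edge $xy_j$ with $\gamma$, which produces a proper edge coloring of all of $G$.

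In the remaining case $M(x)\cap M(y_j)=\0$ for every $j\leq k$, and we invoke a Kempe chain argument. Pick $\alpha\in M(x)$ and $\beta\in M(y_k)$, and consider the component $P$ containing $y_k$ of the subgraph of $G$ whose edges are colored $\alpha$ or $\beta$. Since each vertex has at most one incident edge of each color, $P$ is either a path or an even cycle. Interchanging $\alpha$ and $\beta$ along $P$ keeps $\phi$ proper on all edges outside $\set{x,y_0,\ldots,y_k}$, and a case analysis based on where $P$ terminates (in particular whether it reaches $x$ or some $y_j$) shows that after the swap we land in the favorable case above, so the coloring can be completed by rotating an appropriate prefix of the fan.

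The main obstacle is this final case split. One must verify that the Kempe swap does not inadvertently destroy the alternating-color property of the fan and that the post-swap configuration genuinely admits a rotation freeing a color at $x$. This is precisely the step where simplicity of $G$ enters in an essential way: it prevents coincidences among the fan vertices $y_i$ that would otherwise undermine the argument and force a stronger bound than $\Delta+1$.
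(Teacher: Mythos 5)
The paper itself gives no proof of Theorem~\ref{theo: vizing}; it quotes the result from \cite{vizing1965chromatic} and points to textbook treatments, and the machinery it later imports (fans and $\alpha\beta$-paths as in Definitions~\ref{defn:fans}--\ref{defn:viz}, together with Lemma~\ref{lemma:fan} from \cite{dhawan2024simple}) is exactly the constructive argument you are sketching. So your route --- induction on the number of edges, a maximal Vizing fan at $x$, and a Kempe swap when no rotation is immediately available --- is the standard and correct approach.

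However, as written your proof has a genuine gap: the ``case analysis based on where $P$ terminates,'' which you explicitly defer, is the entire substance of Vizing's theorem, and your last paragraph only asserts that after the swap one lands in the favorable case. Three things must actually be supplied. First, in the unfavorable case you must use maximality of the fan to show that your chosen $\beta\in M(y_k)$ appears at $x$, say $\beta=\phi(xy_j)$ with $1\le j< k$ (note $j=k$ is impossible since $\beta\in M(y_k)$), and that the fan condition gives $\beta\in M(y_{j-1})$; this is what ties the Kempe component containing $y_k$ back to the fan, and without it the swap need not interact with $x$ at all. Second, you must check that interchanging $\alpha$ and $\beta$ on the component $P$ alters the missing sets only at the two ends of $P$, and that $x$, which misses $\alpha$, can occur only as an end of $P$ (entering along its unique $\beta$-edge $xy_j$), so the colors of the other fan edges at $x$ are untouched; this is the verification you flag but do not perform. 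Third --- the missing idea --- the reason a rotation always becomes available is that a maximal $\alpha\beta$-path has only two ends, so it cannot block both candidate rotations at once: if $P$ reaches $x$, then after the swap $\beta$ is missing at both $x$ and $y_{j-1}$ (one checks $y_{j-1}\notin P$), and the prefix $(x,y_0,\ldots,y_{j-1})$ rotates with $\beta$ on $xy_{j-1}$; if $P$ avoids $x$, then after the swap $\alpha$ is missing at $x$ and at $y_k$ and the full fan rotates, except in the sub-case that the far end of $P$ is $y_{j-1}$, where instead $\alpha$ is missing at $x$ and $y_{j-1}$ and the prefix rotates. Finally, a small correction of emphasis: simplicity of $G$ is not what rescues this last step --- it is used earlier, to ensure each color appears on at most one edge at $x$, so that the fan extension is well defined and the $y_i$ are distinct neighbors; the final case split goes through because of the two-ends argument just described.
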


In general, it is NP-hard to determine $\chi'(G)$ (even for $\Delta = 3$) \cite{Holyer}.
Therefore, it is natural to design algorithms for $q$-edge-coloring where $q \geq \Delta + 1$.
The original proof of Theorem~\ref{theo: vizing} is constructive and yields an $O(mn)$-time algorithm, simplifications of which appeared in \cite{Bollobas, RD, MG}.
Gabow, Nishizeki, Kariv, Leven, and Terada designed two recursive algorithms, which run in $O(m\sqrt{n\log n})$ and $O(\Delta\,m\log n)$ time, respectively \cite{GNKLT}.
There was no improvement for 34 years until Sinnamon described an algorithm with running time $O(m\sqrt{n})$ \cite{Sinnamon}, and more recently Bhattacharya, Carmon, Costa, Solomon, and Zhang designed an $\tilde O(mn^{1/3})$-time\footnote{Here, and in what follows, $\tilde O(x) = O(x\,\poly(\log x))$.} randomized algorithm \cite{bhattacharya2024faster}.


Unsurprisingly, one can obtain faster algorithms for restricted classes of graphs.
Notably, Cole, Ost, and Schirra designed an $O(m\log \Delta)$-time algorithm for $\Delta$-edge-coloring bipartite graphs \cite{cole2001edge} (K\"onig's theorem states that $\chi'(G) = \Delta$ for bipartite graphs);
Bernshteyn and the author considered graphs of bounded maximum degree, describing a linear-time randomized algorithm for $(\Delta+1)$-edge-coloring in this setting, i.e., $O(m)$ for $\Delta = O(1)$ \cite{fastEdgeColoring};
Bhattacharya, Carmon, Costa, Solomon, and Zhang designed an $\tilde O(m)$-time algorithm for graphs of bounded arboricity \cite{bhattacharya2023density}.

Another avenue toward obtaining faster algorithms is to use larger palettes, i.e., more colors.
This is the main focus of this paper, where we study randomized algorithms for $(1 + \eps)\Delta$-edge-coloring.
We highlight recent results in Table~\ref{table:history}.

In our work, we design a randomized algorithm for dense graphs, i.e., 
\[\Delta = \Omega\left(\max\left\{\frac{\log n}{\eps},\, \left(\frac{1}{\eps}\log\frac{1}{\eps} \right)^2\right\}\right).\]
For a wide range of the parameters $\eps$, $n$, and $\Delta$, our algorithm outperforms those in Table~\ref{table:history}.
We discuss this further in \S\ref{subsection: prior}.
Let us first state our result.

\begin{table}[htb!]
    \centering
    \begin{tabular}{| c | c | c | p{0.2\linewidth}|}
        \hline
        \thead{\textbf{Runtime}} & \thead{\textbf{Restrictions on $\Delta,\,n,\,\eps$}} & \thead{\textbf{References}} \\\hline\hline
         $O\left(m\,\log^{c} n\right) $ for some large constant $c$ & $\Delta = \Theta\left(\log^2 n / \eps^2\right)$ & {Karloff--Shmoys \cite{karloff1987efficient}} \\[2pt]\hline
         $O\left(m\,\log^6 n/\epsilon^2\right) $ & $\Delta = \Omega\left(\log n / \eps\right)$ & {Duan--He--Zhang \cite{duan2019dynamic}} \\[2pt]\hline
          $O\left(m\log^3n/\eps^3\right) $ & $\Delta = \Omega\left(\log n / \eps\right)$ & {AD \cite{dhawan2024simple}} \\[2pt]\hline
          $O\left(\max \{m/\epsilon^{18}, \ m \,\log\Delta\}\right) $ & $\eps \geq \max\left\{\frac{\log \log n}{\log n}, \frac{1}{\Delta}\right\}$ & {Elkin--Khuzman \cite{elkin2024deterministic}} \\[4pt]\hline
          $O\left(m \,\log(1/\epsilon)/\epsilon^2\right) $ &  $\Delta \geq \left(\log n / \epsilon\right)^{\poly\left(1/\epsilon\right)}$ & \makecell{Bhattacharya--Costa--Panski--\\Solomon \cite{bhattacharya2024nibbling}} \\[2pt]\hline
          $O\left(m \,\log(1/\epsilon)\right) $ in expectation &  $\Delta = \Omega\left(\log n / \eps\right)$ & {Assadi \cite{assadi2024faster}} \\[2pt]\hline
          $O\left(m\log (1/\eps)/\eps^4\right) $ & No restrictions & {Bernshteyn--AD \cite{bernshteyn2024linear}} \\[2pt]\hline
    \end{tabular}
    
    \vspace{7pt}
    \caption{A brief survey of randomized algorithms for $(1+\eps)\Delta$-edge-coloring. The stated runtime is attained with high probability, unless explicitly indicated otherwise.}\label{table:history}
\end{table}

\begin{theorem}\label{theo:main_theo}
    Let $\epsilon \in (0,1)$ and $n,\Delta \in \N$ be such that $\Delta = \Omega\left(\max\left\{\frac{\log n}{\eps},\, \left(\frac{1}{\eps}\log\frac{1}{\eps} \right)^2\right\}\right)$.
    Let $G$ be an $n$-vertex graph with $m$ edges having maximum degree $\Delta$.
    There is an $O\left(m\log^3\Delta/\epsilon^2\right)$-time randomized sequential algorithm that finds a proper $(1+\epsilon)\Delta$-edge-coloring of $G$ with high probability\footnote{Throughout this paper, ``with high probability'' means with probability at least $1 - 1/\poly(n)$.}.
\end{theorem}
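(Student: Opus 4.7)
The plan is to maintain a partial proper edge coloring $\phi\colon E \to [(1+\eps)\Delta]$ and iteratively color the remaining uncolored edges by combining two ingredients: an edge-coloring primitive based on short Vizing-style augmenting structures that are resolved with the \emph{shifting} subroutine of Duan--He--Zhang, and the \emph{flagging} mechanism from the author's earlier work that amortizes the total cost across multiple rounds. For any vertex $v$, write $M(v)$ for the palette of colors missing at $v$ under the current $\phi$; since $|M(v)| \geq \eps\Delta$, the density hypothesis $\Delta = \Omega(\log n / \eps)$ makes randomized sampling from these palettes concentrate well, and the hypothesis $\Delta = \Omega((\eps^{-1}\log\eps^{-1})^2)$ ensures that short fans plus alternating paths can be located and resolved without pathological collisions.

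The core primitive acts on an uncolored edge $uv$ as follows: first attempt to sample a random color $\alpha \in M(u) \cap M(v)$; if this succeeds, color $uv$ with $\alpha$. Otherwise, pick $\alpha \in M(u)$ and $\beta \in M(v)$ and traverse the $(\alpha,\beta)$-alternating path starting at $v$. The Duan--He--Zhang shifting procedure then swaps $\alpha$ and $\beta$ along this path, freeing $\alpha$ at $v$ so that $uv$ can receive $\alpha$. With $\eps\Delta$ missing colors at each endpoint, a random such chain has expected length $\order(1/\eps)$, and by storing color-indexed adjacency information one traverses it in time proportional to its length. A careful choice of $\alpha,\beta$ keeps each individual invocation below $\order(\log \Delta / \eps)$ steps with high probability, contributing the factor $\log \Delta / \eps^2$ per edge across rounds.

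To turn this primitive into a fast global algorithm, I would process uncolored edges in rounds, in each of which every uncolored edge independently attempts the primitive above. Conflicts arise when two concurrent attempts shift along overlapping chains or want the same color at a shared vertex; following the earlier flagging strategy, whenever such a conflict is detected the offending attempts abort and the corresponding edges are \emph{flagged} for the next round, while non-conflicting attempts commit. Using $|M(v)| \geq \eps\Delta$ and a union bound over the $\order(\log \Delta / \eps)$-length chains, I would show that each attempt survives with constant probability, so that a constant fraction of currently uncolored edges becomes colored each round. Hence $\order(\log \Delta)$ rounds suffice with high probability (this is where one extra $\log \Delta$ factor enters), and each round costs $\order(m \log^2 \Delta / \eps^2)$ thanks to the per-attempt bound above, yielding the desired total $\order(m \log^3 \Delta / \eps^2)$.

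The main obstacle will be the interaction between the flagging rule and the shifting procedure: since a shift rewrites every edge along a chain of length $\order(\log \Delta / \eps)$, the dependence between concurrent attempts is not local in the usual neighborhood-only sense, and a naive union bound loses too much. I expect to handle this by bounding the collision probability chain-by-chain via a two-phase revelation of the random choices (first the endpoint color pair, then the chain itself), exploiting the $\eps\Delta$ slack to argue that a given chain's interior avoids the interiors of other active chains with constant probability. Combining this with standard high-probability concentration, together with the logarithmic amortization produced by flagging, completes the runtime bound and establishes success with high probability.
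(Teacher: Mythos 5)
Your proposal has two genuine gaps. First, the coloring primitive itself: picking $\alpha\in M(u)$, $\beta\in M(v)$ and flipping the $\alpha\beta$-path from $v$ fails when the maximal path terminates at $u$ (after the flip, $\alpha$ is no longer missing at $u$), which is exactly why Vizing-type arguments need fans; the paper's subroutine builds a fan plus path with a sampled palette of size $\kappa=\Theta(\log\Delta/\eps)$. More seriously, your claim that ``a careful choice of $\alpha,\beta$ keeps each individual invocation below $O(\log\Delta/\eps)$ steps with high probability'' is unsupported: what one can actually prove is an expected path length of $O(1/\eps)$, hence a short path only with constant probability by Markov, and since the partial coloring is produced adaptively by the algorithm there is no per-invocation high-probability bound on the length of a single alternating path. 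This is precisely the obstacle the paper's machinery is designed to circumvent: it truncates the path at length $\ell=\Theta(\kappa^2)$, uncolors a uniformly random edge among the first $\ell$ edges and \emph{shifts} the uncolored edge there, retries with a fresh palette disjoint from all previous ones for up to $\Theta(\log\Delta)$ rounds, and flags the edge otherwise; the disjointness of the palettes is what makes the trajectory-counting argument (at most $\Delta^2\kappa^{2T}$ possible shift trajectories against probability $1/((m-i+1)\ell^{t})$) go through. Without a substitute for this, your per-round cost bound of $O(m\log^2\Delta/\eps^2)$ does not follow.

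Second, the global accounting and termination are off. Coloring a constant fraction of the currently uncolored edges per round yields $O(\log m)=O(\log n)$ rounds, not $O(\log\Delta)$, and a retry-based scheme in which flagged edges are simply deferred to the next round never cleanly terminates without a separate clean-up phase. The paper instead reserves $\eps\Delta/2$ colors for a second stage (a simple randomized greedy coloring), and the heart of its proof is showing that the subgraph induced by flagged edges has maximum degree at most $\eps\Delta/6$ with high probability, via the bound $\Pr[v\in f_i\mid\phi_i]\le \frac{1}{(m-i+1)\,\poly(\Delta)}$, summation over iterations to get expected flagged degree $\frac{\log n}{\poly(\Delta)}$, and a multiplicative Azuma inequality for supermartingales, using $\Delta=\Omega(\log n/\eps)$. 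Your proposal has no analogue of this degree bound and no reserved color budget for leftover edges; moreover, the concurrency-conflict analysis you defer to a ``two-phase revelation'' is stated as a hope rather than an argument, whereas the paper sidesteps the issue entirely by being sequential (one randomly chosen uncolored edge at a time), so no interaction between concurrent chains ever needs to be controlled.
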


Throughout the paper, we assume that the hidden constant in the $\Omega(\cdot)$ notation above is sufficiently large.
The remainder of this introduction is structured as follows: in \S\ref{subsection: prior}, we compare Theorem~\ref{theo:main_theo} to earlier works; in \S\ref{subsection: proof overview}, we provide an informal overview of our algorithm; and in \S\ref{subsection: sketch}, we provide a sketch of the probabilistic analysis.

\subsection{Relation to Prior Work}\label{subsection: prior}

In this section, we provide a comparison of our result to those stated in Table~\ref{table:history}.
Clearly, Theorem~\ref{theo:main_theo} outperforms the earlier result of the author \cite{dhawan2024simple} and that of Duan, He, and Zhang \cite{duan2019dynamic}.
Bhattacharya, Costa, Panski, and Solomon's algorithm \cite{bhattacharya2024nibbling} is faster than ours, however, it covers a smaller range of parameters.
In particular, they require $\eps = \omega\left(\frac{\log \log \Delta}{\log \Delta}\right)$ and so their result does not hold for sparse palettes.
In relation to the result of Bernshteyn and the author \cite{bernshteyn2024linear}, our algorithm is faster when $\eps = O\left(\sqrt{\frac{\log \log \Delta}{\log^3\Delta}}\right)$.
Similarly, Theorem~\ref{theo:main_theo} improves upon the result of Elkin and Khuzman \cite{elkin2024deterministic} when $\eps = O\left(\frac{1}{\log^{3/16}\Delta}\right)$.
We remark that for each of the aforementioned bounds on $\eps$, there is a wide range of the parameters $n$ and $\Delta$ that satisfy the constraint of Theorem~\ref{theo:main_theo}.
Finally, we note that the result of Assadi holds only in expectation, whereas ours holds with high probability.
Through standard probabilistic boosting, we obtain the following corollary to Assadi's result:

\begin{corollary}[{Corollary to \cite[Theorem 4]{assadi2024faster}}]\label{corl: assadi}
    Let $\epsilon \in (0,1)$ and $n,\Delta \in \N$ be such that $\Delta = \Omega\left(\log n / \eps\right)$.
    Let $G$ be an $n$-vertex graph with $m$ edges having maximum degree $\Delta$.
    There is an $O\left(m\,\log n\, \log(1/\epsilon)\right)$-time randomized sequential algorithm that finds a proper $(1+\epsilon)\Delta$-edge-coloring of $G$ with high probability.
\end{corollary}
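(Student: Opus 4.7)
The plan is the standard Markov--boosting reduction from expected-time to high-probability guarantees. Let $\mathcal{A}$ denote Assadi's algorithm from \cite{assadi2024faster}, which on every run outputs a proper $(1+\epsilon)\Delta$-edge-coloring and has expected running time $T \defeq C\,m\log(1/\epsilon)$ for some absolute constant $C$, provided $\Delta = \Omega(\log n / \epsilon)$. The key observation is that the output is always correct; only the \emph{runtime} is random. So we only need to boost the runtime bound.

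First, by Markov's inequality applied to the (nonnegative) runtime random variable, a single execution of $\mathcal{A}$ terminates within $2T$ steps with probability at least $1/2$. This motivates the following wrapper algorithm: repeatedly simulate $\mathcal{A}$ on $G$ with fresh independent randomness for at most $2T$ steps; if $\mathcal{A}$ terminates within its allotted budget, return its output, otherwise abort this trial and start a new one. We run at most $k \defeq \lceil c \log n \rceil$ trials, for a sufficiently large constant $c$.

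Next I would analyze this wrapper. Since the trials use independent randomness, the probability that \emph{every} trial fails to finish within $2T$ steps is at most $2^{-k} \leq n^{-c}$. Hence with probability at least $1 - n^{-c}$ the wrapper returns within $k$ trials, and the total work is bounded by $k \cdot 2T = O(m\log n \log(1/\epsilon))$, as required. Since $\mathcal{A}$ is always correct when it terminates, the output is a proper $(1+\epsilon)\Delta$-edge-coloring with high probability.

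The only minor obstacle is implementation detail: we need to be able to interrupt $\mathcal{A}$ cleanly after $2T$ elementary operations in order to restart. This is handled by running $\mathcal{A}$ on an instruction counter: after each constant-time step we check whether the counter exceeds $2T$, and if so we discard the current state and start afresh. This incurs only a constant factor overhead per step, which is absorbed into the $O(\cdot)$ bound. No structural insight into $\mathcal{A}$ is needed beyond the two facts used above, namely correctness of the output and the stated expected-time bound.
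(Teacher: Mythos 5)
Your proposal is correct and is exactly the ``standard probabilistic boosting'' the paper invokes without spelling out: truncate each run of Assadi's algorithm at twice its expected time, use Markov's inequality to get success probability at least $1/2$ per trial, and repeat $O(\log n)$ independent trials to reach failure probability $1/\poly(n)$, giving total time $O(m\,\log n\,\log(1/\epsilon))$. Your reliance on the output being proper whenever the algorithm terminates is consistent with the paper's framing that only Assadi's runtime guarantee is in expectation, so there is no gap.
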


It can be verified that our result improves upon Corollary~\ref{corl: assadi} whenever
$\eps = \Omega\left(\left(\log\log\Delta\,\log\log\log \Delta\right)^{-1/2}\right)$ and $n = \exp\left(O\left(\left(\log n / \log \log n\right)^{1/3}\right)\right)$.
Note that this covers nearly all values of $\eps$, $n$, and $\Delta$ considered in Corollary~\ref{corl: assadi}.

Similarly, our result outperforms that of \cite{karloff1987efficient} whenever the following holds:
\[\Omega(\log^2n) = \Delta = o\left(\frac{\log^{c+2}n}{\log^3\log n}\right),\]
where $c$ is the constant in the bound on the running time of their algorithm (see Table~\ref{table:history}).

We remark that our result improves on the results of Table~\ref{table:history} for wider ranges of parameters, however, for ease of presentation, we state weaker bounds in this section.

\subsection{Informal Overview}\label{subsection: proof overview}

In this section, we will provide an informal overview of our algorithm.
For simplicity, we will describe how to compute a $(1 + O(\eps))\Delta$-edge-coloring (the argument for $q = (1+\eps)\Delta$ follows by re-parameterizing $\eps \gets \Theta(\eps)$).
Throughout the rest of the paper, we fix $\epsilon \in (0,1)$ and $n, \Delta \in \N$ such that $\Delta = \Omega\left(\max\left\{\frac{\log n}{\eps},\, \left(\frac{1}{\eps}\log\frac{1}{\eps} \right)^2\right\}\right)$ (where the hidden constant is assumed to be sufficiently large) and set $q \defeq (1+\epsilon)\Delta$. 
Without loss of generality, we may assume that $q$ is an integer.
We also fix a graph $G$ with $n$ vertices, $m$ edges, and maximum degree $\Delta$, and we write $V \defeq V(G)$ and $E \defeq E(G)$.
We call a function $\phi \colon E\to [q]\cup \{\blank\}$ a \emphd{partial $q$-edge-coloring} (or simply a \emphd{partial coloring}) of $G$. Here $\phi(e) = \blank$ indicates that the edge $e$ is uncolored. As usual, $\dom(\phi)$ denotes the \emphd{domain} of $\phi$, i.e., the set of all colored edges.

A standard approach toward edge-coloring is to construct so-called \emphd{augmenting subgraphs}.
The idea, in a nutshell, is to extend a partial coloring to include an uncolored edge by modifying the colors of ``few'' colored edges.
The subgraph induced by these edges is referred to as an augmenting subgraph.
Formally, we define augmenting subgraphs as follows:

\begin{definition}[{Augmenting subgraphs; \cite[Definition~1.4]{bernshteyn2024linear}}]\label{defn:aug}
    Let $\phi \colon E \to [q] \cup \set{\blank}$ be a proper partial $q$-edge-coloring with domain $\dom(\phi) \subset E$. A subgraph $H \subseteq G$ is \emphd{$e$-augmenting} for an uncolored edge $e \in E \setminus \dom(\phi)$ if $e \in E(H)$ and there is a proper partial coloring $\phi'$ with $\dom(\phi') = \dom(\phi) \cup \set{e}$ that agrees with $\phi$ on the edges that are not in $E(H)$; in other words, by only modifying the colors of the edges of $H$, it is possible to add $e$ to the set of colored edges. We refer to such a modification operation as \emphd{augmenting} $\phi$ using $H$.
\end{definition}

This yields an algorithmic framework for edge-coloring.
Namely, proceed in an iterative fashion and at each iteration, first construct an augmenting subgraph $H$ with respect to the current partial coloring $\phi$, and then augment $\phi$ using $H$.
This is the essence of the idea described earlier.
See Algorithm~\ref{temp:seq} for an outline of this framework.

{
\floatname{algorithm}{Algorithm Template}
\begin{algorithm}[htb!]\small
    \caption{A $q$-edge-coloring algorithm}\label{temp:seq}
    \begin{flushleft}
        \textbf{Input}: A graph $G = (V,E)$ of maximum degree $\Delta$. \\
        \textbf{Output}: A proper $q$-edge-coloring of $G$.
    \end{flushleft}
    \begin{algorithmic}[1]
        \State $\phi \gets$ the empty coloring
        \While{there are uncolored edges}
            \State\label{step:choose_S} Pick an uncolored edge $e$.
            \State\label{step:construct_H} Find an $e$-augmenting subgraph $H$.
            \State Augment $\phi$ using $H$ (thus adding $e$ to the set of colored edges).
        \EndWhile
        \State \Return $\phi$
    \end{algorithmic}
\end{algorithm}
}

Nearly all of the algorithms in Table~\ref{table:history} employ this template (with some modifications).
In Vizing's original proof, he describes how to construct an $e$-augmenting subgraph (although he did not use this terminology) for any uncolored edge $e$ consisting of a \emphd{fan}---i.e., a set of edges incident to a common vertex---and an \emphd{alternating path}---i.e., a path whose edge colors form the sequence $\alpha$, $\beta$, $\alpha$, $\beta$, \ldots{} for some $\alpha$, $\beta \in [q]$; see Fig.~\ref{fig:vizing} for an illustration.
Such an augmenting subgraph, which we call a \emphd{Vizing chain} and denote $(F, P)$ for the fan $F$ and path $P$, can be constructed and augmented in time proportional to the lengths of $F$ and $P$, i.e., the number of edges contained in the Vizing chain.
As the length of a fan is trivially at most $\Delta$ and that of an alternating path is at most $n$, this yields the $O(m(\Delta + n)) = O(mn)$ runtime of the algorithms in \cite{Bollobas, RD, MG}.
There are now two challenges to overcome: long fans and long paths.

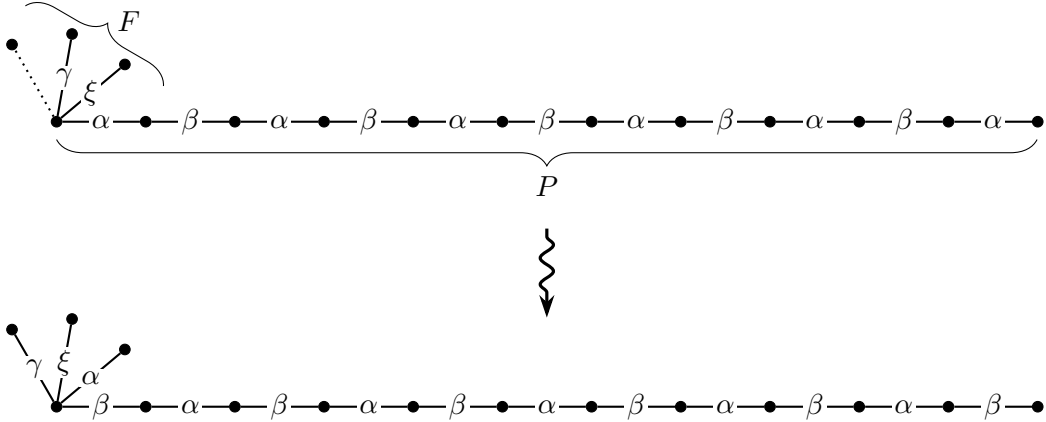
\begin{figure}[htb!]
    \centering
        \begin{tikzpicture}[scale=1.185]
            \node[circle,fill=black,draw,inner sep=0pt,minimum size=4pt] (a) at (0,0) {};
        	\path (a) ++(0:1) node[circle,fill=black,draw,inner sep=0pt,minimum size=4pt] (b) {};
        	\path (b) ++(0:1) node[circle,fill=black,draw,inner sep=0pt,minimum size=4pt] (c) {};
                \path (c) ++(0:1) node[circle,fill=black,draw,inner sep=0pt,minimum size=4pt] (d) {};
                \path (d) ++(0:1) node[circle,fill=black,draw,inner sep=0pt,minimum size=4pt] (e) {};
                \path (e) ++(0:1) node[circle,fill=black,draw,inner sep=0pt,minimum size=4pt] (l) {};
                \path (l) ++(0:1) node[circle,fill=black,draw,inner sep=0pt,minimum size=4pt] (m) {};
                \path (m) ++(0:1) node[circle,fill=black,draw,inner sep=0pt,minimum size=4pt] (n) {};
                \path (n) ++(0:1) node[circle,fill=black,draw,inner sep=0pt,minimum size=4pt] (o) {};
                \path (o) ++(0:1) node[circle,fill=black,draw,inner sep=0pt,minimum size=4pt] (f) {};
                \path (f) ++(0:1) node[circle,fill=black,draw,inner sep=0pt,minimum size=4pt] (g) {};
                \path (g) ++(0:1) node[circle,fill=black,draw,inner sep=0pt,minimum size=4pt] (h) {};

                \path (a) ++(40:1) node[circle,fill=black,draw,inner sep=0pt,minimum size=4pt] (i) {};
                \path (a) ++(80:1) node[circle,fill=black,draw,inner sep=0pt,minimum size=4pt] (j) {};
                \path (a) ++(120:1) node[circle,fill=black,draw,inner sep=0pt,minimum size=4pt] (k) {};

                \draw[thick] (a) to node[midway,inner sep=1pt,outer sep=1pt,minimum size=4pt,fill=white] {$\alpha$} (i) (a) to node[midway,inner sep=1pt,outer sep=1pt,minimum size=4pt,fill=white] {$\xi$} (j) (a) to node[midway,inner sep=1pt,outer sep=1pt,minimum size=4pt,fill=white] {$\gamma$} (k);
         
        	\draw[thick] (a) to node[midway,inner sep=1pt,outer sep=1pt,minimum size=4pt,fill=white] {$\beta$} (b) to node[midway,inner sep=1pt,outer sep=1pt,minimum size=4pt,fill=white] {$\alpha$} (c) to node[midway,inner sep=1pt,outer sep=1pt,minimum size=4pt,fill=white] {$\beta$} (d) to node[midway,inner sep=1pt,outer sep=1pt,minimum size=4pt,fill=white] {$\alpha$} (e) to node[midway,inner sep=1pt,outer sep=1pt,minimum size=4pt,fill=white] {$\beta$} (l) to node[midway,inner sep=1pt,outer sep=1pt,minimum size=4pt,fill=white] {$\alpha$} (m) to node[midway,inner sep=1pt,outer sep=1pt,minimum size=4pt,fill=white] {$\beta$} (n) to node[midway,inner sep=1pt,outer sep=1pt,minimum size=4pt,fill=white] {$\alpha$} (o) to node[midway,inner sep=1pt,outer sep=1pt,minimum size=4pt,fill=white] {$\beta$} (f) to node[midway,inner sep=1pt,outer sep=1pt,minimum size=4pt,fill=white] {$\alpha$} (g) to node[midway,inner sep=1pt,outer sep=1pt,minimum size=4pt,fill=white] {$\beta$} (h);

        \begin{scope}[yshift=3.2cm]
            \node[circle,fill=black,draw,inner sep=0pt,minimum size=4pt] (a) at (0,0) {};
        	\path (a) ++(0:1) node[circle,fill=black,draw,inner sep=0pt,minimum size=4pt] (b) {};
        	\path (b) ++(0:1) node[circle,fill=black,draw,inner sep=0pt,minimum size=4pt] (c) {};
                \path (c) ++(0:1) node[circle,fill=black,draw,inner sep=0pt,minimum size=4pt] (d) {};
                \path (d) ++(0:1) node[circle,fill=black,draw,inner sep=0pt,minimum size=4pt] (e) {};
                \path (e) ++(0:1) node[circle,fill=black,draw,inner sep=0pt,minimum size=4pt] (l) {};
                \path (l) ++(0:1) node[circle,fill=black,draw,inner sep=0pt,minimum size=4pt] (m) {};
                \path (m) ++(0:1) node[circle,fill=black,draw,inner sep=0pt,minimum size=4pt] (n) {};
                \path (n) ++(0:1) node[circle,fill=black,draw,inner sep=0pt,minimum size=4pt] (o) {};
                \path (o) ++(0:1) node[circle,fill=black,draw,inner sep=0pt,minimum size=4pt] (f) {};
                \path (f) ++(0:1) node[circle,fill=black,draw,inner sep=0pt,minimum size=4pt] (g) {};
                \path (g) ++(0:1) node[circle,fill=black,draw,inner sep=0pt,minimum size=4pt] (h) {};

                \path (a) ++(40:1) node[circle,fill=black,draw,inner sep=0pt,minimum size=4pt] (i) {};
                \path (a) ++(80:1) node[circle,fill=black,draw,inner sep=0pt,minimum size=4pt] (j) {};
                \path (a) ++(120:1) node[circle,fill=black,draw,inner sep=0pt,minimum size=4pt] (k) {};

                \draw[thick] (a) to node[midway,inner sep=1pt,outer sep=1pt,minimum size=4pt,fill=white] {$\xi$} (i) (a) to node[midway,inner sep=1pt,outer sep=1pt,minimum size=4pt,fill=white] {$\gamma$} (j);

                \draw[thick, dotted] (a) -- (k);
         
        	\draw[thick] (a) to node[midway,inner sep=1pt,outer sep=1pt,minimum size=4pt,fill=white] {$\alpha$} (b) to node[midway,inner sep=1pt,outer sep=1pt,minimum size=4pt,fill=white] {$\beta$} (c) to node[midway,inner sep=1pt,outer sep=1pt,minimum size=4pt,fill=white] {$\alpha$} (d) to node[midway,inner sep=1pt,outer sep=1pt,minimum size=4pt,fill=white] {$\beta$} (e) to node[midway,inner sep=1pt,outer sep=1pt,minimum size=4pt,fill=white] {$\alpha$} (l) to node[midway,inner sep=1pt,outer sep=1pt,minimum size=4pt,fill=white] {$\beta$} (m) to node[midway,inner sep=1pt,outer sep=1pt,minimum size=4pt,fill=white] {$\alpha$} (n) to node[midway,inner sep=1pt,outer sep=1pt,minimum size=4pt,fill=white] {$\beta$} (o) to node[midway,inner sep=1pt,outer sep=1pt,minimum size=4pt,fill=white] {$\alpha$} (f) to node[midway,inner sep=1pt,outer sep=1pt,minimum size=4pt,fill=white] {$\beta$} (g) to node[midway,inner sep=1pt,outer sep=1pt,minimum size=4pt,fill=white] {$\alpha$} (h);

            \draw[decoration={brace,amplitude=10pt,mirror}, decorate] (0, -0.2) -- node [midway,below,xshift=0pt,yshift=-10pt] {$P$} (11,-0.2);
            \draw[decoration={brace,amplitude=10pt},decorate] (-0.35,1.3) -- node [midway,above,yshift=2pt,xshift=13pt] {$F$} (1.2, 0.4);
            
        \end{scope}

        \begin{scope}[yshift=2.5cm]
            \draw[-{Stealth[length=3mm,width=2mm]},very thick,decoration = {snake,pre length=3pt,post length=7pt,},decorate] (5.5,-0.5) -- (5.5,-1.5);
        \end{scope}
        	
        \end{tikzpicture}
    \caption{The process of augmenting a Vizing chain $(F, P)$.}
    \label{fig:vizing}
\end{figure}

To deal with long fans, Duan, He, and Zhang had the following idea in \cite{duan2019dynamic}: sample a palette $C \subseteq [q]$ of size $\kappa$ and construct a Vizing chain $(F, P)$ using only the edges colored with colors from $C$ (in this case $\length(F) \leq \kappa + 1$).
For $\kappa = \Theta(\log n / \eps)$, they show that with high probability this procedure succeeds in constructing a Vizing chain $(F, P)$, and 
with probability $\Omega(1)$ the path $P$ is ``short''.
For an appropriate parameter $\ell$, they describe the following algorithm:
\begin{itemize}
    \item sample a palette $C \subseteq [q]$ satisfying $|C| = \kappa$, 
    \item construct an $e$-augmenting Vizing chain $(F, P)$ consisting of edges colored with colors from $C$,
    \item if $\length(P) > \ell$, try again.
\end{itemize}
For $\ell = \poly(\log n, 1/\eps)$, they are able to show that the above procedure succeeds within $\poly(\log n, 1/\eps)$ attempts with high probability.
In particular, they construct and augment an $e$-augmenting Vizing chain in $\poly(\log n, 1/\eps)$ time with high probability.

In recent work, the author improved upon the result of Duan, He, and Zhang by describing a two-stage coloring procedure \cite{dhawan2024simple}.
In Stage 1, color ``most'' of the edges, and in Stage 2, complete the coloring by employing a folklore $(2+\eps)\Delta$-edge-coloring algorithm (see Algorithm~\ref{alg:greedy}).
(We remark that Bhattacharya, Costa, Panski, and Solomon \cite{bhattacharya2024nibbling} and Assadi \cite{assadi2024faster} also employ a two-stage coloring procedure.)
Stage 1 of the author's algorithm in \cite{dhawan2024simple} follows similarly to Duan, He, and Zhang's algorithm, however, rather than repeating the procedure if the path $P$ has length $> \ell$, we pick one of the first $\ell$ edges on $P$ uniformly at random and \emphd{flag} it (flagged edges are colored during Stage 2).
We then augment the \emphd{initial segment} of the Vizing chain ending just before this edge.
(A similar idea was employed by Su and Vu in their distributed algorithm for $(\Delta + 2)$-edge-coloring \cite{su2019towards}.)
We illustrate this procedure in Fig.~\ref{fig:main_change} below.
A bulk of the proof in \cite{dhawan2024simple} involves showing that the flagged edges induce a graph of maximum degree $O(\eps\Delta)$ with high probability.
One can then put together the colorings from both stages to form a $(1+O(\eps))\Delta$-edge-coloring of $G$.

\begin{figure}[htb!]
    \centering
        \begin{tikzpicture}[scale=1.185]
            \node[circle,fill=black,draw,inner sep=0pt,minimum size=4pt] (a) at (0,0) {};
        	\path (a) ++(0:1) node[circle,fill=black,draw,inner sep=0pt,minimum size=4pt] (b) {};
        	\path (b) ++(0:1) node[circle,fill=black,draw,inner sep=0pt,minimum size=4pt] (c) {};
                \path (c) ++(0:1) node[circle,fill=black,draw,inner sep=0pt,minimum size=4pt] (d) {};
                \path (d) ++(0:1) node[circle,fill=black,draw,inner sep=0pt,minimum size=4pt] (e) {};
                \path (e) ++(0:1) node[circle,fill=black,draw,inner sep=0pt,minimum size=4pt] (l) {};
                \path (l) ++(0:1) node[circle,fill=black,draw,inner sep=0pt,minimum size=4pt] (m) {};
                \path (m) ++(0:1) node[circle,fill=black,draw,inner sep=0pt,minimum size=4pt] (n) {};
                \path (n) ++(0:1) node[circle,fill=black,draw,inner sep=0pt,minimum size=4pt] (o) {};
                \path (o) ++(0:1) node[circle,fill=black,draw,inner sep=0pt,minimum size=4pt] (f) {};
                \path (f) ++(0:1) node[circle,fill=black,draw,inner sep=0pt,minimum size=4pt] (g) {};
                \path (g) ++(0:1) node[circle,fill=black,draw,inner sep=0pt,minimum size=4pt] (h) {};

                \path (a) ++(40:1) node[circle,fill=black,draw,inner sep=0pt,minimum size=4pt] (i) {};
                \path (a) ++(80:1) node[circle,fill=black,draw,inner sep=0pt,minimum size=4pt] (j) {};
                \path (a) ++(120:1) node[circle,fill=black,draw,inner sep=0pt,minimum size=4pt] (k) {};

                \draw[thick] (a) to node[midway,inner sep=1pt,outer sep=1pt,minimum size=4pt,fill=white] {$\alpha$} (i) (a) to node[midway,inner sep=1pt,outer sep=1pt,minimum size=4pt,fill=white] {$\xi$} (j) (a) to node[midway,inner sep=1pt,outer sep=1pt,minimum size=4pt,fill=white] {$\gamma$} (k);
         
        	\draw[thick] (a) to node[midway,inner sep=1pt,outer sep=1pt,minimum size=4pt,fill=white] {$\beta$} (b) to node[midway,inner sep=1pt,outer sep=1pt,minimum size=4pt,fill=white] {$\alpha$} (c) to node[midway,inner sep=1pt,outer sep=1pt,minimum size=4pt,fill=white] {$\beta$} (d) to node[midway,inner sep=1pt,outer sep=1pt,minimum size=4pt,fill=white] {$\alpha$} (e) to node[midway,inner sep=1pt,outer sep=1pt,minimum size=4pt,fill=white] {$\beta$} (l) to node[midway,inner sep=1pt,outer sep=1pt,minimum size=4pt,fill=white] {$\alpha$} (m)  (n) to node[midway,inner sep=1pt,outer sep=1pt,minimum size=4pt,fill=white] {$\beta$} (o) to node[midway,inner sep=1pt,outer sep=1pt,minimum size=4pt,fill=white] {$\alpha$} (f) to node[midway,inner sep=1pt,outer sep=1pt,minimum size=4pt,fill=white] {$\beta$} (g) to node[midway,inner sep=1pt,outer sep=1pt,minimum size=4pt,fill=white] {$\alpha$} (h);

                \draw[thick, decorate, decoration=zigzag] (m) -- (n);
                \node at (6.5, 0.3) {flagged};

            \draw[decoration={brace,amplitude=10pt,mirror}, decorate] (0, -0.2) -- node [midway,below,xshift=0pt,yshift=-10pt] {$\ell'< \ell$} (6,-0.2);

        \begin{scope}[yshift=3.2cm]
            \node[circle,fill=black,draw,inner sep=0pt,minimum size=4pt] (a) at (0,0) {};
        	\path (a) ++(0:1) node[circle,fill=black,draw,inner sep=0pt,minimum size=4pt] (b) {};
        	\path (b) ++(0:1) node[circle,fill=black,draw,inner sep=0pt,minimum size=4pt] (c) {};
                \path (c) ++(0:1) node[circle,fill=black,draw,inner sep=0pt,minimum size=4pt] (d) {};
                \path (d) ++(0:1) node[circle,fill=black,draw,inner sep=0pt,minimum size=4pt] (e) {};
                \path (e) ++(0:1) node[circle,fill=black,draw,inner sep=0pt,minimum size=4pt] (l) {};
                \path (l) ++(0:1) node[circle,fill=black,draw,inner sep=0pt,minimum size=4pt] (m) {};
                \path (m) ++(0:1) node[circle,fill=black,draw,inner sep=0pt,minimum size=4pt] (n) {};
                \path (n) ++(0:1) node[circle,fill=black,draw,inner sep=0pt,minimum size=4pt] (o) {};
                \path (o) ++(0:1) node[circle,fill=black,draw,inner sep=0pt,minimum size=4pt] (f) {};
                \path (f) ++(0:1) node[circle,fill=black,draw,inner sep=0pt,minimum size=4pt] (g) {};
                \path (g) ++(0:1) node[circle,fill=black,draw,inner sep=0pt,minimum size=4pt] (h) {};

                \path (a) ++(40:1) node[circle,fill=black,draw,inner sep=0pt,minimum size=4pt] (i) {};
                \path (a) ++(80:1) node[circle,fill=black,draw,inner sep=0pt,minimum size=4pt] (j) {};
                \path (a) ++(120:1) node[circle,fill=black,draw,inner sep=0pt,minimum size=4pt] (k) {};

                \draw[thick] (a) to node[midway,inner sep=1pt,outer sep=1pt,minimum size=4pt,fill=white] {$\xi$} (i) (a) to node[midway,inner sep=1pt,outer sep=1pt,minimum size=4pt,fill=white] {$\gamma$} (j);

                \draw[thick, dotted] (a) -- (k);
         
        	\draw[thick] (a) to node[midway,inner sep=1pt,outer sep=1pt,minimum size=4pt,fill=white] {$\alpha$} (b) to node[midway,inner sep=1pt,outer sep=1pt,minimum size=4pt,fill=white] {$\beta$} (c) to node[midway,inner sep=1pt,outer sep=1pt,minimum size=4pt,fill=white] {$\alpha$} (d) to node[midway,inner sep=1pt,outer sep=1pt,minimum size=4pt,fill=white] {$\beta$} (e) to node[midway,inner sep=1pt,outer sep=1pt,minimum size=4pt,fill=white] {$\alpha$} (l) to node[midway,inner sep=1pt,outer sep=1pt,minimum size=4pt,fill=white] {$\beta$} (m) to node[midway,inner sep=1pt,outer sep=1pt,minimum size=4pt,fill=white] {$\alpha$} (n) to node[midway,inner sep=1pt,outer sep=1pt,minimum size=4pt,fill=white] {$\beta$} (o) to node[midway,inner sep=1pt,outer sep=1pt,minimum size=4pt,fill=white] {$\alpha$} (f) to node[midway,inner sep=1pt,outer sep=1pt,minimum size=4pt,fill=white] {$\beta$} (g) to node[midway,inner sep=1pt,outer sep=1pt,minimum size=4pt,fill=white] {$\alpha$} (h);

            \draw[decoration={brace,amplitude=10pt,mirror}, decorate] (0, -0.2) -- node [midway,below,xshift=0pt,yshift=-10pt] {$> \ell$} (11,-0.2);
            
        \end{scope}

        \begin{scope}[yshift=2.5cm]
            \draw[-{Stealth[length=3mm,width=2mm]},very thick,decoration = {snake,pre length=3pt,post length=7pt,},decorate] (5.5,-0.5) -- (5.5,-1.5);
        \end{scope}
        	
        \end{tikzpicture}
    \caption{The flagging procedure of \cite{dhawan2024simple}.}
    \label{fig:main_change}
\end{figure}
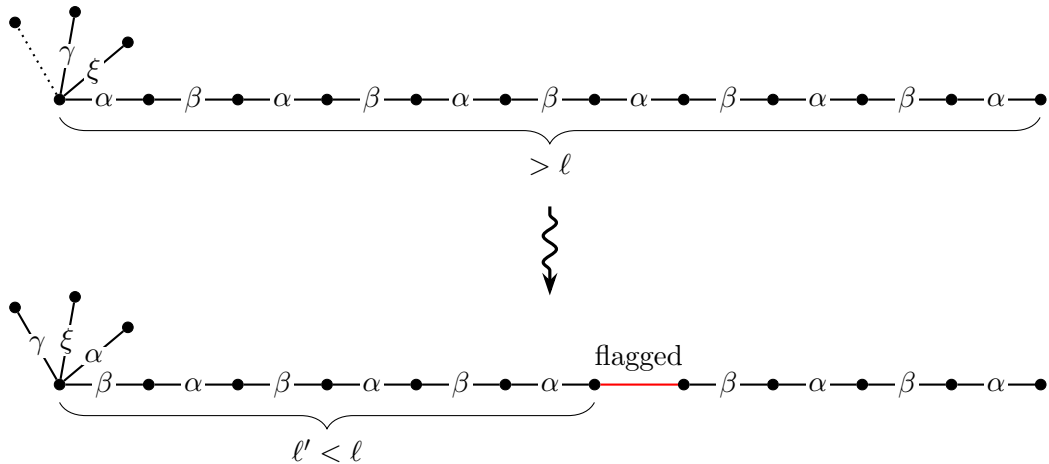

Our algorithm builds upon this flagging strategy.
Rather than flagging the sampled edge $f$, we \emphd{shift} the uncolored edge $e$ to $f$.
We do so by uncoloring $f$ and augmenting the corresponding initial segment of the Vizing chain; see Fig.~\ref{fig:shift} for an illustration.
We then attempt the same algorithm on this shifted uncolored edge with a new \emphd{disjoint palette}.
We continue in this fashion for at most $\Theta(\log \Delta)$ shifts.
The algorithm terminates if one of the following holds:
\begin{itemize}
    \item we fail to construct a Vizing chain from the sampled colors, or
    \item we run out of shifts.
\end{itemize}
In either case, we flag the uncolored edge and continue.
An algorithmic sketch of the first stage is provided in Algorithm~\ref{temp:stage1}.

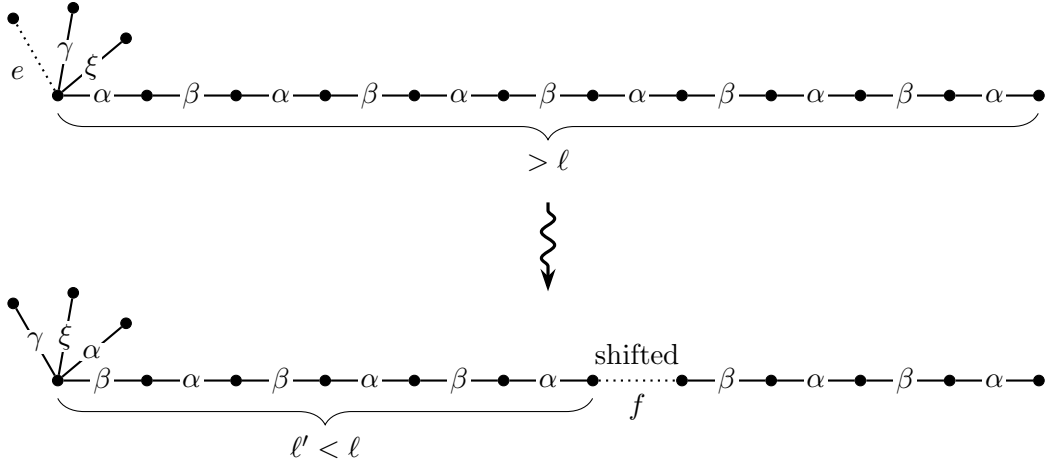
\begin{figure}[htb!]
    \centering
        \begin{tikzpicture}[scale=1.185]
            \node[circle,fill=black,draw,inner sep=0pt,minimum size=4pt] (a) at (0,0) {};
        	\path (a) ++(0:1) node[circle,fill=black,draw,inner sep=0pt,minimum size=4pt] (b) {};
        	\path (b) ++(0:1) node[circle,fill=black,draw,inner sep=0pt,minimum size=4pt] (c) {};
                \path (c) ++(0:1) node[circle,fill=black,draw,inner sep=0pt,minimum size=4pt] (d) {};
                \path (d) ++(0:1) node[circle,fill=black,draw,inner sep=0pt,minimum size=4pt] (e) {};
                \path (e) ++(0:1) node[circle,fill=black,draw,inner sep=0pt,minimum size=4pt] (l) {};
                \path (l) ++(0:1) node[circle,fill=black,draw,inner sep=0pt,minimum size=4pt] (m) {};
                \path (m) ++(0:1) node[circle,fill=black,draw,inner sep=0pt,minimum size=4pt] (n) {};
                \path (n) ++(0:1) node[circle,fill=black,draw,inner sep=0pt,minimum size=4pt] (o) {};
                \path (o) ++(0:1) node[circle,fill=black,draw,inner sep=0pt,minimum size=4pt] (f) {};
                \path (f) ++(0:1) node[circle,fill=black,draw,inner sep=0pt,minimum size=4pt] (g) {};
                \path (g) ++(0:1) node[circle,fill=black,draw,inner sep=0pt,minimum size=4pt] (h) {};

                \path (a) ++(40:1) node[circle,fill=black,draw,inner sep=0pt,minimum size=4pt] (i) {};
                \path (a) ++(80:1) node[circle,fill=black,draw,inner sep=0pt,minimum size=4pt] (j) {};
                \path (a) ++(120:1) node[circle,fill=black,draw,inner sep=0pt,minimum size=4pt] (k) {};

                \draw[thick] (a) to node[midway,inner sep=1pt,outer sep=1pt,minimum size=4pt,fill=white] {$\alpha$} (i) (a) to node[midway,inner sep=1pt,outer sep=1pt,minimum size=4pt,fill=white] {$\xi$} (j) (a) to node[midway,inner sep=1pt,outer sep=1pt,minimum size=4pt,fill=white] {$\gamma$} (k);
         
        	\draw[thick] (a) to node[midway,inner sep=1pt,outer sep=1pt,minimum size=4pt,fill=white] {$\beta$} (b) to node[midway,inner sep=1pt,outer sep=1pt,minimum size=4pt,fill=white] {$\alpha$} (c) to node[midway,inner sep=1pt,outer sep=1pt,minimum size=4pt,fill=white] {$\beta$} (d) to node[midway,inner sep=1pt,outer sep=1pt,minimum size=4pt,fill=white] {$\alpha$} (e) to node[midway,inner sep=1pt,outer sep=1pt,minimum size=4pt,fill=white] {$\beta$} (l) to node[midway,inner sep=1pt,outer sep=1pt,minimum size=4pt,fill=white] {$\alpha$} (m)  (n) to node[midway,inner sep=1pt,outer sep=1pt,minimum size=4pt,fill=white] {$\beta$} (o) to node[midway,inner sep=1pt,outer sep=1pt,minimum size=4pt,fill=white] {$\alpha$} (f) to node[midway,inner sep=1pt,outer sep=1pt,minimum size=4pt,fill=white] {$\beta$} (g) to node[midway,inner sep=1pt,outer sep=1pt,minimum size=4pt,fill=white] {$\alpha$} (h);

                \draw[thick, dotted] (m) -- (n);
                \node at (6.5, 0.3) {shifted};
                \node at (6.5, -0.3) {$f$};

            \draw[decoration={brace,amplitude=10pt,mirror}, decorate] (0, -0.2) -- node [midway,below,xshift=0pt,yshift=-10pt] {$\ell'< \ell$} (6,-0.2);

        \begin{scope}[yshift=3.2cm]
            \node[circle,fill=black,draw,inner sep=0pt,minimum size=4pt] (a) at (0,0) {};
        	\path (a) ++(0:1) node[circle,fill=black,draw,inner sep=0pt,minimum size=4pt] (b) {};
        	\path (b) ++(0:1) node[circle,fill=black,draw,inner sep=0pt,minimum size=4pt] (c) {};
                \path (c) ++(0:1) node[circle,fill=black,draw,inner sep=0pt,minimum size=4pt] (d) {};
                \path (d) ++(0:1) node[circle,fill=black,draw,inner sep=0pt,minimum size=4pt] (e) {};
                \path (e) ++(0:1) node[circle,fill=black,draw,inner sep=0pt,minimum size=4pt] (l) {};
                \path (l) ++(0:1) node[circle,fill=black,draw,inner sep=0pt,minimum size=4pt] (m) {};
                \path (m) ++(0:1) node[circle,fill=black,draw,inner sep=0pt,minimum size=4pt] (n) {};
                \path (n) ++(0:1) node[circle,fill=black,draw,inner sep=0pt,minimum size=4pt] (o) {};
                \path (o) ++(0:1) node[circle,fill=black,draw,inner sep=0pt,minimum size=4pt] (f) {};
                \path (f) ++(0:1) node[circle,fill=black,draw,inner sep=0pt,minimum size=4pt] (g) {};
                \path (g) ++(0:1) node[circle,fill=black,draw,inner sep=0pt,minimum size=4pt] (h) {};

                \path (a) ++(40:1) node[circle,fill=black,draw,inner sep=0pt,minimum size=4pt] (i) {};
                \path (a) ++(80:1) node[circle,fill=black,draw,inner sep=0pt,minimum size=4pt] (j) {};
                \path (a) ++(120:1) node[circle,fill=black,draw,inner sep=0pt,minimum size=4pt] (k) {};

                \draw[thick] (a) to node[midway,inner sep=1pt,outer sep=1pt,minimum size=4pt,fill=white] {$\xi$} (i) (a) to node[midway,inner sep=1pt,outer sep=1pt,minimum size=4pt,fill=white] {$\gamma$} (j);

                \draw[thick, dotted] (a) -- (k);
         
        	\draw[thick] (a) to node[midway,inner sep=1pt,outer sep=1pt,minimum size=4pt,fill=white] {$\alpha$} (b) to node[midway,inner sep=1pt,outer sep=1pt,minimum size=4pt,fill=white] {$\beta$} (c) to node[midway,inner sep=1pt,outer sep=1pt,minimum size=4pt,fill=white] {$\alpha$} (d) to node[midway,inner sep=1pt,outer sep=1pt,minimum size=4pt,fill=white] {$\beta$} (e) to node[midway,inner sep=1pt,outer sep=1pt,minimum size=4pt,fill=white] {$\alpha$} (l) to node[midway,inner sep=1pt,outer sep=1pt,minimum size=4pt,fill=white] {$\beta$} (m) to node[midway,inner sep=1pt,outer sep=1pt,minimum size=4pt,fill=white] {$\alpha$} (n) to node[midway,inner sep=1pt,outer sep=1pt,minimum size=4pt,fill=white] {$\beta$} (o) to node[midway,inner sep=1pt,outer sep=1pt,minimum size=4pt,fill=white] {$\alpha$} (f) to node[midway,inner sep=1pt,outer sep=1pt,minimum size=4pt,fill=white] {$\beta$} (g) to node[midway,inner sep=1pt,outer sep=1pt,minimum size=4pt,fill=white] {$\alpha$} (h);

            \draw[decoration={brace,amplitude=10pt,mirror}, decorate] (0, -0.2) -- node [midway,below,xshift=0pt,yshift=-10pt] {$> \ell$} (11,-0.2);
            \node at (-0.45, 0.25) {$e$};
            
        \end{scope}

        \begin{scope}[yshift=2.5cm]
            \draw[-{Stealth[length=3mm,width=2mm]},very thick,decoration = {snake,pre length=3pt,post length=7pt,},decorate] (5.5,-0.5) -- (5.5,-1.5);
        \end{scope}
        	
        \end{tikzpicture}
    \caption{The process of shifting an uncolored edge along a Vizing chain.}
    \label{fig:shift}
\end{figure}

{
\floatname{algorithm}{Algorithm Sketch}
\begin{algorithm}[htb!]\small
    \caption{Stage 1 of our $q$-edge-coloring algorithm}\label{temp:stage1}
    \begin{flushleft}
        \textbf{Input}: A graph $G = (V,E)$ of maximum degree $\Delta$. \\
        \textbf{Output}: A partial $q$-edge-coloring of $G$.
    \end{flushleft}
    \begin{algorithmic}[1]
        \State $\phi \gets$ the empty coloring
        \While{there are uncolored and unflagged edges}
            \State $Q \gets [q]$.
            \State Pick an uncolored and unflagged edge $e$ uniformly at random. \label{step: random edge}
            \For{$t = 1, \ldots, \Theta(\log \Delta)$}
                \State\label{step: choose_C} Sample a palette $C \subseteq Q$ of size $\kappa$.
                \State $Q \gets Q \setminus C$.
                \State Construct an $e$-augmenting Vizing chain $(F, P)$ consisting of edges colored with colors from $C$.
                \If{the construction fails}
                    \State \textsf{Break}. \label{step:fail}
                \ElsIf{$\length(P) < \ell$}
                    \State Augment $\phi$ using $(F, P)$.
                    \State \textsf{Break}.
                \Else
                    \State Pick one of the first $\ell$ edges of $P$ uniformly at random and uncolor it. Call this edge $f$. \label{step: ell'}
                    \State Augment the corresponding \textit{initial segment} of the Vizing chain $(F, P)$.
                    \State $e \gets f$ \Comment{Shifting the uncolored edge.}
                \EndIf
            \EndFor
            \If{$e$ is uncolored}
                \State Flag $e$.
            \EndIf
        \EndWhile
        \State \Return $\phi$
    \end{algorithmic}
\end{algorithm}
}

We note that the idea of shifting uncolored edges across Vizing chains was first considered by Duan, He, and Zhang in their work on dynamic algorithms for $(1+\eps)\Delta$-edge-coloring \cite{duan2019dynamic}.
Additionally, it is closely related to so-called ``Multi-Step Vizing Chains'' introduced by Bernshteyn in \cite{VizingChain} building off of work of Greb\'ik and Pikhurko \cite{GP}, which have gathered much interest in recent years (see, e.g., \cite{fastEdgeColoring, Christ, dhawan2024edge, bernshteyn2023borel, grebik2023measurable}).

We conclude this section with a discussion on the simplicity of our algorithm with a comparison to those of Table~\ref{table:history}.
The algorithms of Elkin and Khuzman \cite{elkin2024deterministic} and Bernshteyn and the author \cite{bernshteyn2024linear} build upon earlier work of Bernshteyn and the author \cite{fastEdgeColoring}.
As mentioned in \S\ref{subsec:background}, it provides a $O(m)$-time $(\Delta + 1)$-edge-coloring algorithm for $\Delta = O(1)$.
The algorithm proceeds by iteratively employing their \textsf{Multi-Step Vizing Algorithm} to construct small augmenting multi-step Vizing chains of a special complex structure.
In \cite{elkin2024deterministic}, Elkin and Khuzman use it as a subroutine in order to construct a proper $(1+\epsilon)\Delta$-edge-coloring. 
Roughly, their idea is to recursively split the edges of $G$ into subgraphs of progressively smaller maximum degree, eventually reducing the degree to a constant, and then run the algorithm from \cite{fastEdgeColoring} on each constant-degree subgraph separately. 
The approach of \cite{bernshteyn2024linear} is to instead implement a variant of the \textsf{Multi-Step Vizing Algorithm} from \cite{fastEdgeColoring} on $G$ directly and make use of the larger set of available colors in the analysis.

As mentioned earlier, the algorithms of \cite{bhattacharya2024nibbling, assadi2024faster} also employ a two-stage coloring procedure.
Stage 2 of both algorithms is identical to ours (apply Algorithm~\ref{alg:greedy}).
For Stage 1, the authors of \cite{bhattacharya2024nibbling} employ the so-called ``R\"odl nibble method,'' an elaborate coloring procedure that has had great success in proving combinatorial results.
In \cite{assadi2024faster}, Assadi designs a procedure to construct $\Delta$ disjoint \emphd{fair matchings} in linear time.
The precise definition of ``fair'' is rather technical and unrelated to our work and so we omit it here.

Note that each of the ideas mentioned (multi-step Vizing chains, the nibble method, fair matchings, etc.) are complex and involve intricate arguments.
We emphasize that our approach is significantly simpler to implement (as well as analyze).

\subsection{Proof Sketch}\label{subsection: sketch}

In this section, we provide a sketch of the proof of Theorem~\ref{theo:main_theo}.
It should be understood that the presentation in this section deliberately ignores certain minor technical issues, and so the actual arguments in the rest of the paper may be slightly different from how they are described here. 
However, the differences do not affect the general conceptual framework.

For Stage 1, we implement Algorithm~\ref{temp:stage1} with parameters $\kappa = \Theta(\log \Delta / \eps)$ and $\ell = \Theta(\kappa^2)$.
The runtime of Stage 1 is then $O(m\log^3\Delta/\eps^2)$; Stage 2 can be implemented in $O(\max\set{m,\,\log^2n})$ time (see Proposition~\ref{prop:greedy}) and so the overall runtime is $O(m\log^3\Delta/\eps^2)$ as claimed.
It remains to show that the flagged edges induce a graph of maximum degree $O(\eps\Delta)$ with high probability.

Let us first discuss the probability of reaching Step~\ref{step:fail}.
Duan, He, and Zhang made the following observation regarding the success of their algorithm to construct a Vizing chain using sampled colors: as long as for every vertex $v$, there is at least one sampled color that is ``missing'' at $v$ (i.e., does not appear on an edge adjacent to $v$), the procedure succeeds.
For $\kappa = \Theta(\log n / \eps)$, this holds with high probability when sampling from $[q]$.
We note that it is not necessary for the above to hold for all $v$, but rather for $v \in N(x) \cup N(y)$, where $e = xy$ is the uncolored edge in question.
Furthermore, while we are not always sampling from $[q]$, we note that the set $Q$ at Step~\ref{step: choose_C} always satisfies the following as a result of our choice of $\kappa$ and the lower bound on $\Delta$ in Theorem~\ref{theo:main_theo}:
\[|Q| \geq q - \Theta(\kappa\,\log \Delta) \geq (1+\Omega(\eps))\Delta.\]
Therefore, for our choice of $\kappa$, we can show that the procedure fails with probability at most $1/\poly(\Delta)$.

Consider a vertex $v$.
Suppose that $v$ is incident to the flagged edge during the $i$-th iteration of the \textsf{while} loop in Algorithm~\ref{temp:stage1}.
Note that we flag an edge if we reach Step~\ref{step:fail} or we max out on iterations.
We provide a sketch for the former case (the latter case is similar).
To this end, we make a few definitions.
Suppose the \textsf{for} loop lasts $t$ iterations.
For each $1 \leq j < t$, define the following:
\begin{itemize}
    \item $C_j$: the palette sampled during the $j$-th iteration of the \textsf{for} loop,

    \item $(F_j, P_j)$: the Vizing chain computed during the $j$-th iteration of the \textsf{for} loop,

    \item $x_j$: the unique vertex incident to every edge on $F_j$, and

    \item $\alpha_j, \beta_j \in C_j$: the unique colors that appear on the edges of $P_j$.
\end{itemize}
Additionally, let $\phi$ be the partial coloring before entering the \textsf{for} loop and let $f = x_ty_t$ be the flagged edge (note that $v \in \set{x_t, y_t}$).
Without loss of generality, let $x_t$ be the vertex closer to $x_{t-1}$ on the path $P_{t-1}$.
As the palettes $C_1, \ldots, C_{t-1}$ are disjoint, we may conclude the following for $1 \leq j < t$:
\begin{itemize}
    \item the edges of $P_j$ are colored $\alpha_j$ and $\beta_j$ under $\phi$, and
    \item at least one of $\alpha_j$ or $\beta_j$ is missing at $x_j$ under $\phi$.
\end{itemize}
The second item above is a consequence of the way we construct Vizing chains.
Fig.~\ref{fig:example} depicts an example for $t = 4$.

\begin{figure}[htb!]
    \centering
        \begin{tikzpicture}[scale=1.185]
            \node[circle,fill=black,draw,inner sep=0pt,minimum size=4pt] (a) at (0,0) {};
        	\path (a) ++(0:1) node[circle,fill=black,draw,inner sep=0pt,minimum size=4pt] (b) {};
                \path (b) ++(0:1) node[circle,fill=black,draw,inner sep=0pt,minimum size=4pt] (c) {};
                \path (c) ++(0:1) node[circle,fill=black,draw,inner sep=0pt,minimum size=4pt] (d) {};
                \path (d) ++(0:1) node[circle,fill=black,draw,inner sep=0pt,minimum size=4pt] (e) {};
                \path (e) ++(0:1) node[circle,fill=black,draw,inner sep=0pt,minimum size=4pt] (f) {};
                \path (f) ++(0:1) node[circle,fill=black,draw,inner sep=0pt,minimum size=4pt] (g) {};
                \path (g) ++(0:1) node[circle,fill=black,draw,inner sep=0pt,minimum size=4pt] (h) {};
                \path (h) ++(0:1) node[circle,fill=black,draw,inner sep=0pt,minimum size=4pt] (i) {};
                \path (i) ++(0:1) node[circle,fill=black,draw,inner sep=0pt,minimum size=4pt] (j) {};
                \path (j) ++(0:1) node[circle,fill=black,draw,inner sep=0pt,minimum size=4pt] (k) {};
                \path (k) ++(0:1) node[circle,fill=black,draw,inner sep=0pt,minimum size=4pt] (l) {};
                \path (l) ++(0:1) node[circle,fill=black,draw,inner sep=0pt,minimum size=4pt] (m) {};
                \path (m) ++(0:1) node[circle,fill=black,draw,inner sep=0pt,minimum size=4pt] (n) {};
                
        	\draw[thick] (a) to node[midway,inner sep=1pt,outer sep=1pt,minimum size=4pt,fill=white] {$\alpha_1$} (b) to node[midway,inner sep=1pt,outer sep=1pt,minimum size=4pt,fill=white] {$\beta_1$} (c) to node[midway,inner sep=1pt,outer sep=1pt,minimum size=4pt,fill=white] {$\alpha_1$} (d) to node[midway,inner sep=1pt,outer sep=1pt,minimum size=4pt,fill=white] {$\beta_1$} (e) to node[midway,inner sep=1pt,outer sep=1pt,minimum size=4pt,fill=white] {$\alpha_2$} (f) to node[midway,inner sep=1pt,outer sep=1pt,minimum size=4pt,fill=white] {$\beta_2$} (g) to node[midway,inner sep=1pt,outer sep=1pt,minimum size=4pt,fill=white] {$\alpha_2$} (h) to node[midway,inner sep=1pt,outer sep=1pt,minimum size=4pt,fill=white] {$\beta_2$} (i) to node[midway,inner sep=1pt,outer sep=1pt,minimum size=4pt,fill=white] {$\alpha_3$} (j) to node[midway,inner sep=1pt,outer sep=1pt,minimum size=4pt,fill=white] {$\beta_3$} (k) to node[midway,inner sep=1pt,outer sep=1pt,minimum size=4pt,fill=white] {$\alpha_3$} (l) to node[midway,inner sep=1pt,outer sep=1pt,minimum size=4pt,fill=white] {$\beta_3$} (m) to node[midway,inner sep=1pt,outer sep=1pt,minimum size=4pt,fill=white] {$\alpha_3$} (n);

                \path (a) ++(40:1) node[circle,fill=black,draw,inner sep=0pt,minimum size=4pt] (a1) {};
                \path (a) ++(80:1) node[circle,fill=black,draw,inner sep=0pt,minimum size=4pt] (a2) {};
                \path (a) ++(120:1) node[circle,fill=black,draw,inner sep=0pt,minimum size=4pt] (a3) {};

                \path (e) ++(40:1) node[circle,fill=black,draw,inner sep=0pt,minimum size=4pt] (e1) {};
                \path (e) ++(80:1) node[circle,fill=black,draw,inner sep=0pt,minimum size=4pt] (e2) {};
                \path (e) ++(120:1) node[circle,fill=black,draw,inner sep=0pt,minimum size=4pt] (e3) {};

                \path (i) ++(40:1) node[circle,fill=black,draw,inner sep=0pt,minimum size=4pt] (i1) {};
                \path (i) ++(80:1) node[circle,fill=black,draw,inner sep=0pt,minimum size=4pt] (i2) {};
                \path (i) ++(120:1) node[circle,fill=black,draw,inner sep=0pt,minimum size=4pt] (i3) {};

                \draw[thick, dotted] (a) -- (a3);

                \draw[thick] (a) -- (a1) (a) -- (a2) (e) -- (e1) (e) -- (e2) (e) -- (e3) (i) -- (i1) (i) -- (i2) (i) -- (i3);

                \node at (-0.45, 0.25) {$e$};
                \node at (0, -0.25) {$x_1$};
                \node at (4, -0.25) {$x_2$};
                \node at (8, -0.25) {$x_3$};
                \node at (12, -0.25) {$x_4$};
                \node at (13, -0.25) {$y_4$};

                \draw[decoration={brace,amplitude=10pt},decorate] (-0.35,1.3) -- node [midway,above,yshift=2pt,xshift=13pt] {$F_1$} (1.2, 0.4);
                \draw[decoration={brace,amplitude=10pt},decorate] (3.65,1.3) -- node [midway,above,yshift=2pt,xshift=13pt] {$F_2$} (5.2, 0.4);
                \draw[decoration={brace,amplitude=10pt},decorate] (7.65,1.3) -- node [midway,above,yshift=2pt,xshift=13pt] {$F_3$} (9.2, 0.4);

                \draw[decoration={brace,amplitude=10pt},decorate] (12,0.1) -- node [midway,above,yshift=9pt,xshift=0pt] {$f$} (13, 0.1);
        \end{tikzpicture}
    \caption{An example of reaching Step~\ref{step:fail} when $t = 4$.}
    \label{fig:example}
\end{figure}
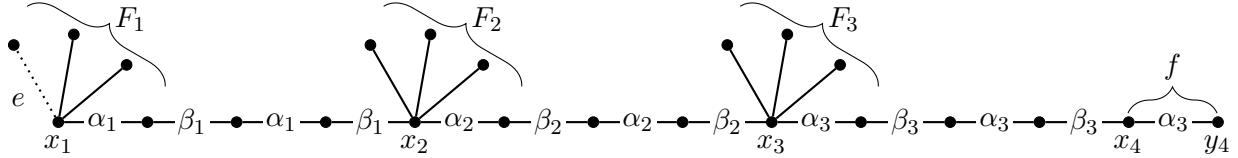

Given $v$, there are $O(\Delta)$ choices for the vertex $x_t$.
Furthermore, given $x_{j + 1}$, there are at most $|C_j|^2 = \kappa^2$ choices for $x_j$ as the colors $\alpha_j$ and $\beta_j$ uniquely determine $x_j$.
Finally, given $x_1$, there are at most $\Delta$ choices for the edge $e$.
It follows that the number of possible tuples $(e, x_1, \ldots, x_t)$ is at most $O(\Delta^2\kappa^{2(t-1)})$.

Note that during each iteration of the \textsf{while} loop, the number of uncolored and unflagged edges decreases by precisely $1$.
In particular, there are exactly $m - i + 1$ uncolored and unflagged edges at the start of the $i$-th iteration.
As the random choices made at Steps~\ref{step: random edge} and \ref{step: ell'} are independent, we may conclude that
\begin{align*}
    &~\Pr[v\text{ is incident to the flagged edge during the } i\text{-th iteration}] \\
    &\leq \sum_{t = 1}^{\Theta(\log \Delta)}O\left(\frac{\Delta^2}{(m-i+1)}\left(\frac{\kappa^2}{\ell}\right)^{t-1}\,\frac{1}{\poly(\Delta)}\right) \\
    &\leq \frac{1}{(m-i+1)\poly(\Delta)},
\end{align*}
for a sufficiently large hidden constant in the definition of $\ell$.
From here, we have
\[\E[\deg_{G'}(v)] \,=\, \sum_{i = 1}^m\frac{1}{(m-i+1)\poly(\Delta)} \,\leq\, \frac{\log n}{\poly(\Delta)}.\]
(Here we use that $1 + 1/2 + 1/3 + \cdots + 1/m \approx \log m$ and $m \leq n^2$.) 
Using a version of Azuma's inequality due to Kuszmaul and Qi \cite{azuma} (see Theorem~\ref{theo:azuma_supermartingale}), we are able to prove a concentration result showing that $\deg(v) = O(\eps\Delta)$ with high probability (here, we require $\Delta = \Omega(\log n / \eps)$).
The bound on $\Delta(G')$ then follows by a union bound over $V$.

The rest of the paper is structured as follows: in \S\ref{sec:prelim}, we introduce some terminology and background facts that will be used in our proofs; in \S\ref{sec:alg}, we will describe our algorithm in detail and prove its correctness; finally, in \S\ref{section: proof}, we will prove Theorem~\ref{theo:main_theo}.

\section{Notation and Preliminaries}\label{sec:prelim}

The definitions and algorithm in this section are taken from \cite[\S2]{dhawan2024simple} and so the familiar reader may skip on to \S\ref{sec:alg}.
This section is split into three subsections. 
In the first, we will introduce some definitions and notation regarding Vizing chains.
In the second, we will describe the data structures we use to store the graph $G$ and attributes regarding a partial coloring $\phi$.
In the third, we will describe the folklore algorithm for $(2+\eps)\Delta$-edge-coloring, which will constitute Stage 2 of our main algorithm.
The astute reader may wonder why we employ this procedure as opposed to a greedy $(2\Delta - 1)$-edge-coloring algorithm.
The simple greedy $(2\Delta - 1)$-edge-coloring algorithm has running time $O(m\Delta)$, which is inefficient for our purposes.
One can obtain a $O(m\log \Delta)$-time algorithm by applying an intricate recursive degree-splitting procedure (see, e.g., \cite{Sinnamon, elkin2024deterministic}).
While this running time is sufficient for our result,
we defer to the $(2+\eps)\Delta$-edge-coloring algorithm as a part of the focus of this work is on simplicity.

\subsection{Vizing Chains}\label{subsec:chains}

For $q \in \N$, given a partial $q$-edge-coloring $\phi$ and $x \in V$, we let
\[M(\phi, x) \defeq [q]\setminus\{\phi(xy)\,:\, y \in N_G(x)\}\] 
be the set of all the \emphd{missing} colors at $x$ under the coloring $\phi$.
We note that $|M(\phi, x)| \geq \epsilon\Delta$ for $q = (1+\eps)\Delta$.
An uncolored edge $xy$ is \emphd{$\phi$-happy} if $M(\phi, x)\cap M(\phi, y)\neq \0$. 
If $e = xy$ is $\phi$-happy, we can extend the coloring $\phi$ by assigning any color in $M(\phi, x)\cap M(\phi, y)$ to $e$.

Given a proper partial coloring, we wish to modify it in order to create a partial coloring with a happy edge.
We will do so by constructing so-called called \emphd{Vizing chains} (see Definition~\ref{defn:viz}).
A Vizing chain consists of a \emphd{fan} and an \emphd{alternating path}.
Let us first describe fans.

\begin{definition}[{Fans; \cite[Definition 2.1]{dhawan2024simple}}]\label{defn:fans}
    A \emphd{fan} of length $k$ under a partial coloring $\phi$ is a sequence $F = (x, y_0, \ldots, y_{k-1})$ such that:
    \begin{itemize}
        \item $y_0$, \ldots, $y_{k-1}$ are distinct neighbors of $x$,
        \item $\phi(xy_0) = \blank$, and
        \item $\phi(xy_i) \in M(\phi,y_{i-1})$ for $1 \leq i < k$.
    \end{itemize}
    We refer to $x$ as the \emphd{pivot} of the fan, and let $\Pivot(F) \defeq x$, $\vstart(F) \defeq y_0$, and $\vend(F) \defeq y_{k-1}$ denote the pivot, start, and end vertices of a fan $F$. (This notation is uniquely determined unless $k = 1$.)
    Finally, we let $\length(F) = k$.
\end{definition}

Let $F = (x, y_0, \ldots, y_{k-1})$ be a fan under a proper partial coloring $\phi$.
Define the coloring $\psi$ as follows:
\[\psi(e) \defeq \left\{\begin{array}{cc}
    \phi(xy_{i+1}) & \text{if} \quad e = xy_i \quad \text{for} \quad 0 \leq i < k-1; \\
    \blank & \text{if} \quad e = xy_{k-1}; \\
    \phi(e) & \text{otherwise.}
\end{array}\right.\]
By \cite[Fact 2.1]{dhawan2024simple}, $\psi$ is a proper partial $q$-edge-coloring.
We say $\psi$ is obtained from $\phi$ by \emphd{shifting} the fan $F$.
Such a shifting procedure is described in Fig.~\ref{fig:fan}.

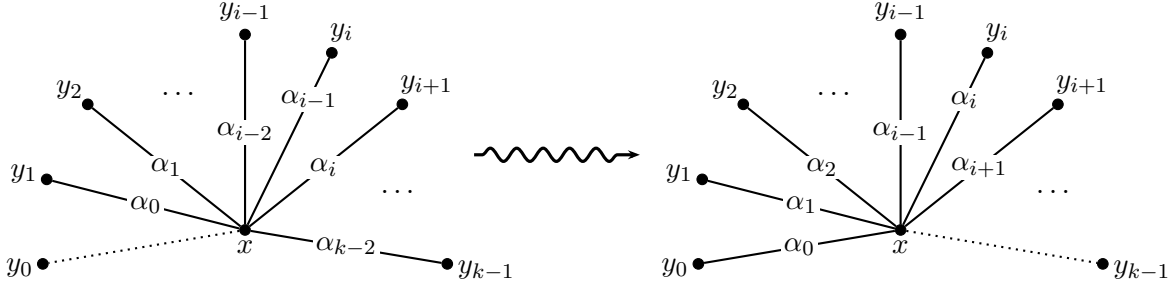
\begin{figure}[htb!]
	\centering
	\begin{tikzpicture}[xscale=1.05]
	\begin{scope}
	\node[circle,fill=black,draw,inner sep=0pt,minimum size=4pt] (x) at (0,0) {};
	\node[anchor=north] at (x) {$x$};
	
	\coordinate (O) at (0,0);
	\def\radius{2.6cm}
	
	\node[circle,fill=black,draw,inner sep=0pt,minimum size=4pt] (y0) at (190:\radius) {};
	\node at (190:2.9) {$y_0$};
	
	\node[circle,fill=black,draw,inner sep=0pt,minimum size=4pt] (y1) at (165:\radius) {};
	\node at (165:2.9) {$y_1$};
	
	\node[circle,fill=black,draw,inner sep=0pt,minimum size=4pt] (y2) at (140:\radius) {};
	\node at (140:2.9) {$y_2$};
	
	\node[circle,fill=black,draw,inner sep=0pt,minimum size=4pt] (y4) at (90:\radius) {};
	\node at (90:2.9) {$y_{i-1}$};
	
	\node[circle,fill=black,draw,inner sep=0pt,minimum size=4pt] (y5) at (65:\radius) {};
	\node at (65:2.9) {$y_i$};
	
	\node[circle,fill=black,draw,inner sep=0pt,minimum size=4pt] (y6) at (40:\radius) {};
	\node at (40:3) {$y_{i+1}$};
	
	\node[circle,fill=black,draw,inner sep=0pt,minimum size=4pt] (y8) at (-10:\radius) {};
	\node at (-10:3.1) {$y_{k-1}$};
	
	\node[circle,inner sep=0pt,minimum size=4pt] at (115:2) {$\ldots$}; 
	\node[circle,inner sep=0pt,minimum size=4pt] at (15:2) {$\ldots$}; 
	
	\draw[thick,dotted] (x) to (y0);
	\draw[thick] (x) to node[midway,inner sep=1pt,outer sep=1pt,minimum size=4pt,fill=white] {$\alpha_0$} (y1);
	\draw[thick] (x) to node[midway,inner sep=1pt,outer sep=1pt,minimum size=4pt,fill=white] {$\alpha_1$} (y2);
	
	\draw[thick] (x) to node[midway,inner sep=1pt,outer sep=1pt,minimum size=4pt,fill=white] {$\alpha_{i-2}$} (y4);
	\draw[thick] (x) to node[pos=0.75,inner sep=1pt,outer sep=1pt,minimum size=4pt,fill=white] {$\alpha_{i-1}$} (y5);
	\draw[thick] (x) to node[midway,inner sep=1pt,outer sep=1pt,minimum size=4pt,fill=white] {$\alpha_i$} (y6);
	
	\draw[thick] (x) to node[midway,inner sep=1pt,outer sep=1pt,minimum size=4pt,fill=white] {$\alpha_{k-2}$} (y8);
	\end{scope}
	
	\draw[-{Stealth[length=1.6mm]},very thick,decoration = {snake,pre length=3pt,post length=7pt,},decorate] (2.9,1) -- (5,1);
	
	\begin{scope}[xshift=8.3cm]
	\node[circle,fill=black,draw,inner sep=0pt,minimum size=4pt] (x) at (0,0) {};
	\node[anchor=north] at (x) {$x$};
	
	\coordinate (O) at (0,0);
	\def\radius{2.6cm}
	
	\node[circle,fill=black,draw,inner sep=0pt,minimum size=4pt] (y0) at (190:\radius) {};
	\node at (190:2.9) {$y_0$};
	
	\node[circle,fill=black,draw,inner sep=0pt,minimum size=4pt] (y1) at (165:\radius) {};
	\node at (165:2.9) {$y_1$};
	
	\node[circle,fill=black,draw,inner sep=0pt,minimum size=4pt] (y2) at (140:\radius) {};
	\node at (140:2.9) {$y_2$};
	
	\node[circle,fill=black,draw,inner sep=0pt,minimum size=4pt] (y4) at (90:\radius) {};
	\node at (90:2.9) {$y_{i-1}$};
	
	\node[circle,fill=black,draw,inner sep=0pt,minimum size=4pt] (y5) at (65:\radius) {};
	\node at (65:2.9) {$y_i$};
	
	\node[circle,fill=black,draw,inner sep=0pt,minimum size=4pt] (y6) at (40:\radius) {};
	\node at (40:3) {$y_{i+1}$};
	
	\node[circle,fill=black,draw,inner sep=0pt,minimum size=4pt] (y8) at (-10:\radius) {};
	\node at (-10:3.1) {$y_{k-1}$};
	
	\node[circle,inner sep=0pt,minimum size=4pt] at (115:2) {$\ldots$}; 
	\node[circle,inner sep=0pt,minimum size=4pt] at (15:2) {$\ldots$}; 
	
	\draw[thick] (x) to node[midway,inner sep=1pt,outer sep=1pt,minimum size=4pt,fill=white] {$\alpha_0$} (y0);
	\draw[thick] (x) to node[midway,inner sep=1pt,outer sep=1pt,minimum size=4pt,fill=white] {$\alpha_1$} (y1);
	\draw[thick] (x) to node[midway,inner sep=1pt,outer sep=1pt,minimum size=4pt,fill=white] {$\alpha_2$} (y2);
	
	\draw[thick] (x) to node[midway,inner sep=1pt,outer sep=1pt,minimum size=4pt,fill=white] {$\alpha_{i-1}$} (y4);
	\draw[thick] (x) to node[pos=0.75,inner sep=1pt,outer sep=1pt,minimum size=4pt,fill=white] {$\alpha_i$} (y5);
	\draw[thick] (x) to node[midway,inner sep=1pt,outer sep=1pt,minimum size=4pt,fill=white] {$\alpha_{i+1}$} (y6);
	
	\draw[thick, dotted] (x) to (y8);
	\end{scope}
	\end{tikzpicture}
	\caption{The process of shifting a fan.}\label{fig:fan}
\end{figure}

Next, we define alternating paths.

\begin{definition}[{Alternating Paths; \cite[Definition 2.2]{dhawan2024simple}}]\label{defn:path}
    For $\alpha, \beta \in [q]$, an \emphd{$\alpha\beta$-path} $P = (x_0, \ldots, x_k)$ under a partial $q$-edge-coloring $\phi$ is a sequence $P = (x_0, \ldots, x_k)$ such that:
    \begin{itemize}
        \item $x_0$, \ldots, $x_k$ are distinct vertices,
        \item $\phi(x_0x_1) = \alpha$, and
        \item the colors of the edges $x_ix_{i+1}$ alternate between $\alpha$ and $\beta$.
    \end{itemize}
    We let $\vstart(P) \defeq x_0$ and $\vend(P) \defeq x_k$ denote the first and last vertices on the path, respectively.
    Finally, we let $\length(P) = k$.
\end{definition}

Let $P = (x_0, \ldots, x_k)$ be a maximal $\alpha\beta$-path under a proper partial coloring $\phi$, i.e., $M(\phi, v) \cap \set{\alpha, \beta} \neq \0$ for $v \in \set{x_0, x_k}$.
Define the coloring $\psi$ by interchanging the colors of the edges in $P$.
By \cite[Fact 2.2]{dhawan2024simple}, $\psi$ is a proper partial $q$-edge-coloring.
We say $\psi$ is obtained from $\phi$ by \emphd{flipping} the path $P$.
Such a flipping procedure is described in Fig.~\ref{fig:path}.

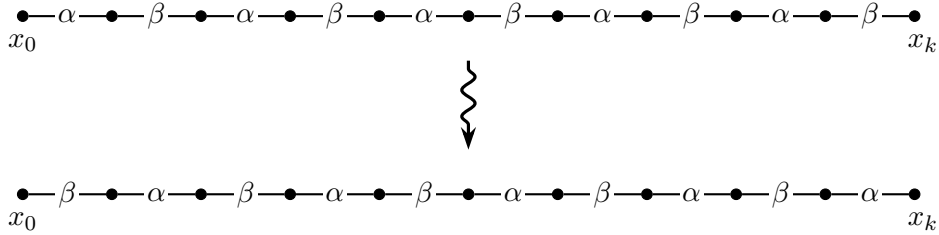
\begin{figure}[htb!]
    \centering
        \begin{tikzpicture}[scale=1.185]
            \node[circle,fill=black,draw,inner sep=0pt,minimum size=4pt] (a) at (0,0) {};
        	\path (a) ++(0:1) node[circle,fill=black,draw,inner sep=0pt,minimum size=4pt] (b) {};
        	\path (b) ++(0:1) node[circle,fill=black,draw,inner sep=0pt,minimum size=4pt] (c) {};
                \path (c) ++(0:1) node[circle,fill=black,draw,inner sep=0pt,minimum size=4pt] (d) {};
                \path (d) ++(0:1) node[circle,fill=black,draw,inner sep=0pt,minimum size=4pt] (e) {};
                \path (e) ++(0:1) node[circle,fill=black,draw,inner sep=0pt,minimum size=4pt] (f) {};
                \path (f) ++(0:1) node[circle,fill=black,draw,inner sep=0pt,minimum size=4pt] (g) {};
                \path (g) ++(0:1) node[circle,fill=black,draw,inner sep=0pt,minimum size=4pt] (h) {};
                \path (h) ++(0:1) node[circle,fill=black,draw,inner sep=0pt,minimum size=4pt] (i) {};
                \path (i) ++(0:1) node[circle,fill=black,draw,inner sep=0pt,minimum size=4pt] (j) {};
                \path (j) ++(0:1) node[circle,fill=black,draw,inner sep=0pt,minimum size=4pt] (k) {};
        	

            \node at (0, -0.3) {$x_0$};
            \node at (10.1, -0.3) {$x_k$};
        	
        	\draw[thick] (a) to node[midway,inner sep=1pt,outer sep=1pt,minimum size=4pt,fill=white] {$\beta$} (b) to node[midway,inner sep=1pt,outer sep=1pt,minimum size=4pt,fill=white] {$\alpha$} (c) to node[midway,inner sep=1pt,outer sep=1pt,minimum size=4pt,fill=white] {$\beta$} (d) to node[midway,inner sep=1pt,outer sep=1pt,minimum size=4pt,fill=white] {$\alpha$} (e) to node[midway,inner sep=1pt,outer sep=1pt,minimum size=4pt,fill=white] {$\beta$} (f) to node[midway,inner sep=1pt,outer sep=1pt,minimum size=4pt,fill=white] {$\alpha$} (g) to node[midway,inner sep=1pt,outer sep=1pt,minimum size=4pt,fill=white] {$\beta$} (h) to node[midway,inner sep=1pt,outer sep=1pt,minimum size=4pt,fill=white] {$\alpha$} (i) to node[midway,inner sep=1pt,outer sep=1pt,minimum size=4pt,fill=white] {$\beta$} (j) to node[midway,inner sep=1pt,outer sep=1pt,minimum size=4pt,fill=white] {$\alpha$} (k);

        \begin{scope}[yshift=2cm]
            \node[circle,fill=black,draw,inner sep=0pt,minimum size=4pt] (a) at (0,0) {};
        	\path (a) ++(0:1) node[circle,fill=black,draw,inner sep=0pt,minimum size=4pt] (b) {};
        	\path (b) ++(0:1) node[circle,fill=black,draw,inner sep=0pt,minimum size=4pt] (c) {};
                \path (c) ++(0:1) node[circle,fill=black,draw,inner sep=0pt,minimum size=4pt] (d) {};
                \path (d) ++(0:1) node[circle,fill=black,draw,inner sep=0pt,minimum size=4pt] (e) {};
                \path (e) ++(0:1) node[circle,fill=black,draw,inner sep=0pt,minimum size=4pt] (f) {};
                \path (f) ++(0:1) node[circle,fill=black,draw,inner sep=0pt,minimum size=4pt] (g) {};
                \path (g) ++(0:1) node[circle,fill=black,draw,inner sep=0pt,minimum size=4pt] (h) {};
                \path (h) ++(0:1) node[circle,fill=black,draw,inner sep=0pt,minimum size=4pt] (i) {};
                \path (i) ++(0:1) node[circle,fill=black,draw,inner sep=0pt,minimum size=4pt] (j) {};
                \path (j) ++(0:1) node[circle,fill=black,draw,inner sep=0pt,minimum size=4pt] (k) {};
        	

            \node at (0, -0.3) {$x_0$};
            \node at (10.1, -0.3) {$x_k$};
        	
        	\draw[thick] (a) to node[midway,inner sep=1pt,outer sep=1pt,minimum size=4pt,fill=white] {$\alpha$} (b) to node[midway,inner sep=1pt,outer sep=1pt,minimum size=4pt,fill=white] {$\beta$} (c) to node[midway,inner sep=1pt,outer sep=1pt,minimum size=4pt,fill=white] {$\alpha$} (d) to node[midway,inner sep=1pt,outer sep=1pt,minimum size=4pt,fill=white] {$\beta$} (e) to node[midway,inner sep=1pt,outer sep=1pt,minimum size=4pt,fill=white] {$\alpha$} (f) to node[midway,inner sep=1pt,outer sep=1pt,minimum size=4pt,fill=white] {$\beta$} (g) to node[midway,inner sep=1pt,outer sep=1pt,minimum size=4pt,fill=white] {$\alpha$} (h) to node[midway,inner sep=1pt,outer sep=1pt,minimum size=4pt,fill=white] {$\beta$} (i) to node[midway,inner sep=1pt,outer sep=1pt,minimum size=4pt,fill=white] {$\alpha$} (j) to node[midway,inner sep=1pt,outer sep=1pt,minimum size=4pt,fill=white] {$\beta$} (k);
        \end{scope}

        \begin{scope}[yshift=1.5cm]
            \draw[-{Stealth[length=3mm,width=2mm]},very thick,decoration = {snake,pre length=3pt,post length=7pt,},decorate] (5,0) -- (5,-1);
        \end{scope}
        	
        \end{tikzpicture}
    \caption{The process of flipping an $\alpha\beta$-path.}
    \label{fig:path}
\end{figure}

Combining a fan and an alternating path yields a Vizing chain.

\begin{definition}[{Vizing chains; \cite[Definition 2.3]{dhawan2024simple}}]\label{defn:viz}
    A \emphd{Vizing chain} in a partial $q$-edge-coloring $\phi$ is a tuple $(F, P)$ such that $F = (x, y_0, \ldots, y_{k-1})$ is a fan and $P = (x_0 = x, x_1, \ldots, x_s)$ is an $\alpha\beta$-path for some $\alpha, \beta \in [q]$.
\end{definition}

A key part of our algorithm will involve constructing ``short'' Vizing chains $(F, P)$ and modifying our coloring by flipping $P$ and shifting $F$ to create a happy edge.
Such a procedure is described in Fig.~\ref{fig:vizing_shift}.

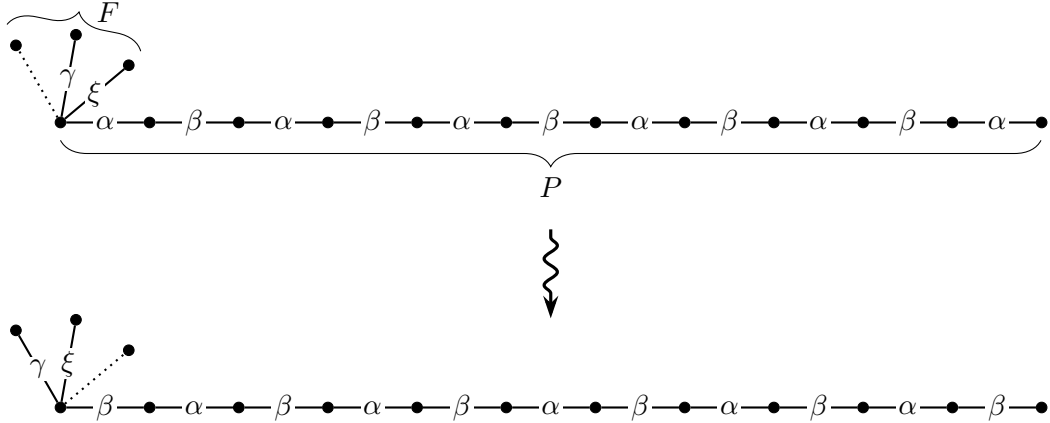
\begin{figure}[htb!]
    \centering
        \begin{tikzpicture}[scale=1.185]
            \node[circle,fill=black,draw,inner sep=0pt,minimum size=4pt] (a) at (0,0) {};
        	\path (a) ++(0:1) node[circle,fill=black,draw,inner sep=0pt,minimum size=4pt] (b) {};
        	\path (b) ++(0:1) node[circle,fill=black,draw,inner sep=0pt,minimum size=4pt] (c) {};
                \path (c) ++(0:1) node[circle,fill=black,draw,inner sep=0pt,minimum size=4pt] (d) {};
                \path (d) ++(0:1) node[circle,fill=black,draw,inner sep=0pt,minimum size=4pt] (e) {};
                \path (e) ++(0:1) node[circle,fill=black,draw,inner sep=0pt,minimum size=4pt] (l) {};
                \path (l) ++(0:1) node[circle,fill=black,draw,inner sep=0pt,minimum size=4pt] (m) {};
                \path (m) ++(0:1) node[circle,fill=black,draw,inner sep=0pt,minimum size=4pt] (n) {};
                \path (n) ++(0:1) node[circle,fill=black,draw,inner sep=0pt,minimum size=4pt] (o) {};
                \path (o) ++(0:1) node[circle,fill=black,draw,inner sep=0pt,minimum size=4pt] (f) {};
                \path (f) ++(0:1) node[circle,fill=black,draw,inner sep=0pt,minimum size=4pt] (g) {};
                \path (g) ++(0:1) node[circle,fill=black,draw,inner sep=0pt,minimum size=4pt] (h) {};

                \path (a) ++(40:1) node[circle,fill=black,draw,inner sep=0pt,minimum size=4pt] (i) {};
                \path (a) ++(80:1) node[circle,fill=black,draw,inner sep=0pt,minimum size=4pt] (j) {};
                \path (a) ++(120:1) node[circle,fill=black,draw,inner sep=0pt,minimum size=4pt] (k) {};
        	
                \draw[thick, dotted] (a) -- (i);
                
                \draw[thick] (a) to node[midway,inner sep=1pt,outer sep=1pt,minimum size=4pt,fill=white] {$\xi$} (j) (a) to node[midway,inner sep=1pt,outer sep=1pt,minimum size=4pt,fill=white] {$\gamma$} (k);
         
        	\draw[thick] (a) to node[midway,inner sep=1pt,outer sep=1pt,minimum size=4pt,fill=white] {$\beta$} (b) to node[midway,inner sep=1pt,outer sep=1pt,minimum size=4pt,fill=white] {$\alpha$} (c) to node[midway,inner sep=1pt,outer sep=1pt,minimum size=4pt,fill=white] {$\beta$} (d) to node[midway,inner sep=1pt,outer sep=1pt,minimum size=4pt,fill=white] {$\alpha$} (e) to node[midway,inner sep=1pt,outer sep=1pt,minimum size=4pt,fill=white] {$\beta$} (l) to node[midway,inner sep=1pt,outer sep=1pt,minimum size=4pt,fill=white] {$\alpha$} (m) to node[midway,inner sep=1pt,outer sep=1pt,minimum size=4pt,fill=white] {$\beta$} (n) to node[midway,inner sep=1pt,outer sep=1pt,minimum size=4pt,fill=white] {$\alpha$} (o) to node[midway,inner sep=1pt,outer sep=1pt,minimum size=4pt,fill=white] {$\beta$} (f) to node[midway,inner sep=1pt,outer sep=1pt,minimum size=4pt,fill=white] {$\alpha$} (g) to node[midway,inner sep=1pt,outer sep=1pt,minimum size=4pt,fill=white] {$\beta$} (h);

        \begin{scope}[yshift=3.2cm]
            \node[circle,fill=black,draw,inner sep=0pt,minimum size=4pt] (a) at (0,0) {};
        	\path (a) ++(0:1) node[circle,fill=black,draw,inner sep=0pt,minimum size=4pt] (b) {};
        	\path (b) ++(0:1) node[circle,fill=black,draw,inner sep=0pt,minimum size=4pt] (c) {};
                \path (c) ++(0:1) node[circle,fill=black,draw,inner sep=0pt,minimum size=4pt] (d) {};
                \path (d) ++(0:1) node[circle,fill=black,draw,inner sep=0pt,minimum size=4pt] (e) {};
                \path (e) ++(0:1) node[circle,fill=black,draw,inner sep=0pt,minimum size=4pt] (l) {};
                \path (l) ++(0:1) node[circle,fill=black,draw,inner sep=0pt,minimum size=4pt] (m) {};
                \path (m) ++(0:1) node[circle,fill=black,draw,inner sep=0pt,minimum size=4pt] (n) {};
                \path (n) ++(0:1) node[circle,fill=black,draw,inner sep=0pt,minimum size=4pt] (o) {};
                \path (o) ++(0:1) node[circle,fill=black,draw,inner sep=0pt,minimum size=4pt] (f) {};
                \path (f) ++(0:1) node[circle,fill=black,draw,inner sep=0pt,minimum size=4pt] (g) {};
                \path (g) ++(0:1) node[circle,fill=black,draw,inner sep=0pt,minimum size=4pt] (h) {};

                \path (a) ++(40:1) node[circle,fill=black,draw,inner sep=0pt,minimum size=4pt] (i) {};
                \path (a) ++(80:1) node[circle,fill=black,draw,inner sep=0pt,minimum size=4pt] (j) {};
                \path (a) ++(120:1) node[circle,fill=black,draw,inner sep=0pt,minimum size=4pt] (k) {};

                \draw[thick] (a) to node[midway,inner sep=1pt,outer sep=1pt,minimum size=4pt,fill=white] {$\xi$} (i) (a) to node[midway,inner sep=1pt,outer sep=1pt,minimum size=4pt,fill=white] {$\gamma$} (j);

                \draw[thick, dotted] (a) -- (k);
         
        	\draw[thick] (a) to node[midway,inner sep=1pt,outer sep=1pt,minimum size=4pt,fill=white] {$\alpha$} (b) to node[midway,inner sep=1pt,outer sep=1pt,minimum size=4pt,fill=white] {$\beta$} (c) to node[midway,inner sep=1pt,outer sep=1pt,minimum size=4pt,fill=white] {$\alpha$} (d) to node[midway,inner sep=1pt,outer sep=1pt,minimum size=4pt,fill=white] {$\beta$} (e) to node[midway,inner sep=1pt,outer sep=1pt,minimum size=4pt,fill=white] {$\alpha$} (l) to node[midway,inner sep=1pt,outer sep=1pt,minimum size=4pt,fill=white] {$\beta$} (m) to node[midway,inner sep=1pt,outer sep=1pt,minimum size=4pt,fill=white] {$\alpha$} (n) to node[midway,inner sep=1pt,outer sep=1pt,minimum size=4pt,fill=white] {$\beta$} (o) to node[midway,inner sep=1pt,outer sep=1pt,minimum size=4pt,fill=white] {$\alpha$} (f) to node[midway,inner sep=1pt,outer sep=1pt,minimum size=4pt,fill=white] {$\beta$} (g) to node[midway,inner sep=1pt,outer sep=1pt,minimum size=4pt,fill=white] {$\alpha$} (h);

            \draw[decoration={brace,amplitude=10pt,mirror}, decorate] (0, -0.2) -- node [midway,below,xshift=0pt,yshift=-10pt] {$P$} (11,-0.2);
            \draw[decoration={brace,amplitude=10pt},decorate] (-0.6,1.1) -- node [midway,above,yshift=2pt,xshift=13pt] {$F$} (0.9, 0.8);
            
        \end{scope}

        \begin{scope}[yshift=2.5cm]
            \draw[-{Stealth[length=3mm,width=2mm]},very thick,decoration = {snake,pre length=3pt,post length=7pt,},decorate] (5.5,-0.5) -- (5.5,-1.5);
        \end{scope}
        	
        \end{tikzpicture}
    \caption{The process of flipping $P$ and then shifting $F$ in a Vizing chain $(F, P)$. The edge $\End(F)$ can be colored $\alpha$.}
    \label{fig:vizing_shift}
\end{figure}

\subsection{Data Structures}\label{subsec:data_structures}
    
In this section, we describe how we will store our graph $G$ and partial coloring $\phi$.
Below, we discuss how our choices for the data structures affect the runtime of certain procedures.

\begin{itemize}
    \item We store $G$ as a list of vertices and edges, and include the partial coloring $\phi$ as an attribute of the graph.

    \item We store the partial coloring $\phi$ as a hash map, which maps edges to their respective colors; we map the edge to $\blank$ or $\flg$ if the edge is uncolored or flagged, respectively.
    Furthermore, the missing sets $M(\phi, \cdot)$ are also stored as hash maps, which map a vertex $x$ to a $q$-element array such that the following holds for each $\alpha \in [q]$:
    \[M(\phi, x)[\alpha] = \left\{\begin{array}{cc}
        y & \text{such that $\phi(xy) = \alpha$;} \\
        \blank & \text{if no such $y$ exists.}
    \end{array}\right.\]
    Note that as $\phi$ is a proper partial coloring, the vertex $y$ above is unique.
    In the remainder of the paper, we will use the notation $M(\phi, x)[\cdot]$ as described above in our algorithms, and the notation $M(\phi, x)$ to indicate the set of missing colors at $x$ in our proofs.
\end{itemize}
By our choice of data structures, we may shift a fan $F$ (resp. flip a path $P$) in time $O(\length(F))$ (resp. $O(\length(P))$).

\subsection{An Algorithm for Stage 2}\label{subsec:stage two}

In this section, we describe the folklore algorithm for $(2+\eps)\Delta$-edge-coloring, which we will invoke during Stage 2 of our main algorithm.
The algorithm proceeds as follows: iterate over the edges in an arbitrary order and at each iteration, repeatedly pick a color from $[(2+\eps)\Delta]$ uniformly at random until a valid one is selected (a color $\alpha$ is valid for an edge $e$ if no edge sharing an endpoint with $e$ is colored $\alpha$).

\begin{algorithm}[htb!]\small
\caption{$(2+\eps)\Delta$-Edge-Coloring}\label{alg:greedy}
\begin{flushleft}
\textbf{Input}: An $n$-vertex graph $G$ having $m$ edges and maximum degree $\Delta$. \\
\textbf{Output}: A proper $(2 + \epsilon)\Delta$-edge-coloring of $G$.
\end{flushleft}

\begin{algorithmic}[1]
    \State $\phi(e) \gets \blank$ \textbf{for each} $e \in E$
    \For{$e = xy \in E$}
        \State \label{step:random color choice} Pick $\alpha \in [(2+\eps)\Delta]$ uniformly at random.
        \If{$M(\phi, x)[\alpha] = \blank$ and $M(\phi, y)[\alpha] = \blank$}
            \State $\phi(e) \gets \alpha$
        \Else
            \State Return to Step~\ref{step:random color choice}.
        \EndIf
    \EndFor
    \State \Return $\phi$
\end{algorithmic}
\end{algorithm}

The following proposition provides a bound on the runtime of Algorithm~\ref{alg:greedy}.

\begin{proposition}[{\cite[Proposition 2.3]{dhawan2024simple}}]\label{prop:greedy}
    Let $\eps > 0$ be arbitrary and let $\gamma \defeq \min\set{1,\, \eps}$.
    For any $n$-vertex graph $G$ with $m$ edges and maximum degree $\Delta$,
    Algorithm~\ref{alg:greedy} computes a proper $(2+\eps)\Delta$-edge-coloring of $G$ in $O(\max\set{m/\gamma^2,\,\log^2n/\gamma})$ time with probability at least $1 - \min\set{e^{-m/\gamma}, 1/\poly(n)}$.
\end{proposition}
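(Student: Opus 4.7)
The plan is to prove the proposition in three stages: correctness, a lower bound on the per-trial success probability, and a concentration argument on the total number of random color draws.

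Correctness is essentially immediate from the algorithm specification. A color $\alpha$ is assigned to an edge $e = xy$ only when both $M(\phi, x)[\alpha] = \blank$ and $M(\phi, y)[\alpha] = \blank$, so no edge incident to $e$ currently carries $\alpha$ and the updated coloring remains proper. Assuming the algorithm terminates, its output is a proper $(2+\eps)\Delta$-edge-coloring of $G$. It therefore suffices to bound the number of iterations.

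For the per-trial bound, fix the moment at which edge $e = xy$ is being processed. The vertices $x$ and $y$ are each incident to at most $\Delta - 1$ other edges, so at most $2\Delta - 2$ colors of the palette $[(2+\eps)\Delta]$ are blocked. Hence at least $\eps\Delta + 2$ colors are valid for $e$, and a uniformly random draw is accepted with probability at least
\[\frac{\eps\Delta + 2}{(2+\eps)\Delta} \,\geq\, \frac{\eps}{2+\eps} \,\geq\, \frac{\gamma}{3}.\]
Crucially, this lower bound holds uniformly over the history of the algorithm and the choice of $e$, so the sequence of random color draws stochastically dominates an i.i.d.\ Bernoulli sequence with success probability $\gamma/3$.

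The remaining task is to bound the total number of draws $T$. Using the hash maps from \S\ref{subsec:data_structures}, each draw together with the associated lookups and update is an $O(1)$-time operation, so the runtime equals $O(T)$ plus the $O(m)$ cost of initialization. The algorithm halts after $m$ successes among these trials, so $T$ is stochastically dominated by the waiting time for $m$ successes in the i.i.d.\ Bernoulli process from the previous step. A standard multiplicative Chernoff bound on the number of successes in $t = C \max\{m/\gamma^2,\,\log^2 n/\gamma\}$ such trials then gives that the expected number of successes substantially exceeds $m$, and a deviation to fewer than $m$ successes occurs with probability at most $\min\{e^{-m/\gamma},\,1/\poly(n)\}$. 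The main obstacle is calibrating the two regimes of the maximum and minimum: the $m/\gamma^2$ branch drives the $e^{-m/\gamma}$ tail in the regime where $m$ is large enough for an exponential-in-$m$ bound to beat $1/\poly(n)$, while the $\log^2 n/\gamma$ branch inflates the number of trials enough to retain a $1/\poly(n)$ failure bound even when $m = O(\log n)$; because the dominating Bernoulli sequence is identical in both cases, a single Chernoff computation handles them simultaneously.
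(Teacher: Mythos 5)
Your proposal is correct and follows what is essentially the intended argument for this proposition (which the paper itself only cites from \cite{dhawan2024simple} rather than reproving): a uniform per-draw acceptance probability of at least $\gamma/3$ coming from the $\eps\Delta$ surplus of colors, stochastic domination of the total number of draws by the waiting time for $m$ successes of i.i.d.\ $\mathsf{Ber}(\gamma/3)$ trials, and a Chernoff lower-tail bound with the trial budget $\Theta(\max\set{m/\gamma^2,\log^2 n/\gamma})$, whose expectation exceeds both $\Omega(m/\gamma)$ and $\Omega(\log^2 n)$ so that the failure probability is simultaneously at most $e^{-m/\gamma}$ and $1/\poly(n)$. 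No gaps beyond routine constant-chasing.
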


\section{Algorithm Overview}\label{sec:alg}

Our algorithm consists of two stages:
in Stage 1, we will find a proper partial $(1+\eps/2)\Delta$-edge-coloring of $G$;
in Stage 2, we will invoke Algorithm~\ref{alg:greedy} to color the remaining edges with $\eps\Delta/2$ colors.
Below, we describe our algorithm modulo a technical subprocedure, which we defer to \S\ref{subsec: color one overview}.

To assist with the description, we let $\flg(\phi)$ denote the flagged edges under a partial coloring $\phi$.
We begin with the blank coloring and proceed iteratively.
At each iteration, we pick an uncolored and unflagged edge $e$ and a vertex $x \in e$ uniformly at random and invoke the algorithm \hyperref[alg:color]{Color One} (see Algorithm~\ref{alg:color}) with input $(\phi, e, x, \kappa, \ell)$, where $\phi$ is the current partial coloring and $\kappa$ and $\ell$ are two constants to be chosen later.
The precise description of \hyperref[alg:color]{Color One} is rather technical and so we defer it to \S\ref{subsec: color one overview}.
For now, we may assume that it returns a partial coloring $\psi$ such that
\begin{itemize}
    \item either $\dom(\psi) = \dom(\phi) \cup \set{e}$ and $\flg(\psi) = \flg(\phi)$, or
    \item there is an edge $f \in \dom(\phi) \cup \set{e}$ such that $\dom(\psi) = (\dom(\phi) \cup \set{e}) \setminus \set{f}$ and $\flg(\psi) = \flg(\phi) \cup \set{f}$.
\end{itemize}
We continue iterating until all edges are either colored or flagged.
This concludes Stage 1.
Let $\phi$ be the partial coloring from the first stage.
Let $G^*$ be the subgraph induced by the flagged edges.
If $\Delta(G^*) > \eps\Delta/6$, we return $\mathsf{FAIL}$.
If not, we properly color $G^*$ with $3\Delta(G^*)$ colors using Algorithm~\ref{alg:greedy}.
Let $\psi$ be the resulting coloring on $G^*$.
For each edge $e \in E(G^*)$, we let $\phi(e) = (1+\epsilon/2)\Delta + \psi(e)$.
Since $3\Delta(G^*) \leq \epsilon\Delta/2$, this defines a proper $(1+\epsilon)\Delta$-edge-coloring of $G$.

\vspace{5pt}
\begin{breakablealgorithm}\small
\caption{$(1+\eps)\Delta$-Edge-Coloring}\label{alg:main alg}
\begin{flushleft}
\textbf{Input}: An $n$-vertex graph $G$ having $m$ edges and maximum degree $\Delta$. \\
\textbf{Output}: A proper $(1 + \epsilon)\Delta$-edge-coloring of $G$.
\end{flushleft}

\begin{algorithmic}[1]
    \State $\phi(e) \gets \blank$ \textbf{for each} $e \in E$, \quad $U \gets E$
    \While{$U \neq \0$}
        \State Pick an edge $e \in U$ and a vertex $x \in e$ uniformly at random. \label{step:random_edge}
        \State $\phi \gets \hyperref[alg:color]{\mathsf{ColorOne}}(\phi, e, x, \kappa, \ell)$ \Comment{Algorithm~\ref{alg:color}}
        \State $U \gets U \setminus \set{e}$
    \EndWhile
    \medskip
    \State Let $G^*$ be the subgraph induced by the flagged edges. \label{step:define_G_star}
    \If{$\Delta(G^*) > \epsilon\,\Delta / 6$} 
        \State \Return $\mathsf{FAIL}$ \label{step:fail_stage2}
    \EndIf
    \State Run Algorithm~\ref{alg:greedy} to obtain a $3\Delta(G^*)$-edge-coloring $\psi$ of $G^*$ \label{step:sinnamon}
    \State $\phi(e) \gets \psi(e) + (1 + \epsilon/2)\Delta$ for each $e \in E(G^*)$ \label{step:combine}
    \State \Return $\phi$
\end{algorithmic}
\end{breakablealgorithm}
\vspace{5pt}

Note that at each iteration of the \textsf{while} loop above, the set $U$ contains the edges that are both uncolored and unflagged.
This will become clear once we describe Algorithm~\ref{alg:color}.
The rest of this section is split into two subsections.
In the first, we will formally describe the algorithm \hyperref[alg:color]{Color One}.
In the second, we will prove the correctness of Algorithm~\ref{alg:main alg}.

\subsection{Color One}\label{subsec: color one overview}

We will first provide an overview of our algorithm (formally stated as Algorithm~\ref{alg:color}).
The algorithm is split into subprocedures, some of which are taken verbatim from \cite{dhawan2024simple}.

The first subprocedure we will describe is \hyperref[alg:fan]{Make Fan}, which is stated formally as Algorithm~\ref{alg:fan} and is identical to \cite[Algorithm 3.1]{dhawan2024simple}. 
The algorithm takes as input a proper partial coloring $\phi$, an uncolored edge $e = xy$, a choice of a pivot vertex $x \in e$, and a subset of colors $C \subseteq [q]$.
Either the algorithm fails, or it outputs a tuple $(F, \alpha, j)$ such that:
\begin{itemize}
    \item $F = (x, y_0 = y, \ldots, y_{k-1})$ is a fan under the partial coloring $\phi$,
    \item $\alpha \in C$ is a color, and $j \leq k$ is an index such that $\alpha \in M(\phi, y_{k-1})\cap M(\phi, y_{j-1})$.
\end{itemize}
To construct $F$, we follow a series of iterations. At the start of each iteration, we have a fan $F = (x, y_0 = y, \ldots, y_s)$.
If $M(\phi, y_s) \cap C = \0$, the algorithm fails.
If not, we let $\eta \defeq \min M(\phi, y_s) \cap C$.
If $\eta \in M(\phi, x)$, then we return $(F, \eta, s+1)$.
If not, let $z$ be such that $\phi(xz) = \eta$. 
We now have two cases.
\begin{enumerate}[label=\ep{\textbf{Case \arabic*}},wide]
    \item $z\notin \set{y_0, \ldots, y_s}$. Then we update $F$ to $(x, y_0, \ldots, y_s, z)$ and continue.
    \item $z = y_j$ for some $0 \leq j \leq s$. 
    Note that $\phi(xy_0) = \blank$ and $\eta \in M(\phi, y_s)$, so we must have $1 \leq j \leq s - 1$.
    In this case, we return $(F, \eta, j)$.
\end{enumerate}

\begin{breakablealgorithm}\small
\caption{Make Fan}\label{alg:fan}
\begin{flushleft}
\textbf{Input}: A proper partial $q$-edge-coloring $\phi$, an uncolored edge $e = xy$, a vertex $x \in e$, and list of colors $C \subseteq [q]$. \\
\textbf{Output}: $\mathsf{FAIL}$ or a fan $F = (x, y_0 = y, \ldots, y_{k-1})$, a color $\alpha \in C$, and an index $j$ such that $\alpha \in M(\phi, y_{k-1})\cap M(\phi, y_{j-1})$.
\end{flushleft}

\begin{algorithmic}[1]
    \State $F \gets (x,y)$, \quad $z \gets y$, \quad $k \gets 1$.
    \While{$\mathsf{true}$}
        \If{$C \cap M(\phi, z) = \0$} \label{step:check_missing}
            \State \Return $\mathsf{FAIL}$ \label{step:fail_fan}
        \EndIf
        \medskip
        \State $\eta \gets \min M(\phi, z) \cap C$ \label{step:assign_eta}
        \State $z \gets M(\phi, x)[\eta]$
        \If{$z = \blank$}
            \State \Return $(F, \eta, k)$ \label{step:happy_fan}
        \EndIf
        \medskip
        \If{$z \in F$}\label{step:check_fan}
            \State Let $j$ be such that $z = y_j$.
            \State \Return $(F, \eta, j)$
        \EndIf
        \medskip
        \State $y_k \gets z$
        \State $\mathsf{Append}(F, y_k)$, \quad $k \gets k + 1$
    \EndWhile \label{step:end}
\end{algorithmic}
\end{breakablealgorithm}
\vspace{5pt}

Note that Steps~\ref{step:check_missing} and \ref{step:assign_eta} can be implemented in $O(|C|)$ time and Step~\ref{step:check_fan} takes $O(k)$ time.
It is not hard to see that the number of iterations is at most $|C|$ and so $k \leq |C| + 1$.
It follows that Algorithm~\ref{alg:fan} runs in $O(|C|^2)$ time.

The next algorithm we will describe is the \hyperref[alg:viz]{Vizing Chain Algorithm}, which is stated formally as Algorithm~\ref{alg:viz} and is identical to \cite[Algorithm 3.2]{dhawan2024simple}. 
The algorithm takes as input a partial coloring $\phi$, an uncolored edge $e = xy$, a choice of a pivot vertex $x \in e$, a subset of colors $C \subseteq [q]$, and a parameter $\ell$ to be defined later.
Either the algorithm fails, or it outputs a Vizing chain $(F, P)$ and a color $\alpha$, which satisfy certain properties (see Lemma~\ref{lemma:viz}).
First, we run Algorithm~\ref{alg:fan} with input $(\phi, e, x, C)$.
If the algorithm fails, we return $\mathsf{FAIL}$.
If not, let $(F, \alpha, j)$ be the output of Algorithm~\ref{alg:fan} such that $F = (x, y_0, \ldots, y_{k-1})$.
If $j = k$, we return the Vizing chain $(F,(x))$ and the color $\alpha$.
If $M(\phi, x) \cap C = \0$, we return $\mathsf{FAIL}$.
If not, let $\beta \defeq \min M(\phi, x) \cap C$, and let $P = (x_0 = x, x_1, \ldots, x_s)$ be the $\alpha\beta$-path starting at $x$ such that $s \leq \ell$ (i.e., if the maximal $\alpha\beta$-path starting at $x$ has length $> \ell$, we only consider the first $\ell$ edges).
We return the Vizing chain $(F,P)$ and the color $\alpha$.

\vspace{5pt}
\begin{breakablealgorithm}\small
\caption{Vizing Chain Algorithm}\label{alg:viz}
\begin{flushleft}
\textbf{Input}: A proper partial coloring $\phi$, an uncolored edge $e = xy$, a choice of a pivot vertex $x \in e$, a subset of colors $C \subseteq [(1+\epsilon)\Delta]$, and a parameter $\ell$. \\
\textbf{Output}: $\mathsf{FAIL}$ or a Vizing chain $(F,P)$ and a color $\alpha$.
\end{flushleft}

\begin{algorithmic}[1]
    \State $\mathsf{Out} \gets \hyperref[alg:fan]{\mathsf{MakeFan}}(\phi, xy, x, C)$ \Comment{Algorithm~\ref{alg:fan}} \label{step:alg_fan}
    \If{$\mathsf{Out} = \mathsf{FAIL}$}
        \State \Return $\mathsf{FAIL}$
    \EndIf
    \medskip
    \State $(F, \alpha, j) \gets \mathsf{Out}$
    \If{$j = \length(F)$} 
        \State \Return $(F, (x)), \alpha$ \label{step:happy_fan_viz}
    \EndIf
    \medskip
    \If{$M(\phi, x) \cap C  = \0$}
        \State \Return $\mathsf{FAIL}$ \label{step:fail_x_alpha}
    \EndIf
    \medskip
    \State $\beta \gets \min M(\phi, x) \cap C$ \label{step:alpha}
    \State Let $P$ be the $\alpha\beta$-path of length at most $\ell$ starting at $x$ \label{step:path_def}
    \State \Return $(F, P), \alpha$.
\end{algorithmic}
\end{breakablealgorithm}
\vspace{5pt}

Note that we only compute a path of length at most $\ell$ at Step~\ref{step:path_def}.
Due to the structure of $M(\phi, \cdot)$, we conclude that we may implement Algorithm~\ref{alg:viz} in $O(|C|^2 + \ell)$ time.

The next subprocedure we will describe is an augmenting procedure for Vizing chains $(F, P)$.
The algorithm takes as input a partial coloring $\phi$, a Vizing chain $(F, P)$, and a color $\alpha$, and outputs a partial edge-coloring $\psi$ obtained by augmenting $\phi$ with $(F, P)$.
We say $\psi = \aug(\phi, (F, P))$.

\vspace{5pt}
\begin{breakablealgorithm}\small
\caption{Augmenting Algorithm}\label{alg:aug}
\begin{flushleft}
\textbf{Input}: A proper partial coloring $\phi$, a Vizing chain $(F, P)$, and a color $\alpha$. \\
\textbf{Output}: A partial coloring $\psi$ obtained by augmenting $\phi$ with $(F, P)$.
\end{flushleft}

\begin{algorithmic}[1]
    \State Let $F = (x, y_0, y_1, \ldots, y_{k-1})$ and $P = (x, x_1, \ldots, x_s)$.
    \If{$\exists j$ such that $y_j = x_1$}
        \State $F' \gets (x, y_0, \ldots, y_{j-1})$.
    \Else
        \State $F' \gets F$
    \EndIf
    \State Obtain $\psi'$ from $\phi$ by flipping the path $P$.
    \If{$x_s = \vend(F')$}
        \State Obtain $\psi$ from $\psi'$ by shifting the fan $F$ and coloring the edge $xy_{k-1}$ with $\alpha$.
    \Else
        \State Obtain $\psi$ from $\psi'$ by shifting the fan $F'$ and coloring the edge $xy_{j-1}$ with $\alpha$.
    \EndIf
    \State \Return $\psi$
\end{algorithmic}
\end{breakablealgorithm}
\vspace{5pt}

See Fig.~\ref{fig:cases_aug} for the possible situations in Algorithm~\ref{alg:aug}.
It is easy to see that the runtime is $O(\length(F) + \length(P))$.
Additionally, we will show that running Algorithm~\ref{alg:aug} with input $(\phi, (F, P), \alpha)$ outputs a proper partial coloring when $((F, P), \alpha)$ is obtained as the output of Algorithm~\ref{alg:viz} and $\length(P) < \ell$.

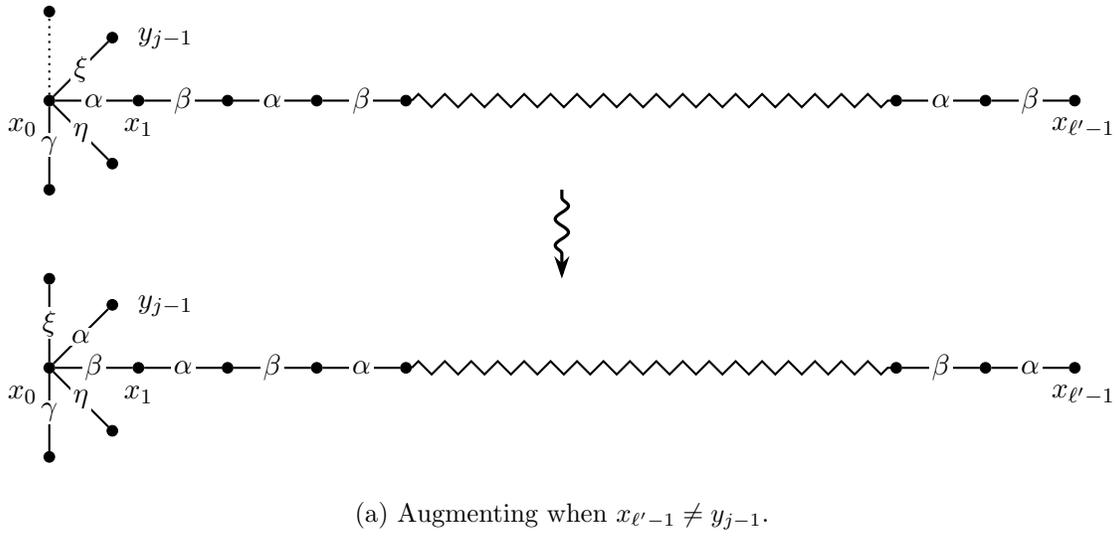
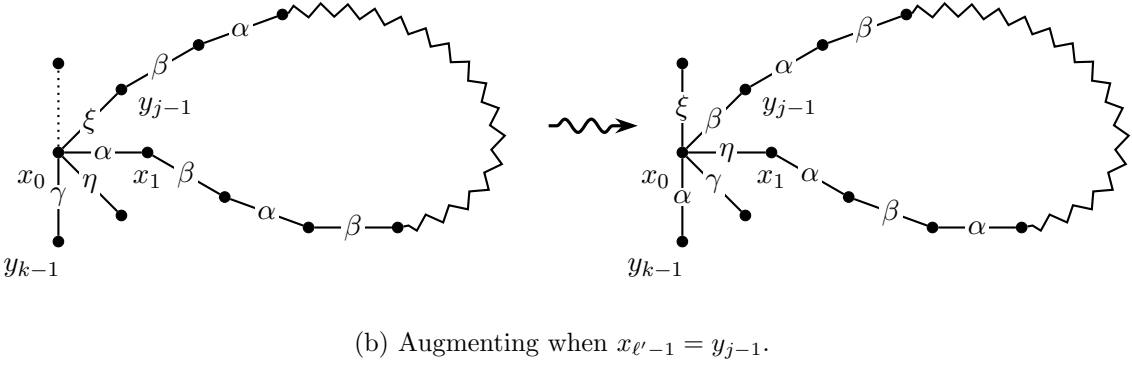
\begin{figure}[htb!]
        \begin{subfigure}[t]{\textwidth}
		\centering
		\begin{tikzpicture}[scale=1.185]
            \clip (-0.5, 4.5) rectangle (12, -1.3);
		\node[circle,fill=black,draw,inner sep=0pt,minimum size=4pt] (a) at (0,0) {};
        	\path (a) ++(0:1) node[circle,fill=black,draw,inner sep=0pt,minimum size=4pt] (b) {};
        	\path (b) ++(0:1) node[circle,fill=black,draw,inner sep=0pt,minimum size=4pt] (c) {};
                \path (c) ++(0:1) node[circle,fill=black,draw,inner sep=0pt,minimum size=4pt] (d) {};
                \path (d) ++(0:1) node[circle,fill=black,draw,inner sep=0pt,minimum size=4pt] (e) {};
                \path (e) ++(0:5.5) node[circle,fill=black,draw,inner sep=0pt,minimum size=4pt] (f) {};
                \path (f) ++(0:1) node[circle,fill=black,draw,inner sep=0pt,minimum size=4pt] (g) {};
                \path (g) ++(0:1) node[circle,fill=black,draw,inner sep=0pt,minimum size=4pt] (h) {};
        	
        	\draw[thick, decorate,decoration=zigzag] (e) to (f);

            \path (a) ++(45:1) node[circle,fill=black,draw,inner sep=0pt,minimum size=4pt] (i) {};
            \path (a) ++(90:1) node[circle,fill=black,draw,inner sep=0pt,minimum size=4pt] (j) {};
            \path (a) ++(-45:1) node[circle,fill=black,draw,inner sep=0pt,minimum size=4pt] (k) {};
            \path (a) ++(-90:1) node[circle,fill=black,draw,inner sep=0pt,minimum size=4pt] (l) {};

            \node at (-0.3, -0.3) {$x_0$};
            \node at (11.6, -0.3) {$x_{\ell'-1}$};
            \node at (1, -0.3) {$x_1$};
            \node at (1.3, 0.7) {$y_{j-1}$};
        	
        	\draw[thick] (a) to node[midway,inner sep=1pt,outer sep=1pt,minimum size=4pt,fill=white] {$\beta$} (b) to node[midway,inner sep=1pt,outer sep=1pt,minimum size=4pt,fill=white] {$\alpha$} (c) to node[midway,inner sep=1pt,outer sep=1pt,minimum size=4pt,fill=white] {$\beta$} (d) to node[midway,inner sep=1pt,outer sep=1pt,minimum size=4pt,fill=white] {$\alpha$} (e) (f) to node[midway,inner sep=1pt,outer sep=1pt,minimum size=4pt,fill=white] {$\beta$} (g) to node[midway,inner sep=1pt,outer sep=1pt,minimum size=4pt,fill=white] {$\alpha$} (h) (a) to node[midway,inner sep=1pt,outer sep=1pt,minimum size=4pt,fill=white] {$\alpha$} (i);

                \draw[thick] (a) to node[midway,inner sep=1pt,outer sep=1pt,minimum size=4pt,fill=white] {$\xi$} (j) (k) to node[midway,inner sep=1pt,outer sep=1pt,minimum size=4pt,fill=white] {$\eta$} (a) to node[midway,inner sep=1pt,outer sep=1pt,minimum size=4pt,fill=white] {$\gamma$} (l);

        \begin{scope}[yshift=3cm]
            \node[circle,fill=black,draw,inner sep=0pt,minimum size=4pt] (a) at (0,0) {};
        	\path (a) ++(0:1) node[circle,fill=black,draw,inner sep=0pt,minimum size=4pt] (b) {};
        	\path (b) ++(0:1) node[circle,fill=black,draw,inner sep=0pt,minimum size=4pt] (c) {};
                \path (c) ++(0:1) node[circle,fill=black,draw,inner sep=0pt,minimum size=4pt] (d) {};
                \path (d) ++(0:1) node[circle,fill=black,draw,inner sep=0pt,minimum size=4pt] (e) {};
                \path (e) ++(0:5.5) node[circle,fill=black,draw,inner sep=0pt,minimum size=4pt] (f) {};
                \path (f) ++(0:1) node[circle,fill=black,draw,inner sep=0pt,minimum size=4pt] (g) {};
                \path (g) ++(0:1) node[circle,fill=black,draw,inner sep=0pt,minimum size=4pt] (h) {};
        	
        	\draw[thick, decorate,decoration=zigzag] (e) to (f);

            \path (a) ++(45:1) node[circle,fill=black,draw,inner sep=0pt,minimum size=4pt] (i) {};
            \path (a) ++(90:1) node[circle,fill=black,draw,inner sep=0pt,minimum size=4pt] (j) {};
            \path (a) ++(-45:1) node[circle,fill=black,draw,inner sep=0pt,minimum size=4pt] (k) {};
            \path (a) ++(-90:1) node[circle,fill=black,draw,inner sep=0pt,minimum size=4pt] (l) {};

            \node at (-0.3, -0.3) {$x_0$};
            \node at (11.6, -0.3) {$x_{\ell'-1}$};
            \node at (1, -0.3) {$x_1$};
            \node at (1.3, 0.7) {$y_{j-1}$};

            \draw[thick] (a) to node[midway,inner sep=1pt,outer sep=1pt,minimum size=4pt,fill=white] {$\xi$} (i) (k) to node[midway,inner sep=1pt,outer sep=1pt,minimum size=4pt,fill=white] {$\eta$} (a) to node[midway,inner sep=1pt,outer sep=1pt,minimum size=4pt,fill=white] {$\gamma$} (l);
            \draw[thick, dotted] (a) -- (j);
        	
        	\draw[thick] (a) to node[midway,inner sep=1pt,outer sep=1pt,minimum size=4pt,fill=white] {$\alpha$} (b) to node[midway,inner sep=1pt,outer sep=1pt,minimum size=4pt,fill=white] {$\beta$} (c) to node[midway,inner sep=1pt,outer sep=1pt,minimum size=4pt,fill=white] {$\alpha$} (d) to node[midway,inner sep=1pt,outer sep=1pt,minimum size=4pt,fill=white] {$\beta$} (e) (f) to node[midway,inner sep=1pt,outer sep=1pt,minimum size=4pt,fill=white] {$\alpha$} (g) to node[midway,inner sep=1pt,outer sep=1pt,minimum size=4pt,fill=white] {$\beta$} (h);

        \end{scope}

        \begin{scope}[yshift=2.5cm]
            \draw[-{Stealth[length=3mm,width=2mm]},very thick,decoration = {snake,pre length=3pt,post length=7pt,},decorate] (5.75,-0.5) -- (5.75,-1.5);
        \end{scope}
        	
		\end{tikzpicture}
		\caption{Augmenting when $x_{\ell' - 1} \neq y_{j-1}$.}\label{subfig:case1}
	\end{subfigure}%
	\qquad%
	\begin{subfigure}[b]{\textwidth}
		\centering
		\begin{tikzpicture}[scale=1.185]
            \clip (-7.8, 2.3) rectangle (5.1, -1.8);
            \begin{scope}
		\node[circle,fill=black,draw,inner sep=0pt,minimum size=4pt] (a) at (0,0) {};
        	\path (a) ++(0:1) node[circle,fill=black,draw,inner sep=0pt,minimum size=4pt] (b) {};
        	\path (b) ++(-30:1) node[circle,fill=black,draw,inner sep=0pt,minimum size=4pt] (c) {};
                \path (c) ++(-20:1) node[circle,fill=black,draw,inner sep=0pt,minimum size=4pt] (d) {};
                \path (d) ++(0:1) node[circle,fill=black,draw,inner sep=0pt,minimum size=4pt] (e) {};

            \path (a) ++(45:1) node[circle,fill=black,draw,inner sep=0pt,minimum size=4pt] (i) {};
            \path (a) ++(90:1) node[circle,fill=black,draw,inner sep=0pt,minimum size=4pt] (j) {};
            \path (a) ++(-45:1) node[circle,fill=black,draw,inner sep=0pt,minimum size=4pt] (k) {};
            \path (a) ++(-90:1) node[circle,fill=black,draw,inner sep=0pt,minimum size=4pt] (l) {};

            \path (i) ++(30:1) node[circle,fill=black,draw,inner sep=0pt,minimum size=4pt] (f) {};
            \path (f) ++(20:1) node[circle,fill=black,draw,inner sep=0pt,minimum size=4pt] (g) {};

            \node at (-0.3, -0.3) {$x_0$};
            \node at (1.2, 0.5) {$y_{j-1}$};
            \node at (1, -0.3) {$x_1$};
            \node at (-0.3, -1.3) {$y_{k-1}$};

            \draw[thick] (a) to node[midway,inner sep=1pt,outer sep=1pt,minimum size=4pt,fill=white] {$\eta$} (b) (k) to node[midway,inner sep=1pt,outer sep=1pt,minimum size=4pt,fill=white] {$\gamma$} (a) to node[midway,inner sep=1pt,outer sep=1pt,minimum size=4pt,fill=white] {$\xi$} (j);
            
            \draw[thick, decorate, decoration=zigzag] (g) to[out=10,in=10, looseness=2] (e);
        	
        	\draw[thick] (a) to node[midway,inner sep=1pt,outer sep=1pt,minimum size=4pt,fill=white] {$\beta$} (i) (b) to node[midway,inner sep=1pt,outer sep=1pt,minimum size=4pt,fill=white] {$\alpha$} (c) to node[midway,inner sep=1pt,outer sep=1pt,minimum size=4pt,fill=white] {$\beta$} (d) to node[midway,inner sep=1pt,outer sep=1pt,minimum size=4pt,fill=white] {$\alpha$} (e) (g) to node[midway,inner sep=1pt,outer sep=1pt,minimum size=4pt,fill=white] {$\beta$} (f) to node[midway,inner sep=1pt,outer sep=1pt,minimum size=4pt,fill=white] {$\alpha$} (i) (a) to node[midway,inner sep=1pt,outer sep=1pt,minimum size=4pt,fill=white] {$\alpha$} (l);

         \end{scope}

        \begin{scope}[xshift=-7cm]
            \node[circle,fill=black,draw,inner sep=0pt,minimum size=4pt] (a) at (0,0) {};
        	\path (a) ++(0:1) node[circle,fill=black,draw,inner sep=0pt,minimum size=4pt] (b) {};
        	\path (b) ++(-30:1) node[circle,fill=black,draw,inner sep=0pt,minimum size=4pt] (c) {};
                \path (c) ++(-20:1) node[circle,fill=black,draw,inner sep=0pt,minimum size=4pt] (d) {};
                \path (d) ++(0:1) node[circle,fill=black,draw,inner sep=0pt,minimum size=4pt] (e) {};

            \path (a) ++(45:1) node[circle,fill=black,draw,inner sep=0pt,minimum size=4pt] (i) {};
            \path (a) ++(90:1) node[circle,fill=black,draw,inner sep=0pt,minimum size=4pt] (j) {};
            \path (a) ++(-45:1) node[circle,fill=black,draw,inner sep=0pt,minimum size=4pt] (k) {};
            \path (a) ++(-90:1) node[circle,fill=black,draw,inner sep=0pt,minimum size=4pt] (l) {};

            \path (i) ++(30:1) node[circle,fill=black,draw,inner sep=0pt,minimum size=4pt] (f) {};
            \path (f) ++(20:1) node[circle,fill=black,draw,inner sep=0pt,minimum size=4pt] (g) {};

            \node at (-0.3, -0.3) {$x_0$};
            \node at (1.2, 0.5) {$y_{j-1}$};
            \node at (1, -0.3) {$x_1$};
            \node at (-0.3, -1.3) {$y_{k-1}$};

            \draw[thick] (a) to node[midway,inner sep=1pt,outer sep=1pt,minimum size=4pt,fill=white] {$\xi$} (i) (k) to node[midway,inner sep=1pt,outer sep=1pt,minimum size=4pt,fill=white] {$\eta$} (a) to node[midway,inner sep=1pt,outer sep=1pt,minimum size=4pt,fill=white] {$\gamma$} (l);
            \draw[thick, dotted] (a) -- (j);
            
            \draw[thick, decorate, decoration=zigzag] (g) to[out=10,in=10, looseness=2] (e);
        	
        	\draw[thick] (a) to node[midway,inner sep=1pt,outer sep=1pt,minimum size=4pt,fill=white] {$\alpha$} (b) to node[midway,inner sep=1pt,outer sep=1pt,minimum size=4pt,fill=white] {$\beta$} (c) to node[midway,inner sep=1pt,outer sep=1pt,minimum size=4pt,fill=white] {$\alpha$} (d) to node[midway,inner sep=1pt,outer sep=1pt,minimum size=4pt,fill=white] {$\beta$} (e) (g) to node[midway,inner sep=1pt,outer sep=1pt,minimum size=4pt,fill=white] {$\alpha$} (f) to node[midway,inner sep=1pt,outer sep=1pt,minimum size=4pt,fill=white] {$\beta$} (i);
        \end{scope}

        \draw[-{Stealth[length=3mm,width=2mm]},very thick,decoration = {snake,pre length=3pt,post length=7pt,},decorate] (-1.5,0.3) -- (-0.5,0.3);
		\end{tikzpicture}
		\caption{Augmenting when $x_{\ell' - 1} = y_{j-1}$.}\label{subfig:case2}
	\end{subfigure}
	\caption{Possible situations of Algorithm~\ref{alg:aug}.}\label{fig:cases_aug}
\end{figure}

We are now ready to describe our algorithm.
The formal description is in Algorithm~\ref{alg:color}, but we first provide an informal overview.
The algorithm takes as input a proper partial coloring $\phi$, an uncolored edge $e = xy$, a choice of a pivot vertex $x \in e$, and two parameters $\kappa$ and $\ell$, and outputs a proper partial edge-coloring $\psi$ such that 
\begin{itemize}
    \item either $\dom(\psi) = \dom(\phi) \cup \set{e}$ and $\flg(\psi) = \flg(\phi)$, or
    \item there is an edge $f \in \dom(\phi) \cup \set{e}$ such that $\dom(\psi) = (\dom(\phi) \cup \set{e}) \setminus \set{f}$ and $\flg(\psi) = \flg(\phi) \cup \set{f}$.
\end{itemize}

As mentioned in \S\ref{subsection: proof overview}, we will do so by repeatedly ``shifting'' the uncolored edge $e$ by invoking Algorithm~\ref{alg:viz} to construct Vizing chains using disjoint palettes until one of the following holds:
\begin{enumerate}[label=(T\arabic*), wide]
    \item\label{item: fail vizing} Algorithm~\ref{alg:viz} fails,
    \item\label{item: success vizing} the path in the output of Algorithm~\ref{alg:viz} has length $< \ell$, or
    \item\label{item: empty palette vizing} we max out on iterations.
\end{enumerate}
Let us now describe the shifting procedure.

Let $\phi_t$ denote the coloring before the $t$-th call to Algorithm~\ref{alg:viz}, let $e_t = x_ty_t$ denote the uncolored and unflagged edge, and let $x_t$ denote the choice of the pivot vertex.
First, we define $C_t \subseteq [q]$ by sampling with replacement $\kappa$ times from $[(1 + \epsilon/2)\Delta] \setminus \left(\cup_{j < t}C_j\right)$.
We will run Algorithm~\ref{alg:viz} with input $(\phi_t, e_t, x_t, C_t, \ell)$.
If the output is $\mathsf{FAIL}$, we flag the edge $e_t$ and return $\phi_t$.
If not, let $((F, P), \alpha)$ be the output such that $F = (x_t, y_0, \ldots, y_{k-1})$ and $P = (x_0 = x_t, x_1, \ldots, x_s)$.
If $s < \ell$, then augment $\phi_t$ with $(F, P)$ by applying Algorithm~\ref{alg:aug}, and return the resulting coloring.
If not, let $\ell' \in [\ell]$ be chosen uniformly at random and let $F' = (x, y_0, \ldots, y_{j-1})$, where $j$ is such that $y_j = x_1$.
Uncolor the edge $x_{\ell' - 1}x_{\ell'}$ to obtain the coloring $\psi_t$, and let $P' = (x_0 = x, \ldots, x_{\ell' - 1})$.
In this case, we let $e_{t+1} = x_{\ell' - 1}x_{\ell'}$, $x_{t+1} = x_{\ell' - 1}$, and let $\phi_{t+1}$ be defined by flipping $P'$, then shifting $F'$, and finally coloring $x_ty_{j-1}$ with $\alpha$ (we will show this operation is valid in \S\ref{subsec: proof of correctness}).

\vspace{5pt}
\begin{breakablealgorithm}\small
\caption{Color One}\label{alg:color}
\begin{flushleft}
\textbf{Input}: A proper partial coloring $\phi$, an uncolored edge $e = xy$, a choice of a pivot vertex $x \in e$, and two parameters $\kappa$ and $\ell$. \\
\textbf{Output}: A proper partial edge-coloring $\psi$ such that either $\dom(\psi) = \dom(\phi) \cup \set{e}$ and $\flg(\psi) = \flg(\phi)$, or there is an edge $f \in \dom(\phi) \cup \set{e}$ such that $\dom(\psi) = (\dom(\phi) \cup \set{e}) \setminus \set{f}$ and $\flg(\psi) = \flg(\phi) \cup \set{f}$.
\end{flushleft}

\begin{algorithmic}[1]
    \State $Q \gets [(1+\eps/2)\Delta]$
    \For{$t = 1, \ldots, 100\log \Delta$}
        \State Define $C$ by sampling with replacement $\kappa$ times from $Q$. \label{step:random_sample}
        \State $Q \gets Q \setminus C$
        \State $\mathsf{Out} \gets \hyperref[alg:viz]{\mathsf{VizingChain}}(\phi, e, x, C, \ell)$ 
        \medskip\Comment{Algorithm~\ref{alg:viz}} \label{step:call_to_viz}
        \If{$\mathsf{Out} = \mathsf{FAIL}$}
            \State Flag the edge $e$. \label{step: flag at fail}
            \State \Return $\phi$ \label{step:fail_flag}
        \EndIf
        \medskip
        \State $((F, P), \alpha) \gets \mathsf{Out}$
        \State Let $F = (x, y_0, \ldots, y_{k-1})$ and let $P = (x_0 = x, x_1, \ldots, x_s)$.
        \If{$s < \ell$}
            \State $\phi \gets \hyperref[alg:aug]{\aug}(\phi, (F, P), \alpha)$. \Comment{Algorithm~\ref{alg:aug}} \label{step: aug}
            \State \Return $\phi$
        \Else \label{step:start}
            \State Let $\ell' \in [\ell]$ be an integer chosen uniformly at random. \label{step:random_choice}
            \State Uncolor the edge $x_{\ell'-1}x_{\ell'}$. \label{step:uncolor_edge}
            \State $P' \gets (x_0, \ldots, x_{\ell'-1})$.
            \State Let $j$ be such that $y_j = x_1$ and let $F' = (x, y_0, \ldots, y_{j-1})$.
            \State Update $\phi$ by flipping $P'$, then shifting $F'$, and finally coloring $xy_{j-1}$ with $\alpha$. \label{step:fake_aug}
            \State $e \gets x_{\ell'-1}x_{\ell'}$, \quad $x \gets x_{\ell'-1}$
        \EndIf \label{step:end_for}
    \EndFor
    \medskip
    \State Flag the edge $e$. \label{step: flag at end}
    \State \Return $\phi$
\end{algorithmic}
\end{breakablealgorithm}
\vspace{5pt}

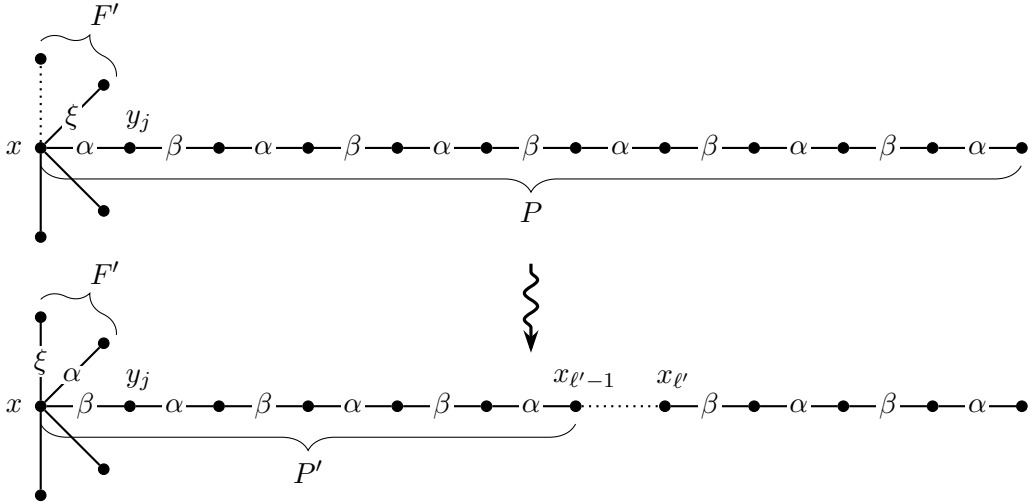
\begin{figure}[htb!]
	\centering
		\begin{tikzpicture}[scale=1.185]
            \node[circle,fill=black,draw,inner sep=0pt,minimum size=4pt] (a) at (0,0) {};
            \path (a) ++(45:1) node[circle,fill=black,draw,inner sep=0pt,minimum size=4pt] (w) {};
            \path (a) ++(-45:1) node[circle,fill=black,draw,inner sep=0pt,minimum size=4pt] (x) {};
            \path (a) ++(90:1) node[circle,fill=black,draw,inner sep=0pt,minimum size=4pt] (y) {};
            \path (a) ++(-90:1) node[circle,fill=black,draw,inner sep=0pt,minimum size=4pt] (z) {};
            
        	\path (a) ++(0:1) node[circle,fill=black,draw,inner sep=0pt,minimum size=4pt] (b) {};
        	\path (b) ++(0:1) node[circle,fill=black,draw,inner sep=0pt,minimum size=4pt] (c) {};
                \path (c) ++(0:1) node[circle,fill=black,draw,inner sep=0pt,minimum size=4pt] (d) {};
                \path (d) ++(0:1) node[circle,fill=black,draw,inner sep=0pt,minimum size=4pt] (e) {};
                \path (e) ++(0:1) node[circle,fill=black,draw,inner sep=0pt,minimum size=4pt] (l) {};
                \path (l) ++(0:1) node[circle,fill=black,draw,inner sep=0pt,minimum size=4pt] (m) {};
                \path (m) ++(0:1) node[circle,fill=black,draw,inner sep=0pt,minimum size=4pt] (n) {};
                \path (n) ++(0:1) node[circle,fill=black,draw,inner sep=0pt,minimum size=4pt] (o) {};
                \path (o) ++(0:1) node[circle,fill=black,draw,inner sep=0pt,minimum size=4pt] (f) {};
                \path (f) ++(0:1) node[circle,fill=black,draw,inner sep=0pt,minimum size=4pt] (g) {};
                \path (g) ++(0:1) node[circle,fill=black,draw,inner sep=0pt,minimum size=4pt] (h) {};
         
        	\draw[thick] (a) to node[midway,inner sep=1pt,outer sep=1pt,minimum size=4pt,fill=white] {$\xi$} (y) (a) to node[midway,inner sep=1pt,outer sep=1pt,minimum size=4pt,fill=white] {$\alpha$} (w) (a) -- (x) (a) -- (z)(a) to node[midway,inner sep=1pt,outer sep=1pt,minimum size=4pt,fill=white] {$\beta$} (b) to node[midway,inner sep=1pt,outer sep=1pt,minimum size=4pt,fill=white] {$\alpha$} (c) to node[midway,inner sep=1pt,outer sep=1pt,minimum size=4pt,fill=white] {$\beta$} (d) to node[midway,inner sep=1pt,outer sep=1pt,minimum size=4pt,fill=white] {$\alpha$} (e) to node[midway,inner sep=1pt,outer sep=1pt,minimum size=4pt,fill=white] {$\beta$} (l) to node[midway,inner sep=1pt,outer sep=1pt,minimum size=4pt,fill=white] {$\alpha$} (m)  (n) to node[midway,inner sep=1pt,outer sep=1pt,minimum size=4pt,fill=white] {$\beta$} (o) to node[midway,inner sep=1pt,outer sep=1pt,minimum size=4pt,fill=white] {$\alpha$} (f) to node[midway,inner sep=1pt,outer sep=1pt,minimum size=4pt,fill=white] {$\beta$} (g) to node[midway,inner sep=1pt,outer sep=1pt,minimum size=4pt,fill=white] {$\alpha$} (h);

                \draw[thick, dotted] (m) -- (n);

            \draw[decoration={brace,amplitude=10pt,mirror}, decorate] (0, -0.2) -- node [midway,below,xshift=0pt,yshift=-10pt] {$P'$} (6,-0.2);

            \draw[decoration={brace,amplitude=10pt}, decorate] (0, 1.2) -- node [midway,below,xshift=10pt,yshift=25pt] {$F'$} (0.85,0.8);

            \node at (6.1, 0.3) {$x_{\ell' - 1}$};
            \node at (7.1, 0.3) {$x_{\ell'}$};
            \node at (-0.3, 0) {$x$};
            \node at (1.1, 0.3) {$y_j$};

        \begin{scope}[yshift=2.9cm]
            \node[circle,fill=black,draw,inner sep=0pt,minimum size=4pt] (a) at (0,0) {};
            \path (a) ++(45:1) node[circle,fill=black,draw,inner sep=0pt,minimum size=4pt] (w) {};
            \path (a) ++(-45:1) node[circle,fill=black,draw,inner sep=0pt,minimum size=4pt] (x) {};
            \path (a) ++(90:1) node[circle,fill=black,draw,inner sep=0pt,minimum size=4pt] (y) {};
            \path (a) ++(-90:1) node[circle,fill=black,draw,inner sep=0pt,minimum size=4pt] (z) {};
            
        	\path (a) ++(0:1) node[circle,fill=black,draw,inner sep=0pt,minimum size=4pt] (b) {};
        	\path (b) ++(0:1) node[circle,fill=black,draw,inner sep=0pt,minimum size=4pt] (c) {};
                \path (c) ++(0:1) node[circle,fill=black,draw,inner sep=0pt,minimum size=4pt] (d) {};
                \path (d) ++(0:1) node[circle,fill=black,draw,inner sep=0pt,minimum size=4pt] (e) {};
                \path (e) ++(0:1) node[circle,fill=black,draw,inner sep=0pt,minimum size=4pt] (l) {};
                \path (l) ++(0:1) node[circle,fill=black,draw,inner sep=0pt,minimum size=4pt] (m) {};
                \path (m) ++(0:1) node[circle,fill=black,draw,inner sep=0pt,minimum size=4pt] (n) {};
                \path (n) ++(0:1) node[circle,fill=black,draw,inner sep=0pt,minimum size=4pt] (o) {};
                \path (o) ++(0:1) node[circle,fill=black,draw,inner sep=0pt,minimum size=4pt] (f) {};
                \path (f) ++(0:1) node[circle,fill=black,draw,inner sep=0pt,minimum size=4pt] (g) {};
                \path (g) ++(0:1) node[circle,fill=black,draw,inner sep=0pt,minimum size=4pt] (h) {};

            \draw[thick, dotted] (a) -- (y);
        	\draw[thick] (a) to node[midway,inner sep=1pt,outer sep=1pt,minimum size=4pt,fill=white] {$\xi$} (w) (a) -- (x) (a) -- (z) (a) to node[midway,inner sep=1pt,outer sep=1pt,minimum size=4pt,fill=white] {$\alpha$} (b) to node[midway,inner sep=1pt,outer sep=1pt,minimum size=4pt,fill=white] {$\beta$} (c) to node[midway,inner sep=1pt,outer sep=1pt,minimum size=4pt,fill=white] {$\alpha$} (d) to node[midway,inner sep=1pt,outer sep=1pt,minimum size=4pt,fill=white] {$\beta$} (e) to node[midway,inner sep=1pt,outer sep=1pt,minimum size=4pt,fill=white] {$\alpha$} (l) to node[midway,inner sep=1pt,outer sep=1pt,minimum size=4pt,fill=white] {$\beta$} (m) to node[midway,inner sep=1pt,outer sep=1pt,minimum size=4pt,fill=white] {$\alpha$} (n) to node[midway,inner sep=1pt,outer sep=1pt,minimum size=4pt,fill=white] {$\beta$} (o) to node[midway,inner sep=1pt,outer sep=1pt,minimum size=4pt,fill=white] {$\alpha$} (f) to node[midway,inner sep=1pt,outer sep=1pt,minimum size=4pt,fill=white] {$\beta$} (g) to node[midway,inner sep=1pt,outer sep=1pt,minimum size=4pt,fill=white] {$\alpha$} (h);

            \draw[decoration={brace,amplitude=10pt,mirror}, decorate] (0, -0.2) -- node [midway,below,xshift=0pt,yshift=-10pt] {$P$} (11,-0.2);
            \draw[decoration={brace,amplitude=10pt}, decorate] (0, 1.2) -- node [midway,below,xshift=10pt,yshift=25pt] {$F'$} (0.85,0.8);
            \node at (-0.3, 0) {$x$};
            \node at (1.1, 0.3) {$y_j$};

        \end{scope}

        \begin{scope}[yshift=2.1cm]
            \draw[-{Stealth[length=3mm,width=2mm]},very thick,decoration = {snake,pre length=3pt,post length=7pt,},decorate] (5.5,-0.5) -- (5.5,-1.5);
        \end{scope}
        	
        \end{tikzpicture}
		\caption{Shifting an uncolored edge from Steps~\ref{step:start}--\ref{step:end_for} of Algorithm~\ref{alg:color}.}\label{fig:flag}
\end{figure}

Note the following for each iteration:
\begin{align*}
    |Q| \,\geq\, (1 + \eps/2)\Delta - 100\,\kappa\,\log \Delta.
\end{align*}
For our choice of $\kappa$, we will show the above is always at least $(1 + \eps/100)\Delta$.
As $\length(P') \leq \ell$ and by the earlier observations made on the runtime of Algorithms~\ref{alg:viz} and \ref{alg:aug}, it follows that the runtime of \hyperref[alg:color]{Color One} is
\begin{align}\label{eq: runtime}
    O\left(m\,(\kappa^2 + \ell)\,\log \Delta\right).
\end{align}

\subsection{Proof of Correctness}\label{subsec: proof of correctness}

We begin with a few lemmas from \cite{dhawan2024simple} concerning Algorithms~\ref{alg:fan} and \ref{alg:viz}.
The following describes certain properties of the output of Algorithm~\ref{alg:fan}:

\begin{lemma}[{\cite[Lemma 3.1]{dhawan2024simple}}]\label{lemma:fan}
    Consider running Algorithm~\ref{alg:fan} on input $(\phi, xy, x, C)$.
    Suppose the algorithm does not fail and outputs $(F, \alpha, j)$ where $F = (x, y_0 = y, y_1, \ldots, y_{k-1})$.
    Let $\beta \in M(\phi, x)$ be arbitrary and let $P$ be the maximal $\alpha\beta$-path under $\phi$ starting at $x$.
    One of the following must hold:
    \begin{enumerate}
        \item either $j = k$ and the edge $xy_{k-1}$ is $\psi$-happy, where $\psi$ is the coloring obtained from $\phi$ by shifting $F$, or
        \item the edge $xy_{j-1}$ is $\psi$-happy, where $\psi$ is the coloring obtained from $\phi$ by flipping $P$ and then shifting the fan $F' = (x, y_0 = y, y_1, \ldots, y_{j-1})$, or
        \item the edge $xy_{k-1}$ is $\psi$-happy, where $\psi$ is the coloring obtained from $\phi$ by flipping $P$ and then shifting $F$.
    \end{enumerate}
    Moreover, the happy edge can be colored with the color $\alpha$.
\end{lemma}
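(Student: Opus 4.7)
The plan is a case analysis on the value $j$ returned by \hyperref[alg:fan]{Make Fan}. The only two return points of the main loop furnish different data about $\alpha$: exiting at Step~\ref{step:happy_fan} (when $z=\blank$) forces $j=k$ and certifies $\alpha\in M(\phi,x)\cap M(\phi,y_{k-1})$; exiting at Step~\ref{step:check_fan} (when $z=y_j$ for some $1\le j\le k-2$) gives $j<k$, $\phi(xy_j)=\alpha$, and $\alpha\in M(\phi,y_{k-1})$. In the second case I also recover $\alpha\in M(\phi,y_{j-1})$ by inspecting the iteration at which $y_j$ was appended to $F$, at which point the algorithm set $\phi(xy_j)=\min M(\phi,y_{j-1})\cap C$.

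When $j=k$, I would argue directly: shifting $F$ only alters edges incident to $x$, the colors newly placed at $x$ are $\phi(xy_1),\ldots,\phi(xy_{k-1})$ (none equal to $\alpha$, since $\alpha\in M(\phi,x)$), and the edge $xy_{k-1}$ becomes uncolored. Hence $\alpha\in M(\psi,x)\cap M(\psi,y_{k-1})$ and $xy_{k-1}$ is $\psi$-happy with color $\alpha$, yielding conclusion~(1).

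When $j<k$, the maximal $\alpha\beta$-path $P$ begins with the edge $xy_j$, the unique $\alpha$-edge at $x$. The crucial structural observation is that $y_{j-1}$ and $y_{k-1}$ each have $\alpha$ missing under $\phi$, so neither can be an \emph{interior} vertex of $P$; each is therefore either the non-$x$ endpoint of $P$ or not on $P$ at all. I case-split on the non-$x$ endpoint $x_s$ of $P$. If $x_s\neq y_{j-1}$, then $y_{j-1}\notin V(P)$ altogether; flipping $P$ leaves $\alpha\in M(\phi_P,y_{j-1})$ and preserves the validity of $F'=(x,y_0,\ldots,y_{j-1})$ as a fan (only $xy_j$, which lies outside $F'$, changes color at $x$), and a short check shows $xy_{j-1}$ is $\psi$-happy with $\alpha$ after shifting $F'$, giving conclusion~(2). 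If $x_s=y_{j-1}$, then $y_{k-1}\notin V(P)$ and $\alpha\in M(\phi_P,y_{k-1})$; a short check then delivers conclusion~(3) by shifting the full fan $F$.

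The central technical hurdle is verifying that $F$ is a valid fan under $\phi_P$ in conclusion~(3), since $\phi_P(xy_j)=\beta$ imposes the requirement $\beta\in M(\phi_P,y_{j-1})$. This is exactly where the case-split pays off. If $P$ ends at $y_{j-1}$, then its terminal edge there must be colored $\beta$ (it cannot be colored $\alpha$ because $\alpha\in M(\phi,y_{j-1})$), so after flipping that edge becomes $\alpha$ and $\beta$ is freed at $y_{j-1}$, as required. When $P$ does not end at $y_{j-1}$, validity of $F$ may genuinely fail; this is precisely why conclusion~(2), which uses the shorter fan $F'$ whose validity is automatic because the flip changes no edge incident to $x$ with index in $\{1,\dots,j-1\}$, is the appropriate fallback.
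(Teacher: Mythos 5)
Your proof is correct and follows essentially the same route as the argument this paper defers to \cite{dhawan2024simple}: the case split on $j=k$ versus $j<k$, the observation that vertices missing $\alpha$ cannot be interior to $P$, and the further split on whether $P$ terminates at $y_{j-1}$ (which frees $\beta$ there and validates the full fan $F$, with the shorter fan $F'$ as the fallback) are exactly the intended steps. The only glossed detail --- that fan validity survives the flip at indices other than $j$ because the fan colors $\phi(xy_i)$ differ from both $\alpha$ and $\beta$ --- is routine and consistent with what you wrote.
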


The next lemma describes certain properties of the output of Algorithm~\ref{alg:viz}:

\begin{lemma}[{\cite[Lemma 3.2]{dhawan2024simple}}]\label{lemma:viz}
    Consider running Algorithm~\ref{alg:viz} on input $(\phi, xy, x, C, \ell)$.
    Suppose the algorithm does not fail and outputs $((F, P), \alpha)$ such that $F = (x, y_0=y, \ldots, y_{k-1})$ and $P = (x_0 = x, \ldots, x_s)$ is an $\alpha\beta$-path.
    One of the following must hold for $j$ such that $y_j = x_1$ (if $s\geq 1$):
    \begin{enumerate}
        \item either $P = (x)$ and the edge $xy_{k-1}$ is $\psi$-happy, where $\psi$ is the coloring obtained from $\phi$ by shifting $F$, or
        \item $\length(P) < \ell$ and the edge $xy_{j-1}$ is $\psi$-happy, where $\psi$ is the coloring obtained from $\phi$ by flipping $P$ and then shifting $F' = (x, y_0 = y, y_1, \ldots, y_{j-1})$, or
        \item $\length(P) < \ell$ and the edge $xy_{k-1}$ is $\psi$-happy, where $\psi$ is the coloring obtained from $\phi$ by flipping $P$ and then shifting $F$, or
        \item $\length(P) = \ell$.
    \end{enumerate}
    Moreover, the happy edge may be colored with the color $\alpha$.
\end{lemma}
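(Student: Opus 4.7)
The plan is to reduce this lemma to Lemma~\ref{lemma:fan} by carefully tracing through the control flow of Algorithm~\ref{alg:viz} and distinguishing the ways in which the algorithm can terminate without failing.

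Since Algorithm~\ref{alg:viz} does not fail, the call to $\mathsf{MakeFan}$ at Step~\ref{step:alg_fan} must return a tuple $(F, \alpha, j)$ with $F = (x, y_0 = y, \ldots, y_{k-1})$ and $\alpha \in C \cap M(\phi, y_{k-1}) \cap M(\phi, y_{j-1})$. First I would handle the case $j = k$: in that branch Algorithm~\ref{alg:viz} returns $((F, (x)), \alpha)$ at Step~\ref{step:happy_fan_viz}, so $P = (x)$ and $s = 0$; conclusion (1) then follows immediately from the first alternative of Lemma~\ref{lemma:fan}.

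If $j < k$, the algorithm proceeds past Step~\ref{step:happy_fan_viz}. Since by assumption it does not fail at Step~\ref{step:fail_x_alpha}, the set $M(\phi, x) \cap C$ is nonempty, so $\beta \defeq \min M(\phi, x) \cap C$ is well-defined. Note that $\beta \in M(\phi, x)$ is exactly the type of color to which Lemma~\ref{lemma:fan} applies. Let $P^\ast$ denote the maximal $\alpha\beta$-path starting at $x$ under $\phi$, and let $P$ be the truncation at Step~\ref{step:path_def}, so $P$ consists of the first $\min\{\length(P^\ast), \ell\}$ edges of $P^\ast$. If $\length(P) = \ell$, we are in conclusion (4) and there is nothing more to prove.

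Otherwise $\length(P) < \ell$, which forces $P = P^\ast$, i.e., $P$ is the full maximal $\alpha\beta$-path starting at $x$. Since $s \geq 1$ (the algorithm only reached this branch because $j < k$, and since $\beta \in M(\phi,x)$, at least one edge of color $\alpha$ is incident to $x$ through $y_{j-1}$ or $y_{k-1}$), we may apply alternative (2) or (3) of Lemma~\ref{lemma:fan} depending on whether the path ``attaches'' at $y_{j-1}$ or $y_{k-1}$; one observes that $y_j = x_1$ by the definition of the $\alpha\beta$-path (the edge $xy_j$ has color $\alpha = \phi(x x_1)$, and properness forces $y_j = x_1$). The corresponding happy edge and the fan $F'$ or $F$ that Lemma~\ref{lemma:fan} produces give conclusions (2) and (3) respectively. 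In all branches, the stated color $\alpha$ is the one witnessing happiness, completing the case analysis.

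The only mildly delicate point is verifying that when $\length(P) < \ell$ we indeed have the maximal $\alpha\beta$-path, so that Lemma~\ref{lemma:fan} applies verbatim; once this is observed the rest is routine bookkeeping translating the three alternatives of Lemma~\ref{lemma:fan} into the four alternatives of the present lemma.
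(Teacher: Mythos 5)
Your proposal is correct and follows essentially the route the paper intends: the paper states Lemma~\ref{lemma:viz} as an imported result of \cite{dhawan2024simple}, proved there exactly by tracing the control flow of Algorithm~\ref{alg:viz} (the $j=k$ branch, the $\length(P)=\ell$ branch, and otherwise reduction to Lemma~\ref{lemma:fan} with the maximal $\alpha\beta$-path, using properness to identify $y_j=x_1$). The only nitpicks are cosmetic: when $j=k$ you should note directly that $\alpha\in M(\phi,x)\cap M(\phi,y_{k-1})$ (MakeFan returned at Step~\ref{step:happy_fan}) so shifting $F$ makes $xy_{k-1}$ happy, rather than inferring which alternative of Lemma~\ref{lemma:fan} fires, and $s\geq 1$ in the $j<k$ case is because $\phi(xy_j)=\alpha$ means an $\alpha$-colored edge is incident to $x$.
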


We obtain the following lemma as a corollary to Lemma~\ref{lemma:viz}:

\begin{lemma}\label{lemma:aug} 
    Let $((F, P), \alpha)$ be the output after running Algorithm~\ref{alg:viz} on input $(\phi, xy, x, C, \ell)$.
    If $\length(P) < \ell$, then running Algorithm~\ref{alg:aug} on input $(\phi, (F, P), \alpha)$ produces a proper partial edge-coloring $\psi$ in time $O(|C|^2 + \ell)$.
\end{lemma}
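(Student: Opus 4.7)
The plan is to deduce Lemma~\ref{lemma:aug} almost directly from Lemma~\ref{lemma:viz}, together with a bookkeeping argument matching the three substantive cases of that lemma to the two branches of Algorithm~\ref{alg:aug}. Since $\length(P) < \ell$, case (4) of Lemma~\ref{lemma:viz} is ruled out, so one of cases (1), (2), or (3) applies. In each of those cases, Lemma~\ref{lemma:viz} explicitly describes a proper partial coloring $\psi$ obtained from $\phi$ by flipping $P$ (vacuously, if $P = (x)$) and then shifting either the full fan $F$ or the truncated prefix $F' = (x, y_0, \ldots, y_{j-1})$, after which the newly freed edge (either $xy_{k-1}$ or $xy_{j-1}$) can be legitimately colored $\alpha$.

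What remains is to check that Algorithm~\ref{alg:aug} produces exactly this $\psi$. The algorithm first decides whether $x_1$ belongs to the fan $F$: if $y_j = x_1$ for some $j$, it truncates to $F' = (x, y_0, \ldots, y_{j-1})$; otherwise it sets $F' = F$. This matches the choice of fan in Lemma~\ref{lemma:viz} since the vertex $x_1$ (the $\alpha$-neighbor of $x$, if $s\geq 1$) coincides with $y_j$ precisely when the path $P$ begins by retreating along an edge of $F$, which is exactly the situation of case (2). Next, the algorithm always flips $P$, which is vacuous when $P = (x)$ (case (1)) and faithful to Lemma~\ref{lemma:viz} otherwise. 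The final dichotomy on whether $x_s = \vend(F')$ selects between shifting $F$ versus $F'$: if $x_s = \vend(F')$ then shifting $F$ (after coloring $xy_{k-1}$ with $\alpha$) is valid since $\alpha$ is freed at $\vend(F)$ by the flip, matching case (3); otherwise only $F'$ is shifted and $xy_{j-1}$ receives $\alpha$, matching case (2). That the resulting $\psi$ is proper is now immediate from Lemma~\ref{lemma:viz}, since in each case that lemma already asserts properness of the coloring produced by exactly this sequence of operations.

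For the runtime, recall from \S\ref{subsec:data_structures} that shifting a fan $F$ and flipping a path $P$ can each be done in time $O(\length(F))$ and $O(\length(P))$ respectively. Algorithm~\ref{alg:aug} performs one linear scan of $F$ to locate $j$ (taking $O(\length(F))$), a single flip of $P$ (taking $O(\length(P)) = O(\ell)$), one fan shift (taking $O(\length(F))$), and finally assigns the color $\alpha$ to a single edge. Since Algorithm~\ref{alg:fan} ensures $\length(F) \leq |C|+1$, the total cost is $O(|C| + \ell)$, which is clearly in $O(|C|^2 + \ell)$.

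The only place requiring any genuine care is the case match in the middle paragraph: verifying that the two Boolean tests inside Algorithm~\ref{alg:aug} partition correctly across cases (1)--(3) of Lemma~\ref{lemma:viz}. I would handle this by a small exhaustive check, noting in particular that when $P = (x)$ we have $s = 0$, so $x_s = x = \vend(F')$ only when $F' = F$ and the fan has length one, and more generally that the color $\alpha$ (which is missing at $\vend(F)$ by construction in Algorithm~\ref{alg:fan}) remains missing there after the flip unless the flip moves an $\alpha$-edge incident to $\vend(F)$---a situation that can only occur in the truncated case. Everything else is mechanical.
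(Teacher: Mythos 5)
Your proposal is correct and follows essentially the same route as the paper, which presents this lemma as a direct corollary of Lemma~\ref{lemma:viz} combined with the earlier observations that flipping a path and shifting a fan cost $O(\length(P))$ and $O(\length(F)) \leq O(|C|+1)$ respectively. Two small asides in your final check are inaccurate but harmless---since $\vend(F')$ is always a neighbor of $x$, the test $x_s = \vend(F')$ is simply false when $P=(x)$ (it is not tied to the fan having length one), and $\alpha$ is missing at $\vend(F)$ already under $\phi$ by the construction in Algorithm~\ref{alg:fan} (what the flip frees is $\alpha$ at $x$, and, in the branch $x_s=\vend(F')=y_{j-1}$, $\beta$ at $y_{j-1}$, which is what makes the full-fan shift legal)---neither affects your argument, whose properness claims are correctly delegated to Lemma~\ref{lemma:viz}.
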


In particular, the above implies that Step~\ref{step: aug} results in a proper edge-coloring.
Let us now show the same for Step~\ref{step:fake_aug}.

\begin{lemma}\label{lemma: fake_aug}
    Step~\ref{step:fake_aug} results in a proper edge-coloring. 
\end{lemma}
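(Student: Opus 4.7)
The plan is to decompose Step~\ref{step:fake_aug} into atomic modifications and verify properness at each stage. Let $\phi_1$ be the coloring obtained from $\phi$ by uncoloring $x_{\ell'-1}x_{\ell'}$ (Step~\ref{step:uncolor_edge}), $\phi_2$ the coloring obtained from $\phi_1$ by flipping $P'$, $\phi_3$ the coloring obtained from $\phi_2$ by shifting $F'$, and $\phi_4$ the final coloring after setting $\phi_4(xy_{j-1}) = \alpha$. Uncoloring an edge trivially preserves properness, so $\phi_1$ is proper. For $\phi_2$: the path $P'$ is the initial segment of the $\alpha\beta$-path $P$ up to $x_{\ell'-1}$, whose outgoing $P$-edge $x_{\ell'-1}x_{\ell'}$ is now blank, while at the other endpoint $x_0 = x$ we have $\beta \in M(\phi, x)$ by the construction in Algorithm~\ref{alg:viz}. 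Thus $P'$ behaves like a maximal $\alpha\beta$-path under $\phi_1$, and flipping it preserves properness as in \cite[Fact 2.2]{dhawan2024simple}.

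For the transition $\phi_2 \to \phi_3$, I need to verify that $F' = (x, y_0, \ldots, y_{j-1})$ remains a valid fan under $\phi_2$. The only edge of $P'$ incident to $x$ is $x_0x_1 = xy_j$, so none of the edges $xy_0, \ldots, xy_{j-1}$ lies in $P'$; consequently $\phi_2(xy_i) = \phi(xy_i)$ for $0 \leq i \leq j-1$, and in particular $\phi_2(xy_0) = \blank$. Moreover, by tracing Algorithm~\ref{alg:fan}, for $1 \leq i \leq j-1$ the color $\phi(xy_i)$ differs both from $\alpha = \phi(xy_j)$ (by properness of $\phi$) and from $\beta$ (since $\beta \in M(\phi, x)$). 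Since the flip of $P'$ only alters the membership of $\alpha$ and $\beta$ in missing sets, $\phi(xy_i) \in M(\phi, y_{i-1})$ implies $\phi(xy_i) \in M(\phi_2, y_{i-1})$, which is the fan condition. Then \cite[Fact 2.1]{dhawan2024simple} ensures $\phi_3$ is proper.

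Finally, to conclude $\phi_4$ is proper, I need $\alpha \in M(\phi_3, x) \cap M(\phi_3, y_{j-1})$. The flip of $P'$ recolors $xy_j$ from $\alpha$ to $\beta$ and affects no other edge at $x$, giving $\alpha \in M(\phi_2, x)$; the shift of $F'$ preserves the set of colors on edges incident to $x$, so $\alpha \in M(\phi_3, x)$. For $y_{j-1}$: in Algorithm~\ref{alg:fan}, the iteration that recorded $\phi(xy_j) = \alpha$ had current vertex $y_{j-1}$ and set $\eta = \min M(\phi, y_{j-1}) \cap C$, so $\alpha \in M(\phi, y_{j-1})$. Hence no edge at $y_{j-1}$ is colored $\alpha$ under $\phi$, so $y_{j-1}$ cannot be an internal vertex of any maximal $\alpha\beta$-path; in particular $y_{j-1} \notin \set{x_1, \ldots, x_{\ell'-1}}$. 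Combined with $y_{j-1} \neq x$, this gives $y_{j-1} \notin V(P')$, so flipping $P'$ leaves $\alpha$ missing at $y_{j-1}$, and the shift of $F'$ (which only vacates the edge $xy_{j-1}$ at $y_{j-1}$) preserves this. Therefore setting $\phi_4(xy_{j-1}) = \alpha$ yields a proper coloring.

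The main obstacle is the middle step—verifying that $F'$ remains a valid fan under $\phi_2$—since $F'$ and $P'$ may overlap on vertices. The unlocking observation is that the non-path colors $\phi(xy_i)$ for $1 \leq i < j$ all lie outside $\set{\alpha, \beta}$, so they are unaffected by the $\alpha/\beta$-swap performed by the flip.
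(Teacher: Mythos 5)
Your proof is correct and is in substance the same argument the paper uses: the paper's own proof merely observes that $\vend(P') \neq y_{j-1}$ (via the same fact $\alpha \in M(\phi, y_{j-1})$ that you exploit) and then defers the uncolor--flip--shift--recolor verification to the identical argument of \cite[Lemma 3.1(2)]{dhawan2024simple}, which is exactly what you have written out in full. Your two key checks---that the fan colors $\phi(xy_i)$ for $1 \le i < j$ avoid $\set{\alpha,\beta}$ so $F'$ remains a fan after the flip, and that $\alpha \in M(\phi, y_{j-1})$ forces $y_{j-1} \notin V(P')$ so $\alpha$ stays missing at $y_{j-1}$---are precisely the substance of that cited argument, so nothing is missing.
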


\begin{proof}
    Let $\phi$ be the proper edge-coloring, $e$ be the uncolored edge, and $x \in e$ be the pivot vertex at the start of the iteration.
    Additionally, let $C$ be the set of colors sampled at Step~\ref{step:random_sample}.
    Suppose $((F, P), \alpha)$ is the output at Step~\ref{step:call_to_viz} such that $\length(P) = \ell$.
    Let $\ell'$ be the value sampled at Step~\ref{step:random_choice} and let $F'$ and $P'$ be as defined.
    Finally, let $\psi$ be the coloring obtained by uncoloring $x_{\ell' - 1}x_{\ell'}$ at Step~\ref{step:uncolor_edge}.

    If $\ell \leq 1$, we have $\vend(P') \neq y_{j-1}$, trivially.
    For $\ell \geq 2$, we note that $\vend(P') \neq y_{j-1}$ as $\alpha \in M(\phi, y_{j-1})$ and not in $M(\phi, \vend(P'))$.
    The claim now follows by an identical argument to the proof of \cite[Lemma~3.1(2)]{dhawan2024simple}.
\end{proof}

The following result follows by construction and by the preceding lemmas in this section:

\begin{lemma}\label{lemma: correct}
    Let $\psi$ be the output after running Algorithm~\ref{alg:color} with input $(\phi, e, x, \kappa, \ell)$.
    Then, 
    \begin{itemize}
        \item either $\dom(\psi) = \dom(\phi) \cup \set{e}$ and $\flg(\psi) = \flg(\phi)$, or
        \item there is an edge $f \in \dom(\phi) \cup \set{e}$ such that $\dom(\psi) = (\dom(\phi) \cup \set{e}) \setminus \set{f}$ and $\flg(\psi) = \flg(\phi) \cup \set{f}$.
    \end{itemize}
\end{lemma}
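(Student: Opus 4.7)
The plan is to prove the lemma by maintaining an invariant throughout the \textsf{for} loop in Algorithm~\ref{alg:color}. Let $\phi_0 \defeq \phi$ and $e_0 \defeq e$, and for $t \geq 1$ let $\phi_t$ and $e_t$ denote, respectively, the current partial coloring and the current uncolored edge at the start of iteration $t+1$, provided the algorithm has not yet terminated. I would maintain the invariant that $\dom(\phi_t) = (\dom(\phi) \cup \set{e}) \setminus \set{e_t}$, $\flg(\phi_t) = \flg(\phi)$, and $e_t \notin \dom(\phi_t) \cup \flg(\phi_t)$. The base case $t = 0$ is immediate from the hypothesis that $e$ is uncolored and unflagged under $\phi$.

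For the inductive step I would examine the effect of a single shift iteration (Steps~\ref{step:start}--\ref{step:end_for}). Lemma~\ref{lemma: fake_aug} already guarantees that the update produces a proper partial edge-coloring, so the remaining task is purely combinatorial bookkeeping: uncoloring $x_{\ell'-1}x_{\ell'}$ removes it from the domain; flipping the path $P'$ only permutes colors on already-colored edges; shifting the truncated fan $F' = (x, y_0, \ldots, y_{j-1})$ cascades the colors down, leaving $xy_{j-1}$ momentarily blank; and Step~\ref{step:fake_aug} then assigns $\alpha$ to $xy_{j-1}$. The net bookkeeping is that $e_t = xy_0$ acquires the color previously on $xy_1$ (so $e_t$ joins the domain), while $e_{t+1} \defeq x_{\ell'-1}x_{\ell'}$ is the unique new uncolored edge. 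Since no edge is flagged during this iteration, the invariants propagate to $t+1$.

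With the invariant in hand I would finish by enumerating the three ways Algorithm~\ref{alg:color} can return. If Algorithm~\ref{alg:viz} fails at Step~\ref{step:call_to_viz} in iteration $t+1$, we flag $e_t$ and return $\phi_t$; the invariant then yields $\dom(\psi) = (\dom(\phi) \cup \set{e}) \setminus \set{e_t}$ and $\flg(\psi) = \flg(\phi) \cup \set{e_t}$, and since $e_t \in \dom(\phi) \cup \set{e}$ (either $e_t = e$ or $e_t$ was colored in $\phi_{t-1} \subseteq \dom(\phi) \cup \set{e}$ before being uncolored at the previous shift), the second bullet holds with $f = e_t$. If instead $\length(P) < \ell$, we set $\psi \defeq \aug(\phi_t, (F, P), \alpha)$; by Lemma~\ref{lemma:aug} together with Definition~\ref{defn:aug}, augmentation adds $e_t$ to the domain without removing any other colored edge, giving $\dom(\psi) = \dom(\phi) \cup \set{e}$ and $\flg(\psi) = \flg(\phi)$, the first bullet. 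The third exit (the \textsf{for} loop runs through all $100\log\Delta$ iterations) is handled identically to the first.

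The main obstacle is the careful accounting in the inductive step: verifying that among the sequence of flips, shifts, and recolorings making up Step~\ref{step:fake_aug}, exactly one edge ($e_{t+1}$) becomes newly uncolored and exactly one previously uncolored edge ($e_t$) becomes colored, with no other changes to the domain. Properness is already delegated to Lemma~\ref{lemma: fake_aug}, so what remains is the edge-set bookkeeping, which must be carried out with attention to the corner case $\ell' = 1$, where $P'$ is trivial and $e_{t+1} = xy_j$ is adjacent to $e_t = xy_0$ but no path edges are actually flipped.
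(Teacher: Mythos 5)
Your proof is correct and matches the paper's intent: the paper simply asserts that the lemma ``follows by construction and by the preceding lemmas,'' and your explicit loop invariant (domain equals $(\dom(\phi)\cup\set{e})\setminus\set{e_t}$, flags unchanged), together with Lemma~\ref{lemma: fake_aug} for properness and the case analysis of the three exits, is exactly that construction spelled out. The only cosmetic slip is in the case $j=1$, where after shifting $F'=(x,y_0)$ the edge $xy_0$ receives the color $\alpha$ rather than the color previously on $xy_1$; this does not affect the domain/flag bookkeeping.
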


Note that the above implies that $\dom(\phi) \cup \flg(\phi) \subseteq \dom(\psi) \cup \flg(\psi)$.
Furthermore, there is precisely one additional edge in the latter set (the edge $e$).
In particular, this justifies the way we control the set $U$ in Algorithm~\ref{alg:main alg}.

\section{Proof of Theorem~\ref{theo:main_theo}}\label{section: proof}

Let us run Algorithm~\ref{alg:main alg} with $\kappa = \Theta(\log \Delta/\eps)$ and $\ell = \Theta(\kappa^2)$, where the implicit constants are assumed to be sufficiently large.
By these choices of parameters and by \eqref{eq: runtime}, it follows that the \textsf{while} loop takes $O(m\log^3\Delta/\eps^2)$ time.
As we may implement Step~\ref{step:fail_stage2} in $O(m)$ time, the above observation along with Proposition~\ref{prop:greedy} imply Algorithm~\ref{alg:main alg} has running time $O\left(m\log^3\Delta/\eps^2\right)$.
It remains to bound the probability of failure.
The main result of this section is the following:

\begin{proposition}\label{prop:failure}
    Consider running Algorithm~\ref{alg:main alg} with input $G$ and parameters $\kappa$ and $\ell$ as defined at the beginning of this section.
    Then, 
    \[\Pr[\text{{\upshape{we reach Step~\ref{step:fail_stage2}}}}] \,\leq\, \dfrac{1}{\poly(n)}.\]
\end{proposition}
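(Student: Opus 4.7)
The plan is to bound $\deg_{G^*}(v)$ for every fixed vertex $v$ and then union-bound over $V$. Fix $v$ and let $X_i$ be the indicator that $v$ is incident to the edge that becomes flagged during the $i$-th iteration of the \textsf{while} loop of Algorithm~\ref{alg:main alg} (either at Step~\ref{step: flag at fail} or Step~\ref{step: flag at end} inside Algorithm~\ref{alg:color}). Then $\deg_{G^*}(v) = \sum_{i=1}^{m} X_i$, and the goal is to show $\sum_i X_i \le \eps\Delta/6$ w.h.p.

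To bound $\E[X_i \mid \mathcal{F}_{i-1}]$ (where $\mathcal{F}_{i-1}$ is the history through the first $i-1$ iterations), I will mirror the count sketched in \S\ref{subsection: sketch}. Conditioning on $\mathcal F_{i-1}$, there are exactly $m-i+1$ choices for the edge $e$ picked at Step~\ref{step:random_edge}. If flagging occurs after $t$ iterations of the \textsf{for} loop inside \hyperref[alg:color]{Color One}, the fan pivots $x_1,\dots,x_t$ together with the initial $e$ determine the flagged edge: given $x_{j+1}$ and the (at most $\kappa^2$) ordered color pairs from the disjoint palette $C_j$, the vertex $x_j$ is determined, and given $x_1$ there are at most $\Delta$ choices for $e$. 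So the number of tuples $(e,x_1,\dots,x_t)$ with $v \in \{x_t,y_t\}$ is $O(\Delta^2\,\kappa^{2(t-1)})$. For the flagging to happen at Step~\ref{step: flag at fail} in iteration $t$, Algorithm~\ref{alg:viz} must fail on the sampled palette $C_t$; since $|Q| \ge (1+\eps/2)\Delta - 100\kappa\log\Delta \ge (1+\Omega(\eps))\Delta$ throughout (by the choice of $\kappa$ and the hypothesis on $\Delta$), and $|C_t|=\kappa = \Theta(\log\Delta/\eps)$, this event has probability $1/\poly(\Delta)$. Flagging at Step~\ref{step: flag at end} requires all $\Theta(\log\Delta)$ path lengths to exceed $\ell$ \emph{and} the particular $\ell'$-choices to trace the tuple, which contributes an additional factor $(\kappa^2/\ell)^{t-1}$ that is absorbed by taking the hidden constant in $\ell = \Theta(\kappa^2)$ large. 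Combining these, for any history,
\[
\E[X_i \mid \mathcal{F}_{i-1}] \,\le\, \sum_{t=1}^{\Theta(\log\Delta)}\frac{1}{m-i+1}\cdot O\!\left(\frac{\Delta^2}{\poly(\Delta)}\left(\frac{\kappa^2}{\ell}\right)^{t-1}\right)\,\le\,\frac{1}{(m-i+1)\,\poly(\Delta)}.
\]

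Summing over $i$ and using $\sum_{i=1}^m 1/(m-i+1) = O(\log m) = O(\log n)$ and the assumption $\Delta = \Omega(\log n/\eps)$,
\[
\sum_{i=1}^m \E[X_i \mid \mathcal F_{i-1}] \,\le\, \frac{O(\log n)}{\poly(\Delta)} \,\ll\, \eps\Delta.
\]
Since the $X_i$ are bounded and the conditional expectations are controlled by the above uniform bound, the partial sums $S_k - \sum_{i\le k}\E[X_i\mid\mathcal F_{i-1}]$ form a bounded-difference supermartingale. Applying Theorem~\ref{theo:azuma_supermartingale} (the Kuszmaul--Qi version of Azuma quoted in the preliminaries) with deviation $\Theta(\eps\Delta)$ yields $\sum_i X_i \le \eps\Delta/6$ with probability $1 - \exp(-\Omega(\eps\Delta)) = 1 - 1/\poly(n)$, again invoking $\Delta = \Omega(\log n/\eps)$. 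A union bound over the $n$ vertices gives $\Delta(G^*) \le \eps\Delta/6$ w.h.p., which is exactly the complement of the event at Step~\ref{step:fail_stage2}.

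The main obstacle is the concentration step: the $X_i$ are not independent (later iterations depend on the current partial coloring, which is shaped by the earlier randomness), so it is essential to express the argument in terms of conditional expectations given $\mathcal F_{i-1}$ and invoke a supermartingale-style Azuma inequality rather than the classical i.i.d. Chernoff bound. The counting bound on tuples $(e,x_1,\dots,x_t)$ is where one must be careful that the disjointness of the palettes $C_1,\dots,C_{t-1}$ is used correctly so that the colors $\alpha_j,\beta_j$ are forced to lie in $C_j$, which is what makes the $\kappa^2$ factor per shift valid irrespective of the history.
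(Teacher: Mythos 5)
Your overall route is the same as the paper's: per-iteration indicators for ``$v$ is incident to the edge flagged in iteration $i$,'' a conditional bound of the form $\frac{1}{(m-i+1)\poly(\Delta)}$ obtained by counting pivot trajectories and multiplying by the probability of tracing any fixed trajectory, summation of the harmonic series, the Kuszmaul--Qi supermartingale inequality (Theorem~\ref{theo:azuma_supermartingale}), and a union bound over vertices; your split into the two flagging locations (failure of Algorithm~\ref{alg:viz} versus exhausting the $\Theta(\log\Delta)$ shifts) also mirrors Lemma~\ref{lemma: degree of v in G star}.

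The one substantive gap is in the counting step, exactly at the point you flag at the end. You assert that, given $x_{j+1}$ and a pair of colors from $C_j$, the pivot $x_j$ is ``determined'' --- but determined with respect to which coloring? The chain at step $j$ is built under the evolving coloring $\psi_{j-1}$, which depends on the earlier $\ell'$-choices, i.e.\ on the very randomness you are union-bounding over; so before ``number of admissible tuples $\times$ maximal probability per tuple'' is legitimate, one must show that the family of admissible tuples $(e,x_1,\dots,x_t)$ is measurable with respect to the conditioning ($\phi_i$ and the sampled palettes). The paper's Claim~\ref{claim: related} supplies exactly this: consecutive pivots $x_{t-1},x_t$ lie on a common $\alpha\beta$-path under the \emph{fixed} coloring $\phi_i$, with $x_{t-1}$ an endpoint, for some $\alpha,\beta\in C_t$. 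Its proof uses palette disjointness differently from how you state it: the point is not that $\alpha_j,\beta_j\in C_j$ (that is automatic, since Algorithm~\ref{alg:viz} only ever uses colors from its input palette), but that every recoloring performed in iteration $s$ only touches edges whose old and new colors lie in $C_s$, so for $s\neq t$ these recolorings can neither create nor destroy edges colored $\alpha_t$ or $\beta_t$; hence the $\alpha_t\beta_t$-path seen under $\psi_{t-1}$ is already an $\alpha_t\beta_t$-path under $\phi_i$, and $x_{t-1}$ is still an endpoint of the maximal such path. With that claim in hand, your count (at most $2\binom{\kappa}{2}\le\kappa^2$ predecessors per shift, $2\Delta$ choices for the final pivot in $N[v]$, $\Delta$ choices for $e$) and the rest of the argument go through as in the paper; similarly, your $1/\poly(\Delta)$ bound for the failure event should be stated conditionally on the history (the paper's Claim~\ref{claim: always missing}), which works because $|Q|\ge(1+\eps/100)\Delta$ forces $|M(\psi_{t-1},w)\cap Q|\ge\eps\Delta/100$ regardless of that history.
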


In order to prove the above, we make the following definitions for each of the $m$ iterations of the \textsf{while} loop in Algorithm~\ref{alg:main alg}:
\begin{align*}
    \phi_i &\defeq \text{ the coloring at the start of the $i$-th iteration}, \\
    f_i &\defeq \text{ the edge flagged during the corresponding call to \hyperref[alg:color]{Color One}}.
\end{align*}
Note that $f_i = \blank$ if we reach Step~\ref{step: aug} during the $i$-th call to \hyperref[alg:color]{Color One}.
Let $G^*$ be the subgraph at Step~\ref{step:define_G_star} of Algorithm~\ref{alg:main alg}.

We will define the following random variables for each vertex $v \in V(G)$:
\[d_i(v) \defeq \bbone{v \in f_i}, \quad d(v) \defeq \deg_{G^*}(v) = \sum_{i = 1}^md_i(v).\]
In the following lemma, we shall bound $\Pr[d_i(v) = 1\mid \phi_i]$.

\begin{lemma}\label{lemma: degree of v in G star}
    $\Pr[d_i(v) = 1\mid \phi_i] \leq \dfrac{1}{(m - i + 1)\poly(\Delta)}$.
\end{lemma}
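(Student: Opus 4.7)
The plan is to union-bound over all possible ``trajectories'' executed by the for-loop inside \hyperref[alg:color]{Color One}, following the outline of \S\ref{subsection: sketch}. Condition on $\phi_i$. By Lemma~\ref{lemma: correct} and induction on $i$, $|U| = m-i+1$ at the start of the $i$-th iteration of the \textbf{while} loop, so Step~\ref{step:random_edge} picks each pair $(e,x)$ with probability $1/(2(m-i+1))$. If $v \in f_i$, then the for-loop visits some sequence of pivots $x_1, x_2, \ldots, x_t$ starting from $e_1 \ni x_1$ (with $1 \le t \le 100\log\Delta + 1$), performs $t-1$ successful shifts, and finally flags an edge incident to $x_t$---either because \hyperref[alg:viz]{Vizing Chain} fails at iteration $t$ (when $t \le 100\log\Delta$) or because the for-loop exhausts its iterations (when $t = 100\log\Delta + 1$).

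The first step is to count such trajectories for fixed $t$. Since $x_t$ is an endpoint of $f_i \ni v$, there are at most $2\Delta$ choices for $x_t$. For each $1 \le j < t$, I claim there are at most $2\kappa^2$ choices for $x_j$ given $x_{j+1}$. Indeed, let $\alpha_j,\beta_j \in C_j$ denote the colors of the alternating path $P_j$; because $C_1,\ldots,C_t$ are pairwise disjoint and any modifications within iteration $k$ only recolor edges whose color belongs to $C_k$, the $\alpha_j\beta_j$-subgraph of the coloring at the start of iteration $j$ coincides with that of $\phi_i$. Hence the maximal $\alpha_j\beta_j$-path through $x_{j+1}$ under $\phi_i$ is well-defined, and $x_j$---being a pivot at which $\beta_j$ is missing and $\alpha_j$ is used---is one of its two endpoints. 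Ranging over the $\le \kappa^2$ pairs $(\alpha_j,\beta_j) \in C_j \times C_j$ yields at most $2\kappa^2$ possibilities. Together with $\le \Delta$ choices of $e_1 \ni x_1$, the number of trajectories for a fixed $t$ is $O(\Delta^2\kappa^{2(t-1)})$.

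The second step is to bound the probability of a fixed trajectory. The initial pair $(e_1, x_1)$ is chosen at Step~\ref{step:random_edge} with probability $1/(2(m-i+1))$. For each $1 \le j < t$, the shift onto the prescribed $x_{j+1}$ requires $\ell'_j$ (sampled at Step~\ref{step:random_choice} of iteration $j$) to take one specific value in $[\ell]$; since $\ell'_j$ is uniform on $[\ell]$ and independent of everything else, this event has conditional probability at most $1/\ell$. If $t \le 100\log\Delta$, the algorithm flags via Step~\ref{step: flag at fail}, which requires \hyperref[alg:viz]{Vizing Chain} to fail; using $|Q_t| \le (1+\eps/2)\Delta$ and $|M \cap Q_t| = \Omega(\eps\Delta)$ for each missing-color set $M$ of a vertex visited by \hyperref[alg:fan]{Make Fan}, the probability that all $\kappa = \Theta(\log\Delta/\eps)$ independent samples from $Q_t$ miss $M$ is at most $(1-\Omega(\eps))^\kappa \le 1/\Delta^C$ for any desired constant $C$, and a union bound over the $O(\kappa)$ relevant vertices yields failure probability $1/\poly(\Delta)$. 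If $t = 100\log\Delta + 1$, this terminal factor is trivially at most $1$. Combining,
\[
\Pr[d_i(v) = 1 \mid \phi_i] \;\le\; \sum_{t=1}^{100\log\Delta + 1} O\!\left(\frac{\Delta^2}{m-i+1}\right)\!\left(\frac{\kappa^2}{\ell}\right)^{t-1} p_t,
\]
where $p_t \le 1/\poly(\Delta)$ for $t \le 100\log\Delta$ and $p_{100\log\Delta+1} \le 1$. Choosing $\ell$ a sufficiently large constant multiple of $\kappa^2$ (say $\ell \ge 100\kappa^2$) makes $\kappa^2/\ell \le 1/100$; the geometric series converges and the terminal term contributes $O(\Delta^2)\cdot 100^{-100\log\Delta}/(m-i+1) = 1/((m-i+1)\poly(\Delta))$, yielding the claim.

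The hard part will be handling the conditioning cleanly: the random choices $C_j$ and $\ell'_j$ are coupled across iterations through the evolving coloring, but the key facts are that each $\ell'_j$ is drawn independently of $C_j$ and of all preceding randomness (so the $1/\ell$ factor per shift is valid conditionally on any prefix $(e_1, x_1, \ldots, x_j)$), and that the disjointness of $C_1, \ldots, C_t$ lets us read off the $\alpha_j\beta_j$-paths from $\phi_i$ directly---this is what legitimizes the $2\kappa^2$-choice count per shift and makes the entire union bound go through.
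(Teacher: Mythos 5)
Your proposal is correct and follows essentially the same route as the paper's proof: trajectory counting via the disjointness of the palettes (which forces each $\alpha_j\beta_j$-path to be already present under $\phi_i$, so each pivot has at most $O(\kappa^2)$ predecessors), a factor $1/(2(m-i+1))$ for the initial random edge, a factor $1/\ell$ per shift from the uniform choice of $\ell'$, a $1/\poly(\Delta)$ bound on the Vizing-chain failure probability from the sampling argument with $|Q_t|\geq(1+\eps/100)\Delta$, and a geometric series in $\kappa^2/\ell$ plus the $(\kappa^2/\ell)^{T}$ term for exhausting the iterations. One small imprecision: you cannot union bound the failure of \hyperref[alg:fan]{Make Fan} over only the $O(\kappa)$ vertices it actually visits, since that set is itself a function of the sampled palette $C_t$; the paper instead union bounds over all $w\in N[x_t]$ (a fixed set of at most $\Delta+1$ vertices, each with $\Pr[M(\psi_{t-1},w)\cap C_t=\0]\leq 1/\poly(\Delta)$), which costs only a factor $O(\Delta)$ that your final estimate absorbs without change.
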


\begin{proof}
    Note that $d_i(v) = 1$ if $v \in f_i$.
    Furthermore, $f_i \neq \blank$ if we either reach Step~\ref{step: flag at fail} during some iteration of the \textsf{while} loop or Step~\ref{step: flag at end}.
    Let us first consider the latter.

    Let $T = \Theta(\log \Delta)$, where the implicit constant is assumed to be sufficiently large.
    To assist with the proof, we make a few definitions for each of the $1 \leq t \leq T$ iterations of the \textsf{while} loop during the $i$-th call to Algorithm~\ref{alg:color}:
    \begin{align*}
        \psi_t &\defeq \text{the coloring at the end of the $t$-th iteration}, \\
        e_t &\defeq \text{the uncolored edge at the end of the $t$-th iteration, i.e., after shifting}, \\
        x_t &\defeq \text{the pivot vertex for the $(t+1)$-th iteration}, \\
        C_t &\defeq \text{the palette sampled at Step~\ref{step:random_sample} during the $t$-th iteration}.
    \end{align*}
    Additionally, we let $\psi_0 \defeq \phi_i$ and let $e_0$ and $x_0$ denote the random edge and vertex sampled at Step~\ref{step:random_edge} of Algorithm~\ref{alg:main alg}.
    The following claim will assist with our proof:

    \begin{claim}\label{claim: related}
        For every $1 \leq t \leq T$, there exist $\alpha,\, \beta \in C_t$ such that 
        \begin{itemize}
            \item $x_{t-1}$ and $x_t$ lie on an $\alpha\beta$-path under $\phi_i$, and
            \item $x_{t-1}$ is an endpoint of this path.
        \end{itemize}
    \end{claim}
    
    \begin{claimproof}
        By construction, the claim holds for $\phi_i$ replaced by $\psi_{t-1}$, where $\alpha$ and $\beta$ are the colors on the path in the Vizing chain constructed at Step~\ref{step:call_to_viz} during the $t$-th iteration of the \textsf{while} loop.
        Let $Q = (v_0 = x_{t-1}, v_1, \ldots, v_k = x_t)$ be the (not necessarily maximal) $\alpha\beta$-path under $\psi_{t-1}$ connecting the vertices in question.
    
        If $Q$ does not form an $\alpha\beta$-path under $\phi_i$, there is some edge $f = v_jv_{j+1}$ such that $\phi_i(f) \neq \psi_{t-1}(f)$.
        In particular, for some $s < t$, we have $\psi_{s-1}(f) \neq \psi_s(f)$ and $\psi_s(f) \in \set{\alpha, \beta}$.
        It follows that $f$ lies on the Vizing chain constructed during the $s$-th iteration.
        As this Vizing chain was constructed using the palette $C_s$, we conclude that $C_s \cap \set{\alpha, \beta} \neq \0$.
        This is a contradiction as the palettes $C_1, \ldots, C_T$ are disjoint by construction.
        Therefore, $Q$ must form an $\alpha\beta$-path under $\phi_i$.
    
        Let $P$ be the maximal $\alpha\beta$-path containing $Q$ under $\phi_i$.
        Suppose $x_{t-1}$ is not an endpoint of $P$.
        Then, there is some edge $x_{t-1}v$ on $P$ such that $\phi_i(x_{t-1}v) \neq \psi_{t-1}(x_{t-1}v)$.
        We arrive at a contradiction in an identical fashion as in the previous paragraph.
    \end{claimproof}
    
    Note that if $v \in f_i$, there exists an edge $e$ and vertices $v_0, \ldots, v_T$ such that
    \begin{itemize}
        \item $e_0 = e$, 
        \item $x_t = v_t$ for each $0 \leq t \leq T$, and
        \item $v_T \in N[v]$.
    \end{itemize}
    Furthermore, as the random choices made at Step~\ref{step:random_edge} of Algorithm~\ref{alg:main alg} and Step~\ref{step:random_choice} of Algorithm~\ref{alg:color} are independent, we have
    \[\Pr\left[e_0 = e, x_0 = v_0, \ldots, x_t = v_t\mid \phi_i, C_1, \ldots, C_T\right] \,\leq\, \frac{1}{2(m-i+1)\,\ell^t},\]
    where we use the fact that $|U| = m - i + 1$ during the $i$-th iteration of the \textsf{while} loop of Algorithm~\ref{alg:main alg} (this can be concluded as a result of Lemma~\ref{lemma: correct}).
    
    Let us show that there are at most $2\,\Delta^2\,\kappa^{2T}$ such tuples $(e, v_0, \ldots, v_T)$.
    First, we note that by Claim~\ref{claim: related}, the vertex $v_{t-1}$ is the endpoint of an $\alpha_t\beta_t$-path passing through $v_t$ under $\phi_i$ such that $\alpha_t,\,\beta_t \in C_t$.
    In particular, given $v_t$, there are at most $2\binom{|C_t|}{2}$ choices for $v_{t-1}$.
    As $2\binom{|C_t|}{2} \leq \kappa^2$, given $v_T$, there are at most $\kappa^{2T}$ possible choices for $(v_0, \ldots, v_{T-1})$.
    Next, there are at most $\Delta + 1 \leq 2\Delta$ choices for $v_T$ as $v_T \in N[v]$.
    Finally, given $v_0$, there are at most $\Delta$ choices for $e$.
    
    Recall our choices of the parameters $\kappa$, $\ell$, and $T$: $\kappa = \Theta(\log \Delta/\eps)$, $\ell = \Theta(\kappa^2)$, and $T = 100\log \Delta$.
    It follows that
    \begin{align*}
        \Pr[d_i(v) = 1 \text{ and we reach Step~\ref{step: flag at end}} \mid \phi_i, C_1, \ldots, C_T] &\leq\, \frac{\Delta^2}{(m - i + 1)}\left(\frac{\kappa^2}{\ell}\right)^T \\
        &\leq\, \frac{1}{(m-i+1)\poly(\Delta)},
    \end{align*}
    for sufficiently large hidden constants in the $\Theta(\cdot)$ notation.
    As the above is independent of $C_1, \ldots, C_T$, we conclude that
    \begin{align}\label{eq: fail end}
        \Pr[d_i(v) = 1 \text{ and we reach Step~\ref{step: flag at end}} \mid \phi_i] \leq \frac{1}{(m-i+1)\poly(\Delta)}.
    \end{align}

    Now, let us consider the case where we reach Step~\ref{step: flag at fail}.
    If we have $v \in f_i$ and we reach Step~\ref{step: flag at fail}, the following must hold for some iteration $1 \leq t \leq T$: there exists $u \in N[v]$ such that $x_{t-1} = u$ and there exists $w \in N[u]$ such that $M(\psi_{t - 1}, w) \cap C_t = \0$.
    Therefore, we have the following:
    \begin{align*}
        &~\Pr[d_i(v) = 1 \text{ and we reach Step~\ref{step: flag at fail}} \mid \phi_i] \\
        &\leq \sum_{t = 1}^T\sum_{u \in N[v]}\Pr[x_{t-1} = u,\,\exists w \in N[u] \text{ s.t. } M(\psi_{t - 1}, w) \cap C_t = \0 \mid \phi_i] \\
        &\leq \sum_{t = 1}^T\sum_{u \in N[v]}\Pr[\exists w \in N[u] \text{ s.t. } M(\psi_{t - 1}, w) \cap C_t = \0 \mid \phi_i,\,x_{t-1} = u]\,\Pr[x_{t-1} = u \mid \phi_i].
    \end{align*}
    Following an identical process as earlier, we may conclude that
    \[\Pr[x_{t-1} = u \mid \phi_i, C_1, \ldots, C_{t-1}] \leq \frac{\Delta}{2(m - i + 1)}\left(\frac{\kappa^2}{\ell}\right)^{t-1},\]
    as there are $\kappa^{2(t-1)}$ choices for $x_0, \ldots, x_{t-2}$ given $x_{t-1}, C_1, \ldots, C_{t-1}$, and $\Delta$ choices for $e_0$ given $x_0$.
    As the above is independent of $C_1, \ldots, C_{t-1}$, we have
    \begin{align*}
        &~\Pr[d_i(v) = 1 \text{ and we reach Step~\ref{step: flag at fail}} \mid \phi_i] \\
        &\leq \frac{\Delta}{2(m - i + 1)}\sum_{t = 1}^T\left(\frac{\kappa^2}{\ell}\right)^{t-1}\sum_{u \in N[v]}\sum_{w \in N[u]}\Pr[M(\psi_{t - 1}, w) \cap C_t = \0 \mid \phi_i,\,x_{t-1} = u].
    \end{align*}

    The following claim will assist with the proof.

    \begin{claim}\label{claim: always missing}
        $\Pr[M(\psi_{t - 1}, w) \cap C_t = \0\mid \phi_i,\,x_{t-1} = u] \leq \frac{1}{\poly(\Delta)}$.
    \end{claim}
    \begin{claimproof}
        Note that $M(\psi_{t - 1}, w) \cap C_t$ is independent of $\phi_i$ and $x_{t-1}$ given $\psi_{t-1},C_1, \ldots, C_{t-1}$.
        Therefore, it is enough to show that
        \[\Pr[M(\psi_{t - 1}, w) \cap C_t = \0\mid \psi_{t-1},C_1, \ldots, C_{t-1}] \,\leq\, \frac{1}{\poly(\Delta)},\]
        as the right hand side above is independent of $\psi_{t-1},C_1. \ldots, C_{t-1}$.
        To this end, we note that $C_t$ is formed by sampling $\kappa$ times with replacement from the set $Q\defeq [(1+\eps/2)\Delta]\setminus \left(\cup_{j = 1}^{t-1}C_j\right)$.
        By our choice for $\kappa$, and the lower bound on $\Delta$ in the statement of Theorem~\ref{theo:main_theo}, we have
        \[|Q| \,\geq\, (1+\eps/2)\Delta - \kappa\,T \,\geq\, (1+\eps/100)\Delta.\]
        It follows that $|M(\psi_{t - 1},w) \cap Q| \,\geq\, \eps\Delta/100$.
        With this in hand, since we sample $\kappa$ times with replacement at Step~\ref{step:random_sample}, we have
        \begin{align*}
            \Pr[M(\psi_{t - 1}, w) \cap C_t = \0\mid \psi_{t-1},C_1, \ldots, C_{t-1}] &= \left(1 - \frac{|M(\psi_{t - 1}, w) \cap Q|}{|Q|}\right)^\kappa \\
            &\leq \exp\left(-\frac{\kappa\,\eps}{100(1 + \eps/2)}\right) \\
            &\leq \frac{1}{\poly(\Delta)},
        \end{align*}
        where the last step follows by our choice of $\kappa$, since $|Q| \leq (1+\eps/2)\Delta$, and since $\eps < 1$.
    \end{claimproof}

    By Claim~\ref{claim: always missing}, we have
    \begin{align}
        \Pr[d_i(v) = 1 \text{ and we reach Step~\ref{step: flag at fail}} \mid \phi_i] &\leq \frac{\Delta}{2(m - i + 1)}\sum_{t = 1}^T\left(\frac{\kappa^2}{\ell}\right)^{t-1}\sum_{u \in N[v]}\sum_{w \in N[u]}\frac{1}{\poly(\Delta)} \nonumber \\
        &\leq \frac{4\,\Delta^3}{(m - i + 1)\poly(\Delta)} \nonumber \\
        &\leq \frac{1}{(m - i + 1)\poly(\Delta)}, \label{eq: fail flag}
    \end{align}
    for sufficiently large constants in the definitions of $\kappa$ and $\ell$.
    Putting together \eqref{eq: fail end} and \eqref{eq: fail flag} completes the proof.
\end{proof}

Note that $d_i(v)$ is independent of all $d_j(v)$ for $j < i$, given $\phi_i$.
In particular, as a result of Lemma~\ref{lemma: degree of v in G star} we have
\begin{align*}
    \E[d_i(v) \mid d_1(v), \ldots, d_{i - 1}(v)] &= \E[\E[d_i(v) \mid d_1(v), \ldots, d_{i - 1}(v), \phi_i]] \\
    &= \E[\E[d_i(v) \mid \phi_i]] \\
    &= \E[\Pr[d_i(v) = 1 \mid \phi_i]] \\
    &\leq \frac{1}{(m - i + 1)\poly(\Delta)}.
\end{align*}

To finish the analysis of $d(v)$, we shall apply the following concentration inequality due to Kuszmaul and Qi \cite{azuma}, which is a special case of their version of multiplicative Azuma's inequality for supermartingales:

\begin{theorem}[{Kuszmaul--Qi \cite[Corollary 6]{azuma}}]\label{theo:azuma_supermartingale}
    Let $c > 0$ and let $X_1$, \ldots, $X_n$ be 
    random variables taking values in $[0,c]$.
    Suppose that $\E[X_i|X_1, \ldots, X_{i-1}] \leq a_i$ for all $i$.
    Let $\mu \defeq \sum_{i = 1}^na_i$. Then, for any $\delta > 0$,
    \[\Pr\left[\sum_{i = 1}^nX_i \geq (1+\delta)\mu\right] \,\leq\, \exp\left(-\frac{\delta^2\mu}{(2+\delta)c}\right).\]
\end{theorem}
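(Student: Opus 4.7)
The plan is to adapt the textbook multiplicative Chernoff bound argument (moment-generating function plus Markov's inequality) to the supermartingale setting. A preliminary rescaling $Y_i \defeq X_i/c$, $a_i' \defeq a_i/c$, $\mu' \defeq \mu/c$ reduces everything to the case $c = 1$, since the event $\{\sum X_i \geq (1+\delta)\mu\}$ coincides with $\{\sum Y_i \geq (1+\delta)\mu'\}$ and the target bound $\exp(-\delta^2\mu/((2+\delta)c))$ equals $\exp(-\delta^2\mu'/(2+\delta))$.

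Assume now that $c = 1$. The key step is a conditional MGF bound. Fix $t > 0$. Because $x \mapsto e^{tx}$ is convex on $[0,1]$, the chord inequality $e^{tx} \leq 1 + (e^t - 1)x$ holds for every $x \in [0,1]$. Taking conditional expectations and using the supermartingale hypothesis gives
\[\E\!\left[e^{tX_i} \,\middle|\, X_1,\ldots,X_{i-1}\right] \,\leq\, 1 + (e^t - 1)\,\E\!\left[X_i \,\middle|\, X_1,\ldots,X_{i-1}\right] \,\leq\, 1 + (e^t - 1)a_i \,\leq\, \exp\!\left((e^t - 1)a_i\right).\]
Iterating via the tower property---peel off the factor $e^{tX_n}$ first by conditioning on $X_1,\ldots,X_{n-1}$, then recurse---yields
\[\E\!\left[\exp\!\left(t\sum_{i=1}^n X_i\right)\right] \,\leq\, \exp\!\left((e^t - 1)\mu\right).\]

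Now apply Markov's inequality and optimize. For any $t > 0$,
\[\Pr\!\left[\sum_{i=1}^n X_i \geq (1+\delta)\mu\right] \,=\, \Pr\!\left[e^{t\sum_i X_i} \geq e^{t(1+\delta)\mu}\right] \,\leq\, \exp\!\left(\mu\bigl[(e^t - 1) - t(1+\delta)\bigr]\right).\]
The standard choice $t = \ln(1+\delta)$ yields $e^t - 1 = \delta$, reducing the exponent to $\mu\bigl[\delta - (1+\delta)\ln(1+\delta)\bigr]$. The proof is then completed by invoking the elementary inequality
\[(1+\delta)\ln(1+\delta) - \delta \,\geq\, \frac{\delta^2}{2+\delta} \qquad \text{for all } \delta > 0,\]
which one verifies by defining $f(\delta) \defeq (1+\delta)\ln(1+\delta) - \delta - \delta^2/(2+\delta)$, checking $f(0) = 0$, and showing $f'(\delta) \geq 0$ on $(0,\infty)$ by a routine calculation.

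No step presents a genuine obstacle: the supermartingale hypothesis enters only through the one-step conditional expectation bound, so the independence-based MGF argument carries over essentially verbatim. The only point requiring a small amount of care is the iterated conditioning, which must be unpacked from the outside in (peeling off the outermost $e^{tX_n}$ first, then $e^{tX_{n-1}}$, and so on) so that each conditional MGF estimate is applied in the correct filtration.
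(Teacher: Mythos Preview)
Your proof is correct and is the standard multiplicative Chernoff argument adapted to the supermartingale setting. The paper, however, does not prove this theorem at all: it is simply quoted as \cite[Corollary~6]{azuma} and used as a black box in the analysis of $d(v)$. So there is nothing to compare against---you have supplied a self-contained proof where the paper relies on an external reference.
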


From our earlier computation, we may apply Theorem~\ref{theo:azuma_supermartingale} with 
\[X_i = d_i(v), \qquad c = 1, \qquad \text{and} \qquad a_i = \frac{1}{(m - i + 1)\poly(\Delta)}.\]
Assuming $G$ has no isolated vertices, we must have $n/2 \leq m \leq n^2$.
With this in hand, from the bounds on partial sums of the harmonic series \ep{see, e.g., \cite[\S2.4]{bressoud2022radical}}, we have $\mu = \dfrac{\log n}{\poly(\Delta)}$.
Setting $\delta = \dfrac{\eps\Delta}{10\mu}$, we conclude
\[\Pr\left[d(v) \geq \frac{\eps\Delta}{6}\right] \,\leq\, \Pr\left[d(v) \geq (1+\delta)\mu\right] \,\leq\, \exp\left(-\Omega(\eps\Delta)\right) \,=\, \frac{1}{\poly(n)},\]
where the last step follows by the lower bound on $\Delta$ stated in Theorem~\ref{theo:main_theo}.
By taking a union bound over $V(G)$, we have
\[\Pr[\text{Failure}] \leq \frac{1}{\poly(n)},\]
completing the proof of Proposition~\ref{prop:failure}.

\subsection*{Acknowledgements}

We thank Anton Bernshteyn for helpful discussions.
We are also grateful to the anonymous referees for carefully reading the manuscript and providing helpful suggestions.

\printbibliography

\end{document}